\newcolumntype{d}[1]{D..{#1}}
\definecolor{refkey}{rgb}{0.9451,0.2706,0.4941}
\definecolor{labelkey}{rgb}{0.9451,0.2706,0.4941}
\DeclareFontFamily{U}{rcjhbltx}{}
\DeclareFontShape{U}{rcjhbltx}{m}{n}{<->rcjhbltx}{}
\DeclareSymbolFont{hebrewletters}{U}{rcjhbltx}{m}{n}
\DeclareMathSymbol{\tet}{\mathord}{hebrewletters}{84}
\DeclareMathSymbol{\pey}{\mathord}{hebrewletters}{112}
\def\z2{$\mathbb{Z}_2$}
\definecolor{darkgray}{rgb}{0.33, 0.33, 0.33}
\newcommand{\id}{\mathbbm{1}}
\theoremstyle{definition}
\newtheorem{theorem}{Theorem}[section]
\newtheorem{proposition}[theorem]{Proposition}
\newtheorem{lemma}[theorem]{Lemma}
\newtheorem{corollary}[theorem]{Corollary}
\newtheorem{definition}[theorem]{Definition}
\newtheorem{example}[theorem]{Example}
\theoremstyle{remark}
\newtheorem{remark}[theorem]{Remark}
\begin{document}

\title{{\LARGE \bf Band Flattening and Overlap Fermion}}
\author[1]{Taro Kimura}
\affil[1]{{\small Institut de Math{\'e}matiques de Bourgogne, Universit{\'e} de Bourgogne, CNRS, France}}
\author[2]{Masataka Watanabe}
\affil[2]{{\small Graduate School of Informatics, Nagoya University, Nagoya 464-8601, Japan}}
\date{}

\maketitle
\begin{abstract}
We show that, for each symmetry class based on the tenfold way classification, the effective Dirac operator obtained by integrating out the additional bulk direction takes a value in the corresponding classifying space, from which we obtain the flat band Hamiltonian.
We then obtain the overlap Dirac operator for each symmetry class and establish the Ginsparg--Wilson relation associated with $\mathcal{C}$ and $\mathcal{T}$ symmetries, and also the mod-two index theorem.
\end{abstract}

\tableofcontents

\section{Introduction}

Study of topological phases of matter, which has been originated in condensed-matter physics, now provides an interdisciplinary arena of research involving various domains of theoretical and experimental physics and also mathematics.
A topological insulator phase is a primary example of the topological phases, that exhibits a gapless surface state, while the interior behaves as a gapped insulator.
This gapless surface state is topologically protected, and the topological insulator cannot be transformed continuously to a topologically trivial insulator.
For a topological band insulator, one can consider the topological invariant associated with its band structure, which plays an essential role in characterization of topological property.
For example, the TKNN number~\cite{Thouless:1982zz,Kohmoto:1985AP,Niu:1984uz} is the topological invariant associated with the two-dimensional class A system, which is given by integrating the Berry curvature (the first Chern class of the Bloch bundle) over the two-dimensional Brillouin torus.

Since the topological property does not depend on the detail of the band structure, we often use the band flattened system to simplify the argument (see, e.g., \cite{Schnyder:2008tya,Kitaev:2009mg}).
In this paper, we provide a systematic methodology, that we call the bulk extension, to obtain the band flattened Hamiltonian from a generic gapped Hamiltonian of free fermion.
Here is the summary of the prescription.
\begin{enumerate}
    \item Consider a $d$-dimensional gapped free fermion Hamiltonian $H$.
    \item Construct a $(d+1)$-dimensional Dirac operator $D$ by adding an extra direction.
    \item Compute the functional determinant $\det D$ while imposing the periodic boundary condition in the extra direction together with the Pauli--Villars regulator.
    \item Read off the effective Dirac operator $\overline{D}$ from the determinant, and convert it to the band flattened Hamiltonian in the bulk limit, $\overline{H} = H/\sqrt{H^2} =: \operatorname{sgn}(H)$.
\end{enumerate}
We apply this formalism to generic symmetry classes of topological insulators and superconductors~\cite{Schnyder:2008tya,Kitaev:2009mg} based on the Altland--Zirnbauer (AZ) tenfold way classification~\cite{Altland:1997zz}. 
Here is the first result of this paper.
\begin{theorem}[Theorem~\ref{thm:band_flatten_Ham}]
Let $\gamma$ be (one of) the mass matrix of the gapped Hamiltonian $H$. 
Then, we obtain the band flattened Hamiltonian from the effective Dirac operator under the periodic boundary condition,  $\overline{H} = \gamma \overline{D}$, in the bulk limit.
\end{theorem}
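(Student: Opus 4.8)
The plan is to realize the effective Dirac operator through a transfer-matrix evaluation of the regularized functional determinant in the added direction, and then to identify the bulk limit with the large-extent limit of that coordinate. First I would write the $(d+1)$-dimensional Dirac operator in the first-order form $D = \gamma(\partial_\tau + H)$, where $\tau$ is the added coordinate and $\gamma$ is the mass matrix of the gapped Hamiltonian, so that $\gamma D = \partial_\tau + H$ exposes the Hamiltonian evolution along $\tau$ and $\gamma^2 = \id$. Since $\gamma$ is constant and $\tau$-independent, $\det D = \det\gamma \cdot \det(\partial_\tau + H)$, and the nontrivial content is the determinant of the one-dimensional operator $\partial_\tau + H$ on the circle of circumference $\beta$.

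Next I would diagonalize $H$, with eigenvalues $\epsilon_k$ that are nonzero since $H$ is gapped, reducing the problem to the scalar operators $\partial_\tau + \epsilon_k$. Imposing the periodic boundary condition in $\tau$, the Matsubara product gives $\det(\partial_\tau + \epsilon_k) \propto (1 - e^{-\beta\epsilon_k})$, i.e. $\det(\id - T)$ with the transfer matrix $T = e^{-\beta H}$. The Pauli--Villars sector is then arranged to supply the complementary factor $\det(\id + T)$ in the denominator, so that the regularized per-mode determinant becomes $\tfrac{1 - e^{-\beta\epsilon_k}}{1 + e^{-\beta\epsilon_k}} = \tanh(\beta\epsilon_k/2)$. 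Reassembling the modes and restoring the prefactor $\gamma$ that survives from $\det D = \det\gamma\cdot(\dots)$ and from the first-order structure of $D$, I would read off the effective Dirac operator $\overline{D} = \gamma\,\tanh(\beta H/2)$.

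Finally I would take the bulk limit $\beta \to \infty$, i.e. infinite extent of the extra direction, in which $\tanh(\beta\epsilon/2) \to \operatorname{sgn}(\epsilon)$ for every gapped mode, so that $\overline{D} \to \gamma\,\operatorname{sgn}(H)$. Using $\gamma^2 = \id$ then gives $\gamma\overline{D} \to \operatorname{sgn}(H) = H/\sqrt{H^2} = \overline{H}$, which is precisely the claimed band-flattened Hamiltonian.

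I expect the main obstacle to be the regularization bookkeeping: obtaining the $\tanh$ profile, i.e. the $\tfrac{\id - T}{\id + T}$ combination, hinges on the Pauli--Villars sector furnishing exactly the $\det(\id + T)$ denominator, so I must track the boundary conditions and statistics of the regulator with care and verify that the residual $\beta$- and mass-dependent prefactors are mode-independent and therefore drop out of the operator read off from the determinant. A secondary point is that $\gamma$ and $H$ need not commute, so $\overline{D} = \gamma\,\tanh(\beta H/2)$ must be read as a matrix product with $\tanh(\beta H/2)$ a function of $H$ alone; this is harmless for the final identity because $\gamma^2 = \id$ collapses the product, but I would check that the argument remains uniform across the tenfold-way symmetry classes, where the relative (anti)commutation of $\gamma$ with the remaining structure of $H$ varies.
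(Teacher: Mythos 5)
Your argument is correct and reproduces the mechanism the paper actually uses, but it runs in the continuum where the paper runs on a lattice. The paper defines $\overline{D}_{\text{p}}$ through the $(d+1)$-dimensional \emph{Wilson--Dirac} operator (Definition~\ref{def:Wilson-Dirac_op}): the added direction is discretized into $N$ sites, the determinant is a size-$Nk$ block matrix evaluated explicitly via a transfer matrix $T$ built from the blocks $A=\id+a\mathsf{A}$, $B=\id-a\widetilde{\mathsf{A}}$, $C=a\mathsf{C}$ (Lemma~\ref{lemma:Dirac_det_classA}, Appendix~\ref{sec:Dirac_det_Proof}), and the ratio of periodic to anti-periodic determinants yields $\det\bigl(\gamma\tanh(\tfrac{Na}{2}\mathcal{H})\bigr)$ with $T=e^{a\mathcal{H}}$; one then takes $Na\to\infty$ (giving $\tanh\to\operatorname{sgn}$) followed by $a\to 0$ (giving $\mathcal{H}\to H$). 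Your Matsubara computation collapses these two steps into a single $\beta\to\infty$ limit of $\det(\partial_\tau+H)_{\text{p}}/\det(\partial_\tau+H)_{\text{ap}}$ per eigenmode, which is cleaner and avoids the block-determinant identity entirely; the core identity $\tfrac{1-e^{-\beta\epsilon}}{1+e^{-\beta\epsilon}}=\tanh(\beta\epsilon/2)\to\operatorname{sgn}(\epsilon)$ is the same in both. What the lattice route buys --- and why the paper insists on it --- is that the finite-$a$, open-boundary version of the very same determinant is the overlap operator of Sec.~\ref{sec:GW_relation}, and the Wilson term is needed there to kill doublers; your continuum shortcut cannot produce that. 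Two small caveats on your write-up: (i) strictly speaking you are proving the statement for a continuum surrogate of the lattice-defined $\overline{D}_{\text{p}}$, so to match the paper's definitions you would still need to argue that the $a\to 0$ limit of the lattice transfer matrix reproduces $e^{-\beta H}$ (this is exactly the paper's step $\lim_{a\to 0}\mathcal{H}=H$); and (ii) ``reading off'' the operator $\gamma\tanh(\beta H/2)$ from a scalar determinant is a convention rather than a deduction (a determinant only fixes the operator up to conjugation and constant factors), but the paper makes the identical move in Lemma~\ref{lemma:eff_Dirac_det_classA}, so this is not a defect relative to the source.
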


In fact, for class $\mathscr{C}$ system, the effective Dirac operator $\overline{D}$ takes a value in the classifying space ${S}_{\mathscr{C}} \in \{C_{0,1}, R_{0,\ldots,7} \}$ (Proposition~\ref{prop:V_classifying_space}), which plays an essential role in the tenfold way classification: 
The homotopy group of ${S}_{\mathscr{C}}$ characterizes topological property of the topological insulator/superconductor~\cite{Schnyder:2008tya,Kitaev:2009mg}.
See Table~\ref{tab:classification}.

The formalism of the bulk extension presented in this paper is motivated by the overlap Dirac operator, showing an exact chiral symmetry on a lattice, that was originally formulated for the class A system~\cite{Neuberger:1997fp,Neuberger:1997bg,Neuberger:1998wv}.
We extend the original construction of the overlap operator to generic AZ tenfold way symmetry classes.
\begin{proposition}[Proposition~\ref{lemma:eff_Dirac_bulk_lim}]
The overlap Dirac operator of class $\mathscr{C}$ is given by
\begin{align}
    D_{\text{ov}} = \frac{1}{2} (1 + V)
    \, , \qquad 
    V \in S_{\mathscr{C}}
    \, .
\end{align}
\end{proposition}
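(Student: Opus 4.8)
The plan is to read the overlap operator off the bulk-extension determinant and then to invoke the two results already in hand. First I would fix $V$ to be the bulk limit of the effective Dirac operator $\overline{D}$. By the band-flattening theorem (Theorem~\ref{thm:band_flatten_Ham}) the effective operator satisfies $\overline{H} = \gamma \overline{D}$, while at the same time $\overline{H} = \operatorname{sgn}(H)$; since the mass matrix obeys $\gamma^2 = 1$, this forces
\begin{align}
    V = \overline{D} = \gamma \operatorname{sgn}(H) .
\end{align}
Proposition~\ref{prop:V_classifying_space} then places $V \in S_{\mathscr{C}}$ with no further work, so the membership assertion is immediate and only the operator form $D_{\text{ov}} = \tfrac12(1+V)$ remains to be established.

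For the form I would follow the logic of the original overlap construction, now driven by the bulk determinant rather than by a Wilson term. The determinant $\det D$ of the $(d+1)$-dimensional operator, evaluated with periodic boundary conditions and normalized by the Pauli--Villars determinant, collapses in the bulk limit to a transfer-matrix expression along the extra direction whose heavy modes cancel against the regulator, leaving only the boundary contribution. I expect this to factorize as
\begin{align}
    \frac{\det D}{\det D_{\mathrm{PV}}} \;\longrightarrow\; \det\!\Bigl( \tfrac12(1+V) \Bigr) ,
\end{align}
so that the effective lattice Dirac operator read off from it is $D_{\text{ov}} = \tfrac12(1+V)$, with the overall constant fixed by demanding the correct massless continuum limit. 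This combination is exactly the projector-like object that reduces to the naive Dirac operator away from the band-touching points and whose kernel and cokernel sit in the $V=-1$ spectral subspace; it is the direct generalization of Neuberger's $\tfrac12\bigl(1+\gamma_5\operatorname{sgn}(H_W)\bigr)$ under the replacement $\gamma_5 \to \gamma$, $H_W \to H$.

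The step I expect to be the main obstacle is controlling the bulk limit of the determinant uniformly over all ten classes. For the complex classes this is the standard Neuberger computation, but for the real classes the reality and particle--hole constraints imposed by $\mathcal{C}$ and $\mathcal{T}$ must be shown to commute with both the periodic boundary condition and the Pauli--Villars subtraction, so that the limiting object stays in the real classifying space $R_{0,\dots,7}$ rather than drifting to a class-dependent deformation. Once one verifies that the symmetry operators act identically on $H$ and on $\operatorname{sgn}(H)$ and survive the limit, the factorization above—and hence the assembly $D_{\text{ov}} = \tfrac12(1+V)$—holds simultaneously for every symmetry class.
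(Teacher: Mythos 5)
Your proposal has a genuine gap at its center, and it also inverts the paper's logical order in a way that makes your first step circular. In the paper, Theorem~\ref{thm:band_flatten_Ham} ($\overline{H} = \gamma\overline{D}_{\text{p}}$) is proved \emph{as an immediate consequence of} Proposition~\ref{lemma:eff_Dirac_bulk_lim}; you use it as an input to conclude $V = \gamma\operatorname{sgn}(H)$, which assumes the very statement being proved (its periodic-boundary-condition half). Separately, you attach the overlap operator to the \emph{periodic} boundary condition, but in the paper's construction the periodic determinant yields $\overline{D}_{\text{p}} = V$ (this is what produces the flat band Hamiltonian), while it is the \emph{open} boundary condition that yields $\overline{D}_{\text{op}} = \frac{1}{2}(1+V) = D_{\text{ov}}$; the factorization you write, $\det D/\det D_{\mathrm{PV}} \to \det\bigl(\tfrac12(1+V)\bigr)$ with periodic boundary conditions, is therefore false as stated.

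More importantly, the step you only ``expect''---the collapse of the normalized determinant to $\det\tfrac12(1+V)$---is exactly the content of the proof and is never carried out. The paper establishes it by an explicit computation: the $(d+1)$-dimensional Wilson--Dirac operator is a block tridiagonal matrix in the extra direction whose determinant is evaluated through the transfer matrix $T$ of Definition~\ref{def:T-op} (Lemma~\ref{lemma:Dirac_det_classA}); the ratio with the anti-periodic (Pauli--Villars) determinant gives
\begin{align}
    \widetilde{D}_{\text{op}} = \frac{1}{2}\left(1+\gamma\tanh\left(\frac{Na}{2}\mathcal{H}\right)\right)
    \, , \qquad T = e^{a\mathcal{H}}
    \, ,
\end{align}
(Lemma~\ref{lemma:eff_Dirac_det_classA}); and the two limits $Na\to\infty$ (turning $\tanh$ into $\operatorname{sgn}$) and $a\to 0$ (forcing $\mathcal{H}\to H$) then produce $\tfrac12(1+\gamma\operatorname{sgn}H)$. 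Without this computation the identification $D_{\text{ov}} = \tfrac12(1+V)$ remains an analogy with Neuberger's operator rather than a proof. The membership $V\in S_{\mathscr{C}}$ you correctly delegate to Proposition~\ref{prop:V_classifying_space}, which the paper proves class by class; that part of your outline is consistent with the paper's structure, but it does not repair the missing determinant computation.
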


It has been known~\cite{Neuberger:1997fp,Neuberger:1997bg,Neuberger:1998wv,Hasenfratz:1998ri,Luscher:1998pqa} that the overlap Dirac operator of class A obeys Ginsparg--Wilson (GW) relation~\cite{Ginsparg:1981bj} (recovering dependence on the lattice constant $a$),%
\footnote{%
Another realization of GW relation is achieved by the perfect action~\cite{Hasenfratz:1993sp}.
}
\begin{align}
    \gamma D + D \gamma = a D \gamma D
    \, ,
\end{align}
which is interpreted as a non-linear deformation of the chiral symmetry relation, $\{\gamma, D\} = \gamma D + D \gamma = 0$.
Applying the same argument to $\mathcal{C}$ and $\mathcal{T}$ symmetries, we obtain the corresponding GW relation.
\begin{theorem}[Theorem~\ref{thm:CT-GW_relation}]
Let $C$ and $T$ be the unitary operators defined in~\eqref{eq:CT_Ham_action}.
For the system with $\mathcal{C}$ and $\mathcal{T}$ symmetries, the overlap Dirac operator obeys,
\begin{align}
    C D + D^{\text{T}} C = a D^{\text{T}} C D
    \, , \qquad 
    T D + D^* T = a D^* T D
    \, .
\end{align}
\end{theorem}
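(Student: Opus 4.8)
The plan is to mirror the class~A derivation of the GW relation recalled above, with the chiral matrix $\gamma$ replaced by $C$ (resp.\ $T$), keeping track of the transpose (resp.\ conjugate) that the antiunitary symmetry forces upon $D$. Restoring the lattice constant, the overlap operator of Proposition~\ref{lemma:eff_Dirac_bulk_lim} reads $aD = 1+V$ with $V \in S_{\mathscr{C}}$, so that $aD^{\text{T}} = 1+V^{\text{T}}$ and $aD^* = 1+V^*$. Substituting these expansions and multiplying the claimed relation by $a$, the $\mathcal{C}$ identity becomes
\begin{align}
2C + CV + V^{\text{T}}C = C + CV + V^{\text{T}}C + V^{\text{T}}CV ,
\end{align}
and after cancelling the linear terms common to both sides it collapses to the single quadratic identity $V^{\text{T}} C V = C$; the same manipulation turns the $\mathcal{T}$ identity into $V^* T V = T$. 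It therefore suffices to prove these two identities.

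Two inputs are needed. First, because $V$ takes values in the classifying space $S_{\mathscr{C}}$ it is unitary, $V^\dagger V = VV^\dagger = 1$; written in components this is equivalent to $V^{\text{T}}V^* = V^*V^{\text{T}} = 1$, i.e.\ $(V^{\text{T}})^{-1} = V^*$ and $(V^*)^{-1} = V^{\text{T}}$. Second, the $\mathcal{C}$ and $\mathcal{T}$ symmetries of $H$ defined in~\eqref{eq:CT_Ham_action}, inherited by $\operatorname{sgn}(H)$ and hence by $V$ through the construction underlying Proposition~\ref{prop:V_classifying_space}, descend to the reality constraints
\begin{align}
C V C^{-1} = V^* , \qquad T V T^{-1} = V^{\text{T}} ,
\end{align}
which are precisely the conditions placing $V$ in the corresponding real/complex classifying space.

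Combining the two inputs closes the argument at once: $V^{\text{T}} C V = V^{\text{T}}(V^* C) = (V^{\text{T}}V^*)\,C = C$, using $CV = V^*C$ and unitarity, and likewise $V^* T V = V^*(V^{\text{T}} T) = (V^*V^{\text{T}})\,T = T$, using $TV = V^{\text{T}}T$. I expect the real work to sit in the second input: deducing the exact form $CVC^{-1}=V^*$ and $TVT^{-1}=V^{\text{T}}$ from the action of $\mathcal{C},\mathcal{T}$ on $H$. This means passing the antiunitary complex conjugation through $\operatorname{sgn}(H)$ (so that, e.g., $\mathcal{C} H \mathcal{C}^{-1} = -H$ turns into $C(\operatorname{sgn}H)^* C^{-1} = -\operatorname{sgn}(H)$, and analogously for $\mathcal{T}$), and tracking how the mass matrix $\gamma$ entering $V$ commutes or anticommutes with $C$ and $T$ in each AZ class. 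The delicate point is that the accumulated signs must cancel to give exactly $V^*$ (resp.\ $V^{\text{T}}$) with a $+$ sign and no spurious inverse: a residual sign would replace the target by $V^{\text{T}}CV=-C$ and alter the form of the GW relation, so the absence of such a sign has to be verified class by class.
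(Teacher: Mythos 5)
Your proof is correct and rests on the same two facts as the paper's: the unitarity of $V$ and the reality constraints $CVC^{-1}=V^*$, $TVT^{-1}=V^{\text{T}}$; the paper merely packages the algebra differently, by substituting $D^\dag=(CDC^{-1})^{\text{T}}$ (resp.\ $D^\dag=(TDT^{-1})^*$) into the hermitian relation $D+D^\dag=aD^\dag D$, which is equivalent to your reduction to $V^{\text{T}}CV=C$ and $V^*TV=T$. The one step you flag but defer --- establishing the reality constraints with no spurious sign --- is dispatched in the paper in a single line by parametrizing $V=e^{iH_V}$ with $H_V$ hermitian and exponentiating $CH_VC^{-1}=-H_V^*$ and $TH_VT^{-1}=+H_V^*$, so no class-by-class verification is required.
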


These relations immediately imply an anomaly under $\mathcal{C}$ and $\mathcal{T}$ transformations similarly to the parity anomaly~\cite{Bietenholz:2000ca} in the overlap formalism, which is related to the anomalous behavior of Majorana(--Weyl) fermion (hence, $\mathcal{C}$ transformation)~\cite{Huet:1996pw,Narayanan:1996mr,Inagaki:2004ar,Suzuki:2004ht,Hayakawa:2006fd}, and of the $\mathcal{T}$-invariant topological system~\cite{Fukui:2009pc,Ringel:2012fm}.
We remark that GW relation with respect to an additional symmetry has been also discussed in~\cite{Kimura:2015ixh} in the context of topological crystalline insulators/superconductors~\cite{Ando:2015sia}.

The overlap formalism also provides a concise description of the index theorem.
It has been established that the $\mathbb{Z}$-valued index of the overlap Dirac operator, $\operatorname{ind}(D_{\text{ov}}) = \dim \operatorname{ker}D_{\text{ov}} - \dim \operatorname{coker} D_{\text{ov}}$, is given as follows.
\begin{proposition}[Hasenfratz--Laliena--Niedermayer~\cite{Hasenfratz:1998ri}, Lüscher~\cite{Luscher:1998pqa}, Adams~\cite{Adams:1998eg}]\label{prop:ind_thm_Z}
Let $\eta(A)$ be the eta invariant of a self-adjoint operator $A$.
Then, the $\mathbb{Z}$-valued index of the overlap Dirac operator is given by
\begin{align}
    \operatorname{ind}(D_{\text{ov}}) = - \frac{1}{2} \tr \operatorname{sgn} H = - \frac{1}{2} \eta(H) 
    \, .
\end{align}
\end{proposition}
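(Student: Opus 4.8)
The plan is to reduce the whole computation to a single trace identity, $\tr\operatorname{sgn}(H)=\tr(\gamma V)$, by exploiting the spectral structure of the unitary $V=\gamma\operatorname{sgn}(H)$ (the chiral representative of $S_{\mathscr C}$, so that $\gamma V=\operatorname{sgn}(H)=\overline{H}$) out of which the overlap operator $D_{\text{ov}}=\tfrac12(1+V)$ is built. First I would record the two algebraic facts I need. Since $H$ is gapped, $\operatorname{sgn}(H)$ is a well-defined Hermitian involution, so $V$ is genuinely unitary, $V^\dagger V=\operatorname{sgn}(H)\gamma^2\operatorname{sgn}(H)=1$, and $\eta(H)=\tr\operatorname{sgn}(H)$ is finite on the lattice. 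Moreover $V$ is $\gamma$-Hermitian, $\gamma V\gamma=\operatorname{sgn}(H)\gamma=V^{-1}$, which is exactly equivalent to $D_{\text{ov}}^\dagger=\gamma D_{\text{ov}}\gamma$.

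Next I would analyze how $\gamma$ acts on the eigenspaces of $V$. Because $V$ is unitary its eigenvalues lie on the unit circle, and the relation $\gamma V\gamma=V^{-1}$ shows that $\gamma$ sends the eigenspace $E_\lambda$ to $E_{\bar\lambda}$. Hence for every non-real eigenvalue the spaces $E_\lambda$ and $E_{\bar\lambda}$ are exchanged by $\gamma$, so both $\gamma$ and $\gamma V$ are block off-diagonal on $E_\lambda\oplus E_{\bar\lambda}$ and contribute nothing to $\tr\gamma$ or $\tr(\gamma V)$. Only the real eigenvalues $\lambda=\pm1$ survive. On $E_{+1}$ and $E_{-1}$ the involution $\gamma$ can be diagonalized; writing $(m_+,m_-)$ and $(n_+,n_-)$ for the dimensions of its $\pm1$ eigenspaces there, I obtain $\tr(\gamma V)=(m_+-m_-)-(n_+-n_-)$ and $\tr\gamma=(m_+-m_-)+(n_+-n_-)$.

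The key identification is that the $-1$ eigenspace of $V$ is precisely $\ker D_{\text{ov}}$ (and equals $\ker D_{\text{ov}}^\dagger$), that these zero modes carry definite chirality under $\gamma$, and that the index is the chirality-weighted count $\operatorname{ind}(D_{\text{ov}})=n_+-n_-$, the non-zero modes pairing opposite chiralities and cancelling in $\dim\ker-\dim\operatorname{coker}$. Using $\tr\gamma=0$ (a traceless involution on the lattice Hilbert space) gives $m_+-m_-=-(n_+-n_-)$, so $\tr(\gamma V)=-2(n_+-n_-)$. Combining with $\tr(\gamma V)=\tr\operatorname{sgn}(H)$ then yields $\operatorname{ind}(D_{\text{ov}})=n_+-n_-=-\tfrac12\tr\operatorname{sgn}(H)=-\tfrac12\eta(H)$, recovering the known lattice index theorem.

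I expect the main obstacle to be the careful justification of $\operatorname{ind}(D_{\text{ov}})=n_+-n_-$. Since $V$ is unitary one has $\ker D_{\text{ov}}=\ker D_{\text{ov}}^\dagger$ as subspaces, so the naive difference of dimensions vanishes and all the content sits in the chirality grading. Establishing that the index must be read as the graded (chiral) index, equivalently that the non-zero spectrum pairs up between the $\pm$ chirality sectors under $\gamma$ and hence does not contribute, is the delicate step, and it relies exactly on the $\gamma$-Hermiticity $\gamma V\gamma=V^{-1}$ (the structural identity underlying the Ginsparg--Wilson relation). The remaining ingredients, $\tr\gamma=0$, $\tr(\gamma V)=\tr\operatorname{sgn}(H)$, and $\eta(H)=\tr\operatorname{sgn}(H)$, are then immediate.
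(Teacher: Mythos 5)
Your argument is correct, but note that the paper itself gives no proof of this proposition: it is stated as a known result and attributed to Hasenfratz--Laliena--Niedermayer, L\"uscher, and Adams, so there is nothing internal to compare against. What you have reconstructed is essentially the standard spectral argument of those references: $V$ is unitary and $\gamma$-Hermitian, $\gamma V\gamma=V^{-1}$, so $\gamma$ pairs the eigenspaces $E_\lambda$ and $E_{\bar\lambda}$, only $\lambda=\pm1$ contribute to $\tr\gamma$ and $\tr(\gamma V)$, and combining $\tr\gamma=0$ with $\tr(\gamma V)=\tr\operatorname{sgn}H$ gives the result. Two points deserve emphasis. First, you are right that the literal definition quoted in the paper, $\operatorname{ind}(D_{\text{ov}})=\dim\ker D_{\text{ov}}-\dim\operatorname{coker}D_{\text{ov}}$, vanishes identically for a normal Ginsparg--Wilson operator (since $\ker(1+V)=\ker(1+V^\dagger)$); the content is the chirality-graded count $n_+-n_-$ of zero modes, and your identification of this as the delicate step is exactly where the substance lies. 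Second, your input $\tr\gamma=0$ is automatic in the lattice-QCD setting of the cited works ($\gamma=\gamma_5$ acting on a full Dirac spinor space), but in the paper's more general band-theoretic setup $\gamma=\operatorname{diag}(\id_{k_1},-\id_{k_2})$ has $\tr\gamma=k_1-k_2$, which need not vanish; if it does not, your identity becomes $n_+-n_-=\tfrac12\tr\gamma-\tfrac12\tr\operatorname{sgn}H$, so the clean statement requires the balanced choice $k_1=k_2$ (the paper's ``topologically trivial'' normalization). With that hypothesis made explicit, the proof is complete and matches the standard one.
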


This index agrees with the bulk topological invariant associated with the gapped Hamiltonian $H$~\cite{Schnyder:2008tya,Kitaev:2009mg}.
In fact, this agreement also holds for the $\mathbb{Z}_2$-topological invariant and the mod-two index of the overlap Dirac operator.
\begin{theorem}[Theorem~\ref{thm:Z2_index}]
The mod-two index of overlap Dirac operator, $\nu = \operatorname{ind}(D_{\text{ov}}) = \dim \operatorname{ker}(D_{\text{ov}})$, is given by
\begin{align}
    (-1)^{\nu} = \det V \, .
\end{align}
\end{theorem}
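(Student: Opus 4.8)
The plan is to reduce the statement to a spectral count for the unitary matrix $V$ and then evaluate $\det V$ directly from its eigenvalues. First I would observe that, by the Proposition, $D_{\text{ov}} = \frac{1}{2}(1+V)$, so a vector lies in $\ker D_{\text{ov}}$ precisely when $V\psi = -\psi$. Hence $\ker D_{\text{ov}}$ is the $(-1)$-eigenspace of $V$, and
\[
    \nu = \dim \operatorname{ker}(D_{\text{ov}}) = n_-,
\]
the multiplicity of the eigenvalue $-1$ in the spectrum of $V$. (The identification $\operatorname{ind}(D_{\text{ov}}) = \dim \operatorname{ker}(D_{\text{ov}})$ in these classes comes from the self-conjugate structure, so there is no separate cokernel contribution.) The whole theorem therefore amounts to showing $\det V = (-1)^{n_-}$.

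Next I would exploit that $V \in S_{\mathscr{C}}$ is unitary, so its eigenvalues lie on the unit circle, together with the reality structure carried by the real classifying spaces $R_q$ (equivalently, the antiunitary $\mathcal{C}$ or $\mathcal{T}$ symmetry of the class). This symmetry intertwines $V$ with $\overline{V}$, forcing the characteristic polynomial of $V$ to have real coefficients. Consequently the eigenvalues different from $\pm 1$ occur in complex-conjugate pairs $e^{\pm i\theta}$, each pair contributing $e^{i\theta} e^{-i\theta} = 1$ to the determinant, while each eigenvalue $+1$ contributes $1$; only the eigenvalue $-1$ can change the sign, giving $\det V = (-1)^{n_-}$. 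Equivalently, one may bring the real-orthogonal form of $V$ into the block-diagonal normal form of $2\times 2$ rotations (determinant $+1$) together with $\pm 1$ blocks, reading off the same result. Combining the two steps yields $(-1)^\nu = (-1)^{n_-} = \det V$, as claimed.

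The main obstacle is this second step: one must verify that the relevant symmetry genuinely enforces reality of the spectrum and, crucially, that it does \emph{not} pair up the $-1$ eigenvalues among themselves. Otherwise $n_-$ would always be even and the mod-two index would collapse to zero, contradicting the nontriviality of the $\mathbb{Z}_2$ invariant. This is exactly where the distinction between classes enters: the antiunitary symmetry must act on the $(-1)$-eigenspace with square $+1$ (a genuine real structure, permitting odd dimension) rather than $-1$ (a quaternionic structure forcing even dimension). Checking this against the tenfold-way data in Table~\ref{tab:classification} is where the real work lies; once the structure of the $(-1)$-eigenspace is pinned down, the determinant bookkeeping above is immediate.
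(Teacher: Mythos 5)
Your proposal is correct and follows essentially the same route as the paper: the paper proves an $\mathrm{O}(2)$ lemma and block-diagonalizes $V\in\mathrm{O}(n)$ into $2\times 2$ rotations and $\pm 1$ entries, which is precisely your conjugate-pair spectral bookkeeping written in real normal form. The ``remaining obstacle'' you flag is immediate in this setting, because for the only classes with $\pi_0(R_p)=\mathbb{Z}_2$ (BDI and D) the operator $V$ is genuinely real orthogonal, so complex conjugation furnishes a real (not quaternionic) structure on the $(-1)$-eigenspace and odd multiplicity is permitted.
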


The use of determinant signature to define the mod-two index has been proposed specifically for (8$n$+2)-dimensional Majorana--Weyl fermion~\cite{Huet:1996pw,Narayanan:1996mr}.
We remark that the mod-two index has been recently formulated in the domain-wall fermion formalism~\cite{Fukaya:2020tjk}, which has a similar expression using the sign factor appearing in the Dirac operator determinant.
See also recent approaches to the index theorem on a lattice~\cite{Yamashita:2020nkf,Kubota:2020tpr}.

\subsubsection*{Organization of the paper}
The remaining part of this paper is organized as follows.
In Sec.~\ref{sec:preliminaries}, we discuss preliminary facts, including the relation between Hamiltonian formalism and Lagrangian formalism, and the symmetry classification.
In Sec.~\ref{sec:bulk_ext}, we apply the formalism, that we call the bulk extension, to obtain the band flattened Hamiltonian.
For each symmetry class, we prove that the effective Dirac operator takes a value in the corresponding classifying space.
In Sec.~\ref{sec:GW_relation}, we explore the overlap Dirac operator obtained through the bulk extension with the open boundary condition.
We prove that the overlap operator obeys GW relation with respect to $\mathcal{C}$ and $\mathcal{T}$ symmetries, and discuss the anomalous behavior under $\mathcal{C}$ and $\mathcal{T}$ transformations.
We also establish the mod-two index of the overlap Dirac operator.

\subsubsection*{Acknowledgements}
We would like to thank Mikio Furuta for insightful comments on the preliminary version of the draft.
The work of TK was in part supported by EIPHI Graduate School (No.~ANR-17-EURE-0002) and Bourgogne-Franche-Comté region.
MW is supported in part by Grant-in-Aid for JSPS Fellows (No.~22J00752).

\subsubsection*{Note added}
While completing this manuscript, we became aware of a recent preprint by Clancy--Kaplan--Singh~\cite{Clancy:2023ino} also addressing the overlap fermion associated with $\mathcal{C}$ and $\mathcal{T}$ symmetries, and the mod-two index discussed in Sec.~\ref{sec:GW_relation}.

\section{Preliminaries}\label{sec:preliminaries}

\subsubsection*{Notations}

\begin{itemize}
    \item 
    For $x \in \mathbb{K}$, let $x^*$ be its $\mathbb{K}$-conjugate.
    We denote the conjugate matrix of $M$ by $M^\dag := M^{*\text{T}}$.
    We define the set of self-conjugate matrices (real symmetric for $\mathbb{R}$, complex hermitian for $\mathbb{C}$, quaternion self-dual for $\mathbb{H}$) of size $n$ by
\begin{align}
    \mathsf{H}(n,\mathbb{K}) = \{ M \in \mathbb{K}^{n \times n} \mid M^\dag = M \}
    \, .
\end{align}

    \item 
    We define the set of skew-conjugate matrices of size $n$ by
\begin{align}
    \widetilde{\mathsf{H}}(n,\mathbb{K}) = \{ M \in \mathbb{K}^{n \times n} \mid M^\dag = - M \}
    \, .
\end{align}

    \item We denote a compact symplectic group by $\mathrm{Sp}(n) = \mathrm{Sp}(2n,\mathbb{C}) \cap \mathrm{U}(2n)$.

    \item We denote a commutator and an anti-commutator by $[a, b] = ab - ba$ and $\{ a, b \} = ab + ba$.

    \item We denote $\mathbb{Z}_n = \mathbb{Z}/n \mathbb{Z}$.
\end{itemize}

\subsection{Lagrangian vs Hamiltonian}\label{sec:Lagrangian_Hamiltonian}

Let $d$ be the spacial dimension, and the spacetime dimension $d+1$.
Let $\{ \gamma_\mu \}_{\mu=0,\ldots,d}$ be the Euclidean gamma matrices, which are hermitian and obey the relation $\{\gamma_\mu, \gamma_\nu \} = \gamma_\mu \gamma_\nu + \gamma_\nu \gamma_\mu = 2 \delta_{\mu,\nu}$.
The free Dirac Lagrangian in the $(d+1)$-dimensional Euclidean spacetime is given by
\begin{align}
    \mathscr{L} = \bar{\psi} \qty( \gamma^\mu \partial_\mu + m ) \psi =: \bar{\psi} D \psi
    \, ,
    \qquad
    D = \gamma^\mu \partial_\mu + m
    \, ,
    \label{eq:HvsD}
\end{align}
where the associated Dirac operator $D$ is non-hermitian in general.%
\footnote{%
The Dirac operator becomes hermitian in the Lorentzian signature.
}
In fact, non-hermitian Hamiltonian discussed in, e.g., \cite{Lee:2019ole}, has a direct interpretation as a Dirac operator.
We remark that the Dirac operator becomes anti-hermitian in the case $m = 0$, $D^\dag = - D$. 
Considering the 0-direction as a ``time'' direction, and putting $\bar\psi = \psi^\dag \gamma_0$, we then obtain the hermitian Hamiltonian as follows:
\begin{subequations}
\begin{align}
    \mathscr{L} & = \psi^\dag \qty( \partial_0 + \gamma_0 \vec{\gamma}\cdot\vec{\partial} + m \gamma_0) \psi
    =: \psi^\dag \qty( \partial_0 + H ) \psi
    \, , \\
    \mathscr{H} & = \psi^\dag H \psi 
    = \psi^\dag \left( \gamma_0 \vec{\gamma} \cdot \vec{\partial} + m \gamma_0 \right) \psi 
    = \psi^\dag \left( - i \vec{\tilde{\gamma}} \cdot \vec{\partial} + m \gamma_0 \right) \psi 
\end{align}
\end{subequations}
where $\tilde{\gamma}_j = i \gamma_0 \gamma_j$ is a hermitian gamma matrix for $j = 1,\ldots,d$.
Hence, we have the relation between the Dirac operator and the Hamiltonian,
\begin{align}
    D = \gamma_0 H + \gamma_0 \partial_0
    \, .
    \label{eq:Dirac_from_Hamiltonian}
\end{align}
In other words, we may identify the mass matrix in the Hamiltonian with the zero-th gamma matrix $\gamma_0$, so that the mass term is proportional to the identity matrix in the Dirac operator.

\begin{example}[$d = 2$]
Let $\{ \sigma_i \}_{i=1,2,3}$ be the Pauli matrices.
The momentum space representation of the massive Dirac Hamiltonian of class A in $d = 2$ is given by
\begin{align}
    H = p_1 \sigma_1 + p_2 \sigma_2 + m \sigma_3
    \, .
\end{align}
In this case, we identify the mass matrix, $\gamma_0 = \sigma_3$.
The corresponding Dirac operator in $2+1$ dimensions is given by
\begin{align}
    D = p_1 (\sigma_3 \sigma_1) + p_2 (\sigma_3 \sigma_2) + i p_0 \sigma_3 + m
    = i p_1 \sigma_2 - i p_2 \sigma_1 + i p_0 \sigma_3 + m
    \, ,
\end{align}
which is not hermitian.
In the massless case $m = 0$, $D^\dag = - D$, and it shows the parity symmetry.
\end{example}

\begin{example}[$d=3$]
The massive Dirac Hamiltonian of class A in $d = 3$ is given as follows:
\begin{align}
    H = \vec{p} \cdot (\vec{\sigma} \otimes \sigma_3) + m (\id \otimes \sigma_2)
    =
    \begin{pmatrix}
    \vec{p} \cdot \vec{\sigma} & -im \\ +im & - \vec{p} \cdot \vec{\sigma}
    \end{pmatrix}
    \, .
\end{align}
Then, having the mass matrix $\gamma_0 = \id \otimes \sigma_2$, the Dirac operator is given by
\begin{align}
    D & = - i \vec{p} \cdot (\vec{\sigma} \otimes \sigma_1) + i p_0 (\id \otimes \sigma_2) + m (\id \otimes \id)
    = 
    \begin{pmatrix}
    m & + p_0 - i \vec{p} \cdot \vec{\sigma} \\ - p_0 - \vec{p} \cdot \vec{\sigma} & m
    \end{pmatrix}
    \, .
\end{align}
If $m = 0$, it shows the chiral symmetry $\{ D, \Gamma \} = 0$ where $\Gamma = \id \otimes \sigma_3$.    
\end{example}

\subsection{Symmetry and classification}

Let us introduce the discrete symmetries, $\mathcal{C}$ and $\mathcal{T}$, which play an essential role in the classification of Hamiltonian and Dirac operator.

\begin{definition}\label{def:CT_Ham_action}
Let $C$ and $T$ be unitary operators, which act on a Hamiltonian as follows,
\begin{align}
    C H C^{-1} = - H^*
    \, , \qquad 
    T H T^{-1} = + H^*
    \, .
    \label{eq:CT_Ham_action}
\end{align}
In the momentum space representation, we have $C H(p) C^{-1} = - H(-p)^*$ and $T H(p) T^{-1} = + H(-p)^*$.
We define the complex conjugation operator $K$, $K X K = X^*$ for any operator $X$.
Then, we define anti-unitary operators, that we call charge conjugation operator $\mathcal{C}$ and time reversal operator $\mathcal{T}$,
\begin{align}
	\mathcal{C} = CK
	\, , \qquad
	\mathcal{T} = TK
	\, .
\end{align}    
If there exist $\mathcal{C}$ and $\mathcal{T}$ operators for a given Hamiltonian $H$, we say that the Hamiltonian $H$ has $\mathcal{C}$ and $\mathcal{T}$ symmetry, respectively.
\end{definition}

\begin{remark}
    If the Hamiltonian $H$ has both $\mathcal{C}$ and $\mathcal{T}$ symmetries, it also has the chiral symmetry, i.e., there exists an unitary operator $\Gamma \propto CT$, which anti-commutes with $H$, $\{\Gamma, H\} = 0$.
\end{remark}

There are two possible realizations of $\mathcal{C}$ and $\mathcal{T}$ operators, such that
\begin{align}
    \mathcal{C}^2 = \pm 1
    \, , \qquad  
    \mathcal{T}^2 = \pm 1
    \, ,
\end{align}
from which we obtain the AZ tenfold way classification~\cite{Altland:1997zz}.
We provide the summary of the classification in Table~\ref{tab:classification}.
The left-most column shows the symmetry class $\mathscr{C}$:
We both use the Cartan notation and the classifying space notation.
There are two complex and eight real classes.
Then, we show the classifying space and the space of time-evolution operator $U_{\mathscr{C}} = e^{iH}$ for each symmetry class.
We observe that the classifying space of class $\mathscr{C}_p$ agrees with the space of $U_{\mathscr{C}_{p+1}}$, where $p \in \mathbb{Z}_2$ ($\mathscr{C} = C$) and $p \in \mathbb{Z}_8$ ($\mathscr{C} = R$).
The right-most column shows $\mathcal{C}$ and $\mathcal{T}$ symmetries of each class.

\begin{table}[t]
\begin{center}
    \begin{tabular}{ccccccc} \toprule
        \multicolumn{2}{c}{Symmetry class} $\mathscr{C}$ &
         {Classifying space} $S_{\mathscr{C}}$ & T-evolution operator $U_{\mathscr{C}}$ & $\mathcal{T}^2$ & $\mathcal{C}^2$ & $\chi$ \\[.3em] \toprule
         A & $C_0$ & $\mathrm{U}/\mathrm{U} \times\mathrm{U} $ & $\mathrm{U}$ & 0 & 0 & 0 \\
         AIII & $C_1$ & $\mathrm{U} $ & $\mathrm{U}/\mathrm{U} \times\mathrm{U} $ & 0 & 0 & 1 \\ \midrule
         AI & $R_0$ & $\mathrm{O}/\mathrm{O} \times\mathrm{O} $ & $\mathrm{U} /\mathrm{O} $ & $+1$ & 0 & 0 \\
         BDI & $R_1$ & $\mathrm{O} $ & $\mathrm{O}/\mathrm{O} \times\mathrm{O} $ & $+1$ & $+1$ & 1 \\
         D & $R_2$ & $\mathrm{O}/\mathrm{U} $ & $\mathrm{O} $ & 0 & $+1$ & 0 \\
         DIII & $R_3$ & $\mathrm{U}/\mathrm{Sp} $ & $\mathrm{O}/\mathrm{U} $ & $-1$ & $+1$ & 1 \\
         AII & $R_4$ & $\mathrm{Sp}/\mathrm{Sp} \times\mathrm{Sp} $ & $\mathrm{U}/\mathrm{Sp} $ & $-1$ & 0 & 0 \\
         CII & $R_5$ & $\mathrm{Sp} $ & $\mathrm{Sp}/\mathrm{Sp} \times\mathrm{Sp} $ & $-1$ & $-1$ & 1 \\
         C & $R_6$ & $\mathrm{Sp} /\mathrm{U} $ & $\mathrm{Sp} $ & 0 & $-1$ & 0 \\
         CI & $R_7$ & $\mathrm{U} /\mathrm{O} $ & $\mathrm{Sp} /\mathrm{U} $ & $+1$ & $-1$ & 1 \\\bottomrule
    \end{tabular}
\end{center}
    \caption{The AZ tenfold way classification of the classifying spaces and the associated time-evolution operators with respect to $\mathcal{T}$, $\mathcal{C}$, and chiral $(\chi)$ symmetries.}
    \label{tab:classification}
\end{table}

\section{Bulk extension}\label{sec:bulk_ext}

Utilizing both formalisms of Lagrangian and Hamiltonian, we introduce the process of the bulk extension, which gives rise to the band flattened Hamiltonian.
We start with a $d$-dimensional gapped Hamiltonian of class $\mathscr{C}$ denoted by $H$.
Applying the Lagrangian formalism, we then obtain a $(d+1)$-dimensional Dirac operator by adding the 0-direction, $D = \gamma_0 H + \gamma_0 \partial_0$.
We apply lattice discretization to deal with this direction.
\begin{definition}
 We define the shift operator in the 0-direction denoted by $\nabla_0$, such that $\nabla_0 \psi_{n_0} = \psi_{n_0+1}$, where $\psi_{n_0}$ is the field operator with the 0-direction coordinate $n_0 \in \{ 1, \ldots, N \}$ with $N$ the size of the 0-direction.
\end{definition}
We do not explicitly write $d$-dimensional dependence of the field $\psi$ for simplicity.
Then, we define the $(d+1)$-dimensional Wilson--Dirac operator as follows.
\begin{definition}\label{def:Wilson-Dirac_op}
    Let $H$ be a $d$-dimensional gapped Hamiltonian of class $\mathscr{C}$.
    Let $\gamma \equiv \gamma_0$ and let $a$ be the lattice spacing constant in the 0-direction.
    Denoting the projection operator given by $P_\pm = \frac{1}{2}(\id \pm \gamma)$, we define the $(d+1)$-dimensional Wilson--Dirac operator,
    \begin{align}
	D = \gamma H - \frac{1}{a} P_+ \nabla_0 - \frac{1}{a} P_- \nabla_0^\dag + \frac{1}{a}
	\, .
    \end{align}
\end{definition}
\begin{remark}
If we do not impose the Wilson term, we instead have
\begin{align}
	D = \gamma H - \frac{1}{2a} \left( \nabla_0 - \nabla_0^\dag \right)
	\, ,
\end{align}
which involves additional contributions of species doublers in the low-energy regime.
\end{remark}

We evaluate the functional determinant of the Wilson--Dirac operator with the following boundary conditions in the 0-direction,
\begin{align}
    \nabla_0 \psi_N = 
    \begin{cases}
        0 & (\text{open}) \\
        +\psi_1 & (\text{periodic}) \\
        -\psi_1 & (\text{anti-periodic}) \\
    \end{cases}
\end{align}
\begin{definition}\label{def:eff_Dirac_det}
    Denoting the functional determinant with the boundary condition by $\det D_{\text{bc}}$ (bc $\in \{\text{op (open)},\text{p (periodic)}, \text{ap (anti-periodic)}\}$), we define the effective Dirac determinant, 
    \begin{align}
        \det \widetilde{D}_{\text{op}} = \frac{\det D_{\text{op}}}{\det D_{\text{ap}}}
        \, , \qquad
        \det \widetilde{D}_{\text{p}} = \frac{\det D_{\text{p}}}{\det D_{\text{ap}}}    
        \, .
    \end{align}
\end{definition}
\noindent
The denominator contribution with the anti-boundary contribution is known to be the Pauli--Villars regulator.
Then, we have the following.
\begin{proposition}\label{lemma:eff_Dirac_bulk_lim}    
Taking the large scale limit $N a \to \infty$, and then the continuum limit $a \to 0$ in 0-direction, we have the $d$-dimensional effective Dirac operator given by
\begin{align}
    \overline{D}_{\text{op}} := \lim_{a \to 0} \lim_{Na \to \infty} \widetilde{D}_{\text{op}} = \frac{1}{2} (1 + V )
    \, , \qquad
    \overline{D}_{\text{p}} := \lim_{a \to 0} \lim_{Na \to \infty} \widetilde{D}_{\text{p}} = V
    \, ,
\end{align}
where we define
\begin{align}
    V = \gamma \operatorname{sgn} H = \gamma \frac{H}{\sqrt{H^2}}
    \, .
\end{align}
\end{proposition}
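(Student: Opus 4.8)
The plan is to exploit that the Wilson--Dirac operator $D$ of Definition~\ref{def:Wilson-Dirac_op} is block tridiagonal in the extra coordinate $n_0$: its diagonal block is $\gamma H + \frac{1}{a}$, while the forward and backward hops are $-\frac{1}{a}P_+$ and $-\frac{1}{a}P_-$. Because the two chiralities $P_\pm = \frac12(\id\pm\gamma)$ hop in opposite directions, the kernel equation $D\psi = 0$ can be solved as a first-order recursion, which I would encode in a single-step transfer matrix $\mathsf{T}$ acting on the $d$-dimensional internal Hilbert space. Writing $\psi_{n_0} = (\phi_{n_0},\chi_{n_0})$ in a basis where $\gamma = \operatorname{diag}(\id,-\id)$, I would solve for $\phi_{n_0+1}$ and $\chi_{n_0}$ in terms of $\phi_{n_0}$ and $\chi_{n_0-1}$, obtain $\mathsf{T}$ explicitly, and then check that in the continuum limit $\mathsf{T} = \id + aH + O(a^2)$, so that $\mathsf{T}^N = \exp(N\log\mathsf{T})$ carries the spectral data of $e^{NaH}$.

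The key step is to express each of the three determinants through $\mathsf{T}$. For the periodic and anti-periodic cases the operator is block circulant, so I would diagonalize in the momentum $k$ conjugate to $n_0$; with $z = e^{ik}$ one finds $\tilde{D}(z) = \gamma H + \frac{1}{a}(\id - zP_+ - z^{-1}P_-)$, whose determinant factors as $\det\tilde{D}(z) \propto z^{-n/2}\prod_j(z-\lambda_j)$ with $\lambda_j$ the eigenvalues of $\mathsf{T}$ and $n = \dim H$. Taking the products over the $N$-th roots of unity ($z^N = 1$, periodic) and over their shifted counterparts ($z^N = -1$, anti-periodic) collapses via $\prod_m(z_m - \lambda) = \pm(\lambda^N \mp 1)$, and the bulk/leading-coefficient prefactors cancel up to a sign $\det\gamma = (-1)^{n/2}$, giving
\begin{align}
    \det \widetilde{D}_{\text{p}} = \frac{\det D_{\text{p}}}{\det D_{\text{ap}}} = \det\gamma \cdot \prod_j \frac{\lambda_j^N - 1}{\lambda_j^N + 1} = \det\!\left[\gamma\,\frac{\mathsf{T}^N - \id}{\mathsf{T}^N + \id}\right]
    \, ,
\end{align}
so the effective operator read off from this determinant is $\widetilde{D}_{\text{p}} = \gamma\,(\mathsf{T}^N - \id)(\mathsf{T}^N + \id)^{-1}$. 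For the open case I would instead impose the Dirichlet data $\psi_0 = \psi_{N+1} = 0$, which in the transfer-matrix variables become the chiral conditions $\chi_0 = 0$ and $\phi_{N+1} = 0$; the surviving boundary Schur complement is a projector-dressed version of $\mathsf{T}^N$, producing $\widetilde{D}_{\text{op}} = \frac12\big(\id + \gamma\,(\mathsf{T}^N - \id)(\mathsf{T}^N + \id)^{-1}\big)$.

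Finally I would take the limits. Since $H$ is gapped, for small $a$ the eigenvalues of $\mathsf{T} = \id + aH + O(a^2)$ satisfy $|\lambda_j| - 1 \sim a\,(\text{eigenvalue of }H)$, so the sets $\{|\lambda_j| > 1\}$ and $\{|\lambda_j| < 1\}$ coincide with the positive and negative spectral subspaces of $H$. Hence, as $Na\to\infty$ and then $a\to 0$,
\begin{align}
    \frac{\mathsf{T}^N - \id}{\mathsf{T}^N + \id} = \tanh\!\Big(\tfrac{N}{2}\log\mathsf{T}\Big) \longrightarrow \operatorname{sgn}H
    \, ,
\end{align}
which, multiplied by $\gamma$, yields $\overline{D}_{\text{p}} = \gamma\operatorname{sgn}H = V$ and $\overline{D}_{\text{op}} = \frac12(\id + V)$; the leftover prefactor $\det\gamma$ is consistent because $\det V = \det\gamma\,\det\operatorname{sgn}H$.

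I expect the main obstacle to be the open boundary condition: unlike the circulant cases it is not diagonalized by Fourier modes, so computing $\det D_{\text{op}}$ requires carefully tracking the boundary Schur complement of the transfer matrix and showing that the chiral Dirichlet data produce exactly the $\frac12(\id + \cdot)$ projector structure rather than a generic combination. A secondary subtlety is promoting the determinant identities to the operator statement, that is, justifying that the effective $d$-dimensional operator read off from the Pauli--Villars-regularized ratio is genuinely $\gamma\,(\mathsf{T}^N-\id)(\mathsf{T}^N+\id)^{-1}$ and not merely an operator with the same determinant, together with controlling the interchange of the $Na\to\infty$ and $a\to 0$ limits, where the $O(a^2)$ corrections in $\log\mathsf{T}$ accumulate over $N$ sites but are rendered harmless by the spectral gap of $H$.
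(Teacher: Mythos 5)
Your proposal is correct and its backbone is the same as the paper's: reduce the extra dimension to a one-step transfer matrix $\mathsf{T}$ (the paper's $T$-operator of Definition~\ref{def:T-op}), show that the Pauli--Villars-regularized determinants equal $\det\big(\gamma\,(\mathsf{T}^N-\id)(\mathsf{T}^N+\id)^{-1}\big)$ and $\det\tfrac12\big(\id+\gamma\,(\mathsf{T}^N-\id)(\mathsf{T}^N+\id)^{-1}\big)$, i.e.\ the $\tanh\big(\tfrac{Na}{2}\mathcal{H}\big)$ form of Lemma~\ref{lemma:eff_Dirac_det_classA}, verify $\mathsf{T}=\id+aH+O(a^2)$, and then take $Na\to\infty$ followed by $a\to 0$ so that $\tanh\to\operatorname{sgn}$ and $\mathcal{H}\to H$. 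Where you differ is in how the determinant identities are established: you diagonalize the block-circulant periodic and anti-periodic operators by Fourier modes in $n_0$ and collapse the product over roots of $z^N=\pm 1$ via $\prod_m(z_m-\lambda)=\mp(\lambda^N\mp 1)$, which is clean and makes the origin of the residual $\det\gamma$ factor transparent (it is the ratio of $\prod_m z_m^{-k_2}$ between the two root sets), but it leaves the open boundary condition — which is not circulant — to a separate Schur-complement analysis that you only sketch. The paper instead proves a single formula (Lemma~\ref{lemma:Dirac_det_classA}, Appendix~\ref{sec:Dirac_det_Proof}) by a permutation-and-elimination manipulation of the block matrix that treats all three boundary conditions uniformly through the boundary data $X,Y$, so the open case costs nothing extra; that uniformity is exactly what your plan is missing, and carrying out your "projector-dressed $\mathsf{T}^N$" step would essentially reproduce the appendix computation. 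Two minor points: your leading-coefficient sign $(-1)^{n/2}$ for $\det\gamma$ is only right when $k_1=k_2$ (the correct factor is $(-1)^{k_2}$, which you do correctly identify as $\det\gamma$), and your caveat about passing from determinant identities to the operator statement applies equally to the paper, which likewise "reads off" $\widetilde{D}_{\mathrm{bc}}$ from $\det\widetilde{D}_{\mathrm{bc}}$.
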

\noindent
Hence, from the effective Dirac operator with the periodic boundary condition, we obtain the band flattened Hamiltonian $\overline{H} = \operatorname{sgn}(H)$.
\begin{theorem}\label{thm:band_flatten_Ham}
We have
\begin{align}
    \overline{H} = \gamma \overline{D}_{\text{p}} 
    \, .
\end{align}
\end{theorem}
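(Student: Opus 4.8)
The plan is to read off the statement as an immediate algebraic consequence of Proposition~\ref{lemma:eff_Dirac_bulk_lim}, so that no further analytic input (the $Na \to \infty$ and $a \to 0$ limits, the determinant evaluation, the Pauli--Villars cancellation) is needed beyond what is already discharged there. First I would invoke Proposition~\ref{lemma:eff_Dirac_bulk_lim}, which identifies the effective Dirac operator under the periodic boundary condition with
\begin{align}
    \overline{D}_{\text{p}} = V = \gamma \operatorname{sgn} H
    \, .
\end{align}

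Next I would left-multiply by $\gamma$ and use the single structural fact that $\gamma \equiv \gamma_0$ is an involution. This follows from the Euclidean Clifford algebra relation $\{\gamma_\mu, \gamma_\nu\} = 2\delta_{\mu,\nu}$ recorded in Sec.~\ref{sec:Lagrangian_Hamiltonian}, which gives $\gamma_0^2 = \id$. Hence
\begin{align}
    \gamma \overline{D}_{\text{p}} = \gamma \cdot \gamma \operatorname{sgn} H = \gamma^2 \operatorname{sgn} H = \operatorname{sgn} H
    \, ,
\end{align}
and since the band flattened Hamiltonian is by definition $\overline{H} = H/\sqrt{H^2} = \operatorname{sgn} H$, I would conclude $\gamma \overline{D}_{\text{p}} = \overline{H}$, as claimed.

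I do not expect a genuine obstacle for this statement on its own, since all of the work lives inside Proposition~\ref{lemma:eff_Dirac_bulk_lim}. The only points one must verify are that the $\gamma$ appearing in $V$ is the very same mass matrix $\gamma_0$ used to pass from the Hamiltonian to the Dirac operator in~\eqref{eq:Dirac_from_Hamiltonian}, and that it squares to the identity; both are guaranteed by the identification of $\gamma_0$ as a hermitian Euclidean gamma matrix. The one subtlety worth flagging is well-definedness of $\operatorname{sgn} H = H/\sqrt{H^2}$, which requires $H$ to be gapped so that $\sqrt{H^2}$ is positive and invertible. This is exactly the standing hypothesis on $H$, so $\overline{H}$ is unambiguous and the identity holds.
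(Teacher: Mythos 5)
Your proposal is correct and matches the paper's proof, which simply states that the theorem follows immediately from Proposition~\ref{lemma:eff_Dirac_bulk_lim}; you have merely made explicit the one-line computation $\gamma \overline{D}_{\text{p}} = \gamma^2 \operatorname{sgn} H = \operatorname{sgn} H = \overline{H}$ using $\gamma^2 = \id$. The side remarks about $\gamma$ being the mass matrix $\gamma_0$ and about $H$ being gapped are consistent with the paper's standing hypotheses.
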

\begin{proof}
 It immediately follows from Proposition~\ref{lemma:eff_Dirac_bulk_lim}.
\end{proof}
\noindent
On the other hand, the effective Dirac operator with the open boundary condition provides the so-called overlap Dirac operator $\overline{D}_{\text{op}} = D_{\text{ov}}$~\cite{Neuberger:1997fp,Neuberger:1997bg,Neuberger:1998wv}.
We will discuss it in more detail in Sec.~\ref{sec:GW_relation}.

\begin{remark}\label{rmk:Euler_characteristic}
    Redefining the $V$-operator $V \to -V$, and changing the normalization, we have the determinant of $\overline{D}_{\text{op}}$ as follows,
    \begin{align}
        \det (1 - V) = \sum_{i=0}^{\operatorname{rk}V} (-1)^i \tr \wedge^i V
        \, ,
    \end{align}
    which is interpreted as an equivariant analogue of Euler characteristic.
    See also Remark~\ref{rmk:Witten_ind}.
    Moreover, the ratio of determinants used in Definition~\ref{def:eff_Dirac_det} also implies a K-theory formulation, which involves the difference of vector bundles associated with each boundary condition.
\end{remark}

The operator $V$ is unitary, $V^\dag = V^{-1}$ since $\gamma$ and $\operatorname{sgn} H$ are hermitian, and $\gamma^2 = (\operatorname{sgn} H)^2 = 1$.
In fact, it has been known that the non-hermitian point-gap Hamiltonian is topologically equivalent to the unitary operator~\cite{Kawabata:2018gjv}.
We remark that such a unitary operator is also discussed in the context of Floquet systems (see, e.g.,~\cite{Sun:2018PRL,Bessho:2020hrs}).

For each symmetry class based on the AZ tenfold way classification, we have the following.
\begin{proposition}\label{prop:V_classifying_space}
    For class $\mathscr{C}$ system, the unitary operator $V$, hence the effective Dirac operator $\overline{D}_{\text{p}}$ takes value in the corresponding classifying space $S_{\mathscr{C}}$ in the $d$-dimensional bulk limit.
\end{proposition}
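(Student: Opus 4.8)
The plan is to identify, for each of the ten classes, the matrix space swept out by $V=\gamma\operatorname{sgn}H$ as $H$ ranges over gapped class-$\mathscr{C}$ Hamiltonians with the Cartan realization of the classifying space $S_{\mathscr{C}}$ in Table~\ref{tab:classification}. One universal ingredient is available from the outset: since $\gamma$ and $\operatorname{sgn}H$ are Hermitian involutions, a one-line computation gives
\begin{align}
    \gamma V\gamma=\gamma(\gamma\operatorname{sgn}H)\gamma=(\operatorname{sgn}H)\gamma=(\gamma\operatorname{sgn}H)^\dag=V^{-1}
    \, .
\end{align}
Thus $V$ solves $\sigma(V)=V^{-1}$ for the involutive automorphism $\sigma(g)=\gamma g\gamma$ of the relevant unitary group, which is exactly the Cartan embedding condition for the symmetric space $G/G^{\sigma}$, with $G^{\sigma}$ the commutant of $\gamma$. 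Since $\gamma$ is a traceless involution it has balanced signature, so $G^{\sigma}$ takes the block form $\mathrm{U}\times\mathrm{U}$, $\mathrm{O}\times\mathrm{O}$, or $\mathrm{Sp}\times\mathrm{Sp}$; a direct check in the $d=2$ class A example, where $V$ traces out $S^2=\mathbb{CP}^1\subset\mathrm{SU}(2)$, confirms that this already pins down $C_0=\mathrm{U}/\mathrm{U}\times\mathrm{U}$.

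The decisive structural distinction is whether $\gamma$ anticommutes with $\operatorname{sgn}H$. When $\{\gamma,H\}\neq0$, i.e.\ $\gamma$ is a genuine mass matrix, the involution $\sigma=\mathrm{Ad}_{\gamma}$ is nontrivial and $V$ lands on a Grassmannian-type coset. When $\{\gamma,H\}=0$, so that $\gamma$ coincides with the chiral operator $\Gamma$ of a chiral class, one instead finds $V^\dag=(\operatorname{sgn}H)\gamma=-\gamma\operatorname{sgn}H=-V$, so $V$ is anti-Hermitian with $V^2=-1$; diagonalizing $\Gamma$ makes $H$ block off-diagonal and encodes $\operatorname{sgn}H$ in a single unitary block $q$, so that $V$ is parametrized by $q$ alone and ranges over a full group.

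Next I would feed in the discrete symmetries. Because $\operatorname{sgn}$ is odd and commutes with complex conjugation, the actions~\eqref{eq:CT_Ham_action} descend to $C(\operatorname{sgn}H)C^{-1}=-(\operatorname{sgn}H)^{*}$ and $T(\operatorname{sgn}H)T^{-1}=+(\operatorname{sgn}H)^{*}$, and, since the symmetry-preserving structure of $H$ constrains $\gamma$ in the same way, the two symmetries collapse to a single reality condition $CVC^{-1}=TVT^{-1}=V^{*}$ on the unitary $V$. Passing to a basis in which $C$ and $T$ take canonical form—the identity when the relevant square is $+1$, the symplectic unit $J$ (with $J^2=-1$) when it is $-1$—this reduces $V$ from a complex unitary to a real orthogonal matrix ($\mathcal{T}^2$ or $\mathcal{C}^2=+1$) or a quaternionic symplectic matrix ($=-1$), thereby selecting the base field $\mathbb{K}\in\{\mathbb{C},\mathbb{R},\mathbb{H}\}$ of $S_{\mathscr{C}}$. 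Assembling the two ingredients class by class then reproduces Table~\ref{tab:classification}: the five non-chiral classes A, AI, D, AII, C give the cosets $\mathrm{U}/\mathrm{U}\times\mathrm{U}$, $\mathrm{O}/\mathrm{O}\times\mathrm{O}$, $\mathrm{O}/\mathrm{U}$, $\mathrm{Sp}/\mathrm{Sp}\times\mathrm{Sp}$, $\mathrm{Sp}/\mathrm{U}$—for D and C one first rewrites $\gamma$ through the complex, respectively quaternionic, structure $J_0=i\gamma$ so that $G^\sigma=\mathrm{U}$—while the five chiral classes give either the full groups $\mathrm{U},\mathrm{O},\mathrm{Sp}$ (AIII, BDI, CII) or the symmetric- and antisymmetric-unitary cosets $\mathrm{U}/\mathrm{O},\mathrm{U}/\mathrm{Sp}$ (CI, DIII), according to whether the surviving reality condition on the block $q$ is trivial, symmetrizing, or antisymmetrizing.

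I expect the main obstacle to be the careful bookkeeping of the mass matrix $\gamma$: establishing in each class its exact transformation under $C$ and $T$ and its (anti)commutation with $\operatorname{sgn}H$ and with $\Gamma$, and justifying the identification $\gamma=\Gamma$ in the chiral classes. This is precisely the data that distinguishes the coset realizations from the group realizations and fixes the signatures and block sizes; once the canonical forms of $C$, $T$, and $\gamma$ are chosen consistently, checking that each constrained matrix space matches the Cartan realization of $S_{\mathscr{C}}$, and that the bulk limit of Proposition~\ref{lemma:eff_Dirac_bulk_lim} preserves all of these relations termwise, should be routine.
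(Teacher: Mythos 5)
Your universal observation that $\gamma V\gamma=(\operatorname{sgn}H)\gamma=V^\dag=V^{-1}$, read as the Cartan-embedding condition for the involution $\sigma=\mathrm{Ad}_\gamma$, is exactly the mechanism the paper uses for the non-chiral classes (it writes $V=e^X$ and deduces $X^\dag=-X$, $\{\gamma,X\}=0$, which is the Lie-algebra version of the same statement), and your reality conditions $CVC^{-1}=TVT^{-1}=V^*$ match Lemma~\ref{lemma:CTchi_DIII}. For classes A, AI, AII, D, C your sketch is essentially the paper's argument in a more uniform packaging.

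The gap is in the five chiral classes. You assert that there the mass matrix $\gamma$ coincides with the chiral operator $\Gamma$, that $\{\gamma,H\}=0$, and hence that $V^\dag=-V$ with $V^2=-1$. None of this holds in the paper's setup, and it cannot: a mass term $m\gamma$ compatible with chiral symmetry must itself anticommute with $\Gamma$, so $\gamma\neq\Gamma$ (for class AIII the paper takes $\Gamma=\sigma_3\otimes\id_n$ and $\gamma=\sigma_1\otimes\id_n$, two distinct anticommuting involutions). Concretely, with $H=\begin{pmatrix}0&\mathsf{C}\\ \mathsf{C}^\dag&0\end{pmatrix}$ and $q=\mathsf{C}(\mathsf{C}^\dag\mathsf{C})^{-1/2}$ one finds $V=\gamma\operatorname{sgn}H=\operatorname{diag}(q^\dag,q)$, which is not anti-Hermitian. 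Your dichotomy ``$\gamma$ anticommutes with $\operatorname{sgn}H$ or not'' therefore fails to detect the chiral classes at all: applied literally to the paper's $\gamma$ it places AIII in the Grassmannian branch and yields $C_0$ instead of $C_1$. What actually cuts the Cartan coset down to a group (AIII, BDI, CII) or to $\mathrm{U}/\mathrm{Sp}$, $\mathrm{U}/\mathrm{O}$ (DIII, CI) is the \emph{additional} constraint $\Gamma V\Gamma=V$ coming from the chiral symmetry (Lemma~\ref{lemma:V-op_classAIII} and Lemma~\ref{lemma:CTchi_DIII}), imposed on top of $\gamma V\gamma=V^{-1}$; i.e., $V$ lies in the $\Gamma$-commutant of the Cartan-embedded Grassmannian. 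Your final list of target spaces is correct, but for half of the ten classes the route to it rests on a false identification and must be replaced by this extra commutation constraint.
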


The remaining part of this Section is devoted to a proof of Proposition~\ref{lemma:eff_Dirac_bulk_lim} and Proposition~\ref{prop:V_classifying_space} for each symmetry class $\mathscr{C}$.

\begin{corollary}\label{cor:symmetry_shift}
For class $\mathscr{C}_p$ system, the operator $H_V$ defined by $V = e^{iH_V} \in S_{\mathscr{C}_p}$ belongs to class $\mathscr{C}_{p+1}$.
In other words, the symmetry of $H_V$ agrees with that of the Hamiltonian of class $\mathscr{C}_p$ in the gapless limit.
\end{corollary}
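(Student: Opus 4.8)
The plan is to deduce the statement from Proposition~\ref{prop:V_classifying_space} together with the structural observation already recorded in Table~\ref{tab:classification}, namely that the classifying space $S_{\mathscr{C}_p}$ of class $\mathscr{C}_p$ coincides, as a symmetric space, with the space $U_{\mathscr{C}_{p+1}}$ of time-evolution operators of class $\mathscr{C}_{p+1}$. Proposition~\ref{prop:V_classifying_space} places $V$ in $S_{\mathscr{C}_p}$ in the bulk limit; identifying $S_{\mathscr{C}_p}$ with $U_{\mathscr{C}_{p+1}}$ and recalling that every element of $U_{\mathscr{C}_{p+1}}$ is by definition the time evolution $e^{iH'}$ of some class $\mathscr{C}_{p+1}$ Hamiltonian $H'$, the operator $H_V$ defined by $V = e^{iH_V}$ is forced to lie in class $\mathscr{C}_{p+1}$. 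This is the conceptual core; the remaining work is to verify the symmetry content of $H_V$ directly so that the ``in other words'' clause becomes explicit.

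First I would treat the chiral sector, which is purely algebraic. Since $\gamma^2 = 1$ and both $\gamma$ and $\operatorname{sgn} H$ are hermitian, one has $\gamma V \gamma = \operatorname{sgn}(H)\,\gamma = V^\dag = V^{-1}$, hence $\gamma\, e^{iH_V}\gamma = e^{-iH_V}$ and therefore $\{\gamma, H_V\} = 0$. Thus $H_V$ carries the chiral symmetry implemented by the mass matrix $\gamma$, which is precisely the symmetry the gapless ($m \to 0$) Hamiltonian of class $\mathscr{C}_p$ acquires, because $\gamma$ anticommutes with the kinetic part of $H$. This matches the change of the chiral label $\chi$ by one unit along the $\mathscr{C}_p \to \mathscr{C}_{p+1}$ step.

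Next I would treat the antiunitary sector by computing $T V T^{-1}$ and $C V C^{-1}$ from the defining relations $T H T^{-1} = H^*$ and $C H C^{-1} = -H^*$ of Definition~\ref{def:CT_Ham_action}, using $\operatorname{sgn}(H^*) = (\operatorname{sgn} H)^*$ and $\operatorname{sgn}(-H) = -\operatorname{sgn} H$, together with the transformation of $\gamma$ under $C$ and $T$ fixed in the proof of Proposition~\ref{prop:V_classifying_space}. Reading off the resulting relations for $H_V$ and tracking the signs $\mathcal{C}^2,\mathcal{T}^2 = \pm 1$, one checks that they coincide exactly with the entries of the $\mathscr{C}_{p+1}$ row of Table~\ref{tab:classification}.

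The main obstacle will be the consistent bookkeeping of this one-step shift all the way around the Bott clock, in particular the ``wrap-around'' cases $C_1 \to C_0$ and $R_7 \to R_0$, where a symmetry is apparently lost rather than gained. The subtlety is that the relation $\gamma V \gamma = V^{-1}$ holds \emph{universally}, so when the source class $\mathscr{C}_p$ already possesses a chiral symmetry $\Gamma$, the operator $H_V$ carries two anticommuting involutions whose product is a genuine unitary symmetry; this symmetry block-diagonalizes $H_V$, and its class must then be read off from the reduced block rather than from $H_V$ naively viewed in the full space. Carrying out this reduction carefully, so that the effective symmetry content lands on $\mathscr{C}_{p+1}$, is the point requiring the most care.
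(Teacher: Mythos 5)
Your proposal is correct and takes essentially the same route as the paper: the paper's proof likewise rests on the identification of the classifying space $S_{\mathscr{C}_p}$ with the time-evolution space $U_{\mathscr{C}_{p+1}}$ recorded in Table~\ref{tab:classification}, together with the observation that in the gapless limit the mass matrix $\gamma$ supplies the additional symmetry (your relation $\gamma V \gamma = V^{-1}$, hence $\{\gamma, H_V\}=0$) that shifts $\mathscr{C}_p$ to $\mathscr{C}_{p+1}$. Your explicit check of the antiunitary sector and of the wrap-around cases is elaboration that the paper leaves implicit, but it is not a different argument.
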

\begin{proof}
    This follows from that the classifying space $S_{\mathscr{C}_p}$ agrees with the space of time-evolution operators of class $\mathscr{C}_{p+1}$ as shown in Table~\ref{tab:classification}. 
    In the gapless limit, the mass matrix $\gamma$ plays a role of the additional symmetry operator, which changes the symmetry class $\mathscr{C}_{p}$ to $\mathscr{C}_{p+1}$.
\end{proof}

\subsection{Wigner--Dyson class}

We first apply the bulk extension formalism to the Wigner--Dyson class (class A, AI, AII; threefold way).
We in particular discuss the class A case ($\mathbb{C}$-hermitian Hamiltonian with no symmetry). 
The class AI and AII cases can be discussed in parallel by replacing the $\mathbb{C}$-Hamiltonian with those for $\mathbb{R}$ and $\mathbb{H}$.

\subsubsection{Class A}\label{sec:extension_A}

The class A Hamiltonian is given by a $\mathbb{C}$-hermitian matrix with no additional symmetry.
\begin{definition}
We consider a $d$-dimensional gapped system of class A, which is described by the following size $k$ Hamiltonian, 
\begin{align}
 H = 
 \begin{pmatrix}
  \mathsf{A} & \mathsf{C} \\ \mathsf{C}^\dag & \widetilde{\mathsf{A}}
 \end{pmatrix}
 \in \mathsf{H}(k,\mathbb{C})
 \, ,
    \label{eq:classA_Hamiltonian}
\end{align}    
where $k = k_1 + k_2$ and
\begin{align}
    \mathsf{A} \in \mathsf{H}(k_1,\mathbb{C})
    \, , \qquad 
    \widetilde{\mathsf{A}} \in \mathsf{H}(k_2,\mathbb{C})
    \, , \qquad 
    \mathsf{C} \in \mathbb{C}^{k_1 \times k_2}
    \, .
\end{align}
\end{definition}
This block matrix structure is taken with respect to the mass matrix, 
\begin{align}
 \gamma \equiv \gamma_0 = 
 \begin{pmatrix}
  \id_{k_1} & 0 \\ 0 & -\id_{k_2}
 \end{pmatrix}
 \, ,
\end{align}
and hence we have the projection operators,
\begin{align}
 P_+ = \frac{1 + \gamma}{2} =
 \begin{pmatrix}
  \id_{k_1} & 0 \\ 0 & 0
 \end{pmatrix}
 \, , \qquad
 P_- = \frac{1 - \gamma}{2} =
 \begin{pmatrix}
  0 & 0 \\ 0 & \id_{k_2}
 \end{pmatrix}
 \, .
\end{align}
In this case, applying Definition~\ref{def:Wilson-Dirac_op}, the Wilson--Dirac operator is given by
\begin{align}
 a D = 
 \begin{pmatrix}
  A & C \\ - C^\dag & B
 \end{pmatrix}
 - P_+ \nabla_0 - P_- \nabla_0^\dag 
\end{align}
where 
\begin{align}
    A = \id_{k_1} + a \mathsf{A}
    \, , \qquad 
    B = \id_{k_2} - a \widetilde{\mathsf{A}}
    \, , \qquad 
    C = a \mathsf{C}
    \, .
\end{align}

The next step is to compute the determinant of size $Nk = N(k_1 + k_2)$ to consider the effective Dirac operator,
\begin{align}
     \det a D & = 
    \begin{vmatrix}
    A & C & 0 &&&&& Y & 0 \\
    -C^\dag & B & 0 & -\id_{k_2} &&&& 0 & 0 \\
    - \id_{k_1} & 0 & A & C & 0 &&&& \\
    & 0 & -C^\dag & B & 0 & -\id_{k_2} &&& \\
    && -\id_{k_1} & 0 & A & C & \ddots && \\
    &&&&\ddots&\ddots&\ddots&& \\
    0 & 0 &&&&&& A & C \\
    0 & X &&&&&&-C^\dag& B
    \end{vmatrix}
\end{align}
where we take $X$ and $Y$ depending on the boundary condition in the 0-direction,
\begin{align}
 X, Y =
 \begin{cases}
  0 & (\text{open}) \\
  - \id & (\text{periodic}) \\
  + \id & (\text{anti-periodic})
 \end{cases}
\end{align}
In order to write down the determinant, we define the following operator.
\begin{definition}\label{def:T-op}
We define the hermitian $T$-operator (transfer matrix) as follows,
\begin{align}
    T = 
    \begin{pmatrix}
        C B^{-1} C^\dag + A & C B^{-1} \\
        B^{-1} C^\dag & B^{-1}
    \end{pmatrix}
    \, .
\end{align}
\end{definition}
\begin{remark}
    The determinant of the $T$-operator is given by
\begin{align}
    \det T & = \det \qty( C B^{-1} C^\dag + A - C B^{-1} \cdot B \cdot B^{-1} C^\dag ) \det \qty(B^{-1})
    = \frac{\det A}{\det B}
    \, .
\end{align}
\end{remark}
\begin{lemma}\label{lemma:Dirac_det_classA}
The Wilson--Dirac operator determinant is given as follows,
\begin{align}
    \det a D & = (-1)^{n} \det A^N \det \qty( \begin{pmatrix}
    \id_{k_1} & \\ & -X
    \end{pmatrix} - T^{-N} 
    \begin{pmatrix}
    - Y & \\ & \id_{k_2}
    \end{pmatrix})
    \nonumber \\
    & =
    \begin{cases}
    \displaystyle
    (-1)^{n} \det A^N \det \frac{1}{2} \Big( 1 - T^{-N} + \qty(1 + T^{-N}) \gamma  \Big)  & (\text{open})
    \\[.5em] \displaystyle
    (-1)^{n} \det A^N \det \qty( 1 - T^{-N}) & (\text{periodic})
    \\[.5em] \displaystyle
    (-1)^{n} \det A^N \det \qty( 1 + T^{-N}) \gamma & (\text{anti-periodic})
    \end{cases}
    \label{eq:Dirac_det_classA}
\end{align}
where $n = (N-1)k_2^2 + Nk_2$.
\end{lemma}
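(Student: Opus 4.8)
The plan is to recognize $aD$ as a block-tridiagonal matrix of size $Nk$ ($N$ blocks of size $k=k_1+k_2$) carrying two corner blocks that encode the boundary data $X,Y$, and to collapse it by the transfer-matrix method to a single $k\times k$ boundary-matching determinant. Reading off the rows of the displayed matrix, the bulk equations are $-u_{j-1}+Au_j+Cv_j=0$ and $-C^\dag u_j+Bv_j-v_{j+1}=0$, where $u_j,v_j$ denote the upper ($k_1$) and lower ($k_2$) components at site $j$, while the corner entries $Y,X$ amount to the ghost identifications $u_0=-Yu_N$ and $v_{N+1}=-Xv_1$. First I would solve the two bulk relations for $(u_j,v_{j+1})$ in terms of $(u_{j-1},v_j)$.

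This gives the first-order recursion
\begin{align}
 \begin{pmatrix} u_j \\ v_{j+1} \end{pmatrix}
 = T^{-1} \begin{pmatrix} u_{j-1} \\ v_j \end{pmatrix}
 \, , \qquad
 T^{-1} = \begin{pmatrix} A^{-1} & -A^{-1}C \\ -C^\dag A^{-1} & B+C^\dag A^{-1}C \end{pmatrix}
 \, ,
\end{align}
and a direct block multiplication confirms $T\,T^{-1}=\id$ with $T$ as in Definition~\ref{def:T-op}. Iterating $N$ times yields $\binom{u_N}{v_{N+1}}=T^{-N}\binom{u_0}{v_1}$, and substituting the two boundary identifications turns the kernel condition into $\mathcal{M}\binom{u_N}{v_1}=0$ with
\begin{align}
 \mathcal{M} = \begin{pmatrix} \id_{k_1} & \\ & -X \end{pmatrix}
 - T^{-N} \begin{pmatrix} -Y & \\ & \id_{k_2} \end{pmatrix}
 \, .
\end{align}
This is exactly the matrix whose determinant appears in the statement; setting $(X,Y)=(0,0),(-\id,-\id),(\id,\id)$ and using $\gamma=\operatorname{diag}(\id_{k_1},-\id_{k_2})$ reproduces the open, periodic, and anti-periodic expressions in the second line of~\eqref{eq:Dirac_det_classA}, so the three cases follow from the single general formula.

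To promote this kernel computation to the determinant identity I would perform the block Gaussian elimination that the recursion encodes: eliminate the interior $u$-variables using the diagonal blocks $A$ as pivots, and eliminate the interior $v$-variables using the super-diagonal blocks $-\id_{k_2}$ as pivots, so that the surviving variables are precisely $(u_N,v_1)$ and the reduced block is $\mathcal{M}$. The $A$-pivots, together with the factors $\det T^{-N}=(\det B/\det A)^N$ implicit in assembling $T^{-N}$, account for the prefactor $\det A^{N}$; the lower-block eliminations contribute signs $\det(-\id_{k_2})=(-1)^{k_2}$, and interleaving the forward ($v$) and backward ($u$) hoppings forces transpositions of $k_2$-sized blocks, each worth $(-1)^{k_2^2}$. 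Collecting $N$ sign factors $(-1)^{k_2}$ and $N-1$ block transpositions gives the overall $(-1)^{n}$ with $n=(N-1)k_2^2+Nk_2$. The main obstacle is precisely this bookkeeping: keeping the pivots equal to $A$ and $-\id_{k_2}$ at every step (so that the accumulated Schur complements assemble into $T^{-N}$ rather than a modified transfer matrix) and pinning down the exact counts of block transpositions and $(-1)^{k_2}$ factors; everything else is the routine block algebra already verified for $T^{-1}$.
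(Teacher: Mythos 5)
Your transfer-matrix reduction is essentially the route the paper takes in Appendix~\ref{sec:Dirac_det_Proof}: your $T^{-1}$ is exactly the paper's $-\alpha^{-1}\beta$, the boundary identifications $u_0=-Yu_N$, $v_{N+1}=-Xv_1$ are correct, and the matching matrix $\mathcal{M}$ together with the three specializations of $(X,Y)$ all check out. However, the part of your argument that is actually carried out only shows that $\ker(aD)$ corresponds to $\ker(\mathcal{M})$ (for invertible $A,B$), which fixes $\det aD$ at most up to a nonvanishing prefactor; the remaining content of the lemma is precisely that this prefactor equals $(-1)^{n}\det A^{N}$, and there your write-up has a genuine gap. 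The bookkeeping is asserted to land on the right answer rather than derived: the claim that ``the factors $\det T^{-N}=(\det B/\det A)^N$ implicit in assembling $T^{-N}$ account for the prefactor $\det A^N$'' cannot be right as stated, because $T^{-N}$ remains \emph{inside} the residual determinant and its determinant is never extracted (no power of $\det B$ survives in the final formula); and the count of ``$N-1$ transpositions of $k_2$-sized blocks'' is not tied to any identified step of the elimination --- it is reverse-engineered from the target exponent $n$.

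The clean way to close the gap is to reorganize before eliminating: permute the columns so that the sites are regrouped as $(u_j,v_{j+1})$ rather than $(u_j,v_j)$. Then $aD$ becomes block lower bidiagonal with \emph{constant} diagonal block $\alpha=\left(\begin{smallmatrix}A&0\\-C^\dag&-\id_{k_2}\end{smallmatrix}\right)$, subdiagonal block $\beta=\left(\begin{smallmatrix}-\id_{k_1}&C\\0&B\end{smallmatrix}\right)$, and corner blocks $\tilde\alpha,\tilde\beta$ carrying $X,Y$; the standard corner-determinant identity then gives $\det(\alpha)^N\det\big(\alpha^{-1}\tilde\alpha-(-\alpha^{-1}\beta)^N\beta^{-1}\tilde\beta\big)$ with no drift of the pivots, which is exactly the condition you were trying to enforce by hand. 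The prefactor becomes transparent: $\det(\alpha)^N=(-1)^{Nk_2}\det A^N$, while the column permutation (swapping $u_j\leftrightarrow v_j$ at each site, contributing $(-1)^{Nk_1k_2}$, then cycling the $v_1$ block to the end past $N$ blocks of size $k_1$ and $N-1$ blocks of size $k_2$, contributing $(-1)^{Nk_1k_2+(N-1)k_2^2}$) yields a net $(-1)^{(N-1)k_2^2}$, so $n=(N-1)k_2^2+Nk_2$. Since your recursion already identifies $-\alpha^{-1}\beta=T^{-1}$, $\alpha^{-1}\tilde\alpha=\operatorname{diag}(\id_{k_1},-X)$ and $\beta^{-1}\tilde\beta=\operatorname{diag}(-Y,\id_{k_2})$, the rest of your argument then goes through verbatim.
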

A proof of Lemma~\ref{lemma:Dirac_det_classA} is given in Appendix~\ref{sec:Dirac_det_Proof}.\\

From this expression, we obtain the effective Dirac operator determinant (Definition~\ref{def:eff_Dirac_det}).
\begin{lemma}\label{lemma:eff_Dirac_det_classA}
Define the effective Hamiltonian $\mathcal{H}$ through the $T$-operator $T =: e^{a \mathcal{H}}$.
Then, the Wilson--Dirac operator determinant is given by
\begin{subequations}
\begin{align}
    \det \widetilde{D}_{\text{op}} & = \frac{\det D_{\text{op}}}{\det D_{\text{ap}}}
    = \det \frac{1}{2} \qty( 1 + \gamma \frac{1 - T^{-N}}{1 + T^{-N}} )
    = \det \frac{1}{2} \qty( 1 + \gamma \tanh \qty(\frac{N a}{2} \mathcal{H}) )
    \, , \\
    \det \widetilde{D}_{\text{p}} & = \frac{\det D_{\text{p}}}{\det D_{\text{ap}}}
    = \det \qty( \gamma \frac{1 - T^{-N}}{1 + T^{-N}} )
    = \det \qty( \gamma \tanh \qty(\frac{N a}{2} \mathcal{H}) )
    \, ,
\end{align}
from which we obtain the effective Dirac operator,
\begin{align}
    \widetilde{D}_{\text{op}} = \frac{1}{2} \qty( 1 + \gamma \tanh \qty(\frac{N a}{2} \mathcal{H}) )
    \, , \qquad
    \widetilde{D}_{\text{p}} = \gamma \tanh \qty(\frac{N a}{2} \mathcal{H}) 
    \, .
\end{align}
\end{subequations}    
\end{lemma}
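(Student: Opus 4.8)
The plan is to take the statement apart into two independent tasks: a pure determinant manipulation of the three formulas already supplied by Lemma~\ref{lemma:Dirac_det_classA}, followed by the elementary rewriting of the resulting $T$-expression as a hyperbolic tangent. First I would observe that all three boundary conditions in~\eqref{eq:Dirac_det_classA} share the identical scalar prefactor $(-1)^n \det A^N$, so it cancels in both ratios of Definition~\ref{def:eff_Dirac_det}. The whole problem therefore reduces to comparing the three reduced $k\times k$ operators
\begin{align}
    M_{\text{op}} = \tfrac{1}{2}\bigl(1 - T^{-N} + (1+T^{-N})\gamma\bigr)
    \, , \quad
    M_{\text{p}} = 1 - T^{-N}
    \, , \quad
    M_{\text{ap}} = (1+T^{-N})\gamma
    \, ,
\end{align}
and showing that $\det M_{\text{op}}/\det M_{\text{ap}}$ and $\det M_{\text{p}}/\det M_{\text{ap}}$ equal the claimed expressions.

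The heart of the argument is a one-sided factorization. Writing $R := (1-T^{-N})(1+T^{-N})^{-1}$, which is a function of $T$ and hence commutes with $(1+T^{-N})$, I would factor $M_{\text{ap}}$ out \emph{on the left}:
\begin{align}
    M_{\text{op}} = M_{\text{ap}}\cdot \tfrac{1}{2}\bigl(1 + \gamma R\bigr)
    \, , \qquad
    M_{\text{p}} = M_{\text{ap}}\cdot \bigl(\gamma R\bigr)
    \, .
\end{align}
Both identities are verified by expanding the right-hand sides, using only $\gamma^2 = 1$ together with $(1+T^{-N})R = 1 - T^{-N}$; e.g.\ $M_{\text{ap}}\cdot\tfrac12(1+\gamma R) = \tfrac12(1+T^{-N})(\gamma + \gamma^2 R) = \tfrac12(1+T^{-N})\gamma + \tfrac12(1-T^{-N}) = M_{\text{op}}$. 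Multiplicativity of the determinant then cancels $\det M_{\text{ap}}$ and yields $\det\widetilde{D}_{\text{op}} = \det\tfrac12(1+\gamma R)$ and $\det\widetilde{D}_{\text{p}} = \det(\gamma R)$, from which the effective Dirac operators $\widetilde{D}_{\text{op}} = \tfrac12(1+\gamma R)$ and $\widetilde{D}_{\text{p}} = \gamma R$ are read off. Finally, defining $\mathcal{H}$ through $T =: e^{a\mathcal{H}}$ (legitimate by functional calculus, since the transfer matrix $T$ of Definition~\ref{def:T-op} is hermitian and, for $a$ small, positive), I would substitute $T^{-N} = e^{-Na\mathcal{H}}$ and reduce the ratio by multiplying numerator and denominator by $e^{Na\mathcal{H}/2}$,
\begin{align}
    R = \frac{1 - e^{-Na\mathcal{H}}}{1 + e^{-Na\mathcal{H}}}
      = \frac{e^{Na\mathcal{H}/2} - e^{-Na\mathcal{H}/2}}{e^{Na\mathcal{H}/2} + e^{-Na\mathcal{H}/2}}
      = \tanh\!\Bigl(\frac{Na}{2}\mathcal{H}\Bigr)
    \, ,
\end{align}
which completes the chain of equalities in both displays.

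The step I expect to be the main obstacle is the non-commutativity of $\gamma$ and $T$: because $\gamma$ is block-diagonal with respect to the mass matrix while $T$ mixes the blocks, one cannot freely commute $\gamma$ past $R$, and a naive two-sided cancellation of $M_{\text{ap}}$ fails. This is precisely why the factorization above must keep $M_{\text{ap}}$ on the left and the effective operator on the right, so that the only reordering ever invoked is the involution $\gamma^2 = 1$. A related subtlety worth flagging is that this one-sided factorization identifies the effective operator only up to a similarity transformation by $M_{\text{ap}}$; since the determinant (and, in the bulk limit, the spectrum relevant to $V = \gamma\operatorname{sgn}H$) is invariant under conjugation, the representatives $\tfrac12(1+\gamma R)$ and $\gamma R$ are the natural and harmless choices.
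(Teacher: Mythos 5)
Your proposal is correct and fills in exactly the algebra that the paper leaves implicit (its proof of this lemma is a one-line appeal to Lemma~\ref{lemma:Dirac_det_classA}): the common prefactor $(-1)^n\det A^N$ cancels, the left factorization through $M_{\text{ap}}=(1+T^{-N})\gamma$ uses only $\gamma^2=1$ and the fact that $1-T^{-N}$ and $(1+T^{-N})^{-1}$ commute, and the substitution $T^{-N}=e^{-Na\mathcal{H}}$ gives the $\tanh$. Your care about not commuting $\gamma$ past functions of $T$, and the observation that the determinant ratio fixes the effective operator only up to the choice of left versus right quotient, are both sound.
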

\begin{proof}
    This follows from Lemma~\ref{lemma:Dirac_det_classA}.
\end{proof}

In order to prove Proposition~\ref{lemma:eff_Dirac_bulk_lim} in this case, we take the following limits.
\begin{enumerate}
    \item Large scale limit: $N a \to \infty$\\[.5em]
    In this limit, the tanh function behaves as
    \begin{align}
        \lim_{N a \to \infty} \tanh \qty(\frac{N a}{2} \mathcal{H}) = \operatorname{sgn} \mathcal{H} = \frac{\mathcal{H}}{\sqrt{\mathcal{H}^2}}
        \, ,
    \end{align}
    where the spectrum is given by $\operatorname{Spec}(\operatorname{sgn} \mathcal{H}) = \{\pm 1 \}$.
    
    \item Continuum limit: $a \to 0$\\[.5em]
    In this limit, we have the expansion,
    \begin{align}
        T = \id_k + a \mathcal{H} + O(a^2) 
        \, .
    \end{align}
    On the other hand, we also have
    \begin{align}
        T =
        \begin{pmatrix}
        a \mathsf{C} (\id_{k_1} - a \widetilde{\mathsf{A}})^{-1} a \mathsf{C}^\dag + \id_{k_1} + a {\mathsf{A}} & a \mathsf{C} (\id_{k_2} - a \widetilde{\mathsf{A}})^{-1} \\
        (\id_{k_2} - a \widetilde{\mathsf{A}})^{-1} a \mathsf{C}^\dag & (\id_{k_2} - a \widetilde{\mathsf{A}})^{-1}
    \end{pmatrix}
    = \id_k + a H + O (a^2)
    \, ,
    \end{align}
    from which we obtain
    \begin{align}
        \lim_{a \to 0} \mathcal{H} = H 
        \, .
    \end{align}

\end{enumerate}

\begin{proposition}
    Proposition~\ref{lemma:eff_Dirac_bulk_lim} holds for class A.
\end{proposition}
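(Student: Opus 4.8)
The plan is to combine the closed forms for $\widetilde{D}_{\text{op}}$ and $\widetilde{D}_{\text{p}}$ obtained in Lemma~\ref{lemma:eff_Dirac_det_classA} with the two limiting behaviours already recorded above, taken in the prescribed order $\lim_{a\to 0}\lim_{Na\to\infty}$. Since everything has been reduced to a statement about the single hermitian operator $\mathcal{H}$ defined through $T =: e^{a\mathcal{H}}$, the argument is essentially a spectral one performed eigenvalue by eigenvalue.

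First, at fixed lattice spacing $a$, I would take the large scale limit $Na\to\infty$. The operator $\mathcal{H}$ is hermitian (this is where the hermiticity of the $T$-operator from Definition~\ref{def:T-op} is used), hence diagonalizable with real eigenvalues; provided none of them vanishes, applying the scalar identity $\lim_{Na\to\infty}\tanh\qty(\frac{Na}{2}\lambda) = \operatorname{sgn}\lambda$ on each eigenspace gives $\lim_{Na\to\infty}\tanh\qty(\frac{Na}{2}\mathcal{H}) = \operatorname{sgn}\mathcal{H}$, with $\operatorname{Spec}(\operatorname{sgn}\mathcal{H}) = \{\pm 1\}$. Substituting into the two expressions of Lemma~\ref{lemma:eff_Dirac_det_classA} yields the intermediate operators $\frac{1}{2}\qty(1 + \gamma\operatorname{sgn}\mathcal{H})$ and $\gamma\operatorname{sgn}\mathcal{H}$.

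Second, I would take the continuum limit $a\to 0$. From the expansion $T = \id_k + aH + O(a^2)$ computed above one reads off $\lim_{a\to 0}\mathcal{H} = H$. Because $H$ is gapped, its spectrum is bounded away from zero, so for $a$ small enough the eigenvalues of $\mathcal{H}$ stay in the same connected component of $\mathbb{R}\setminus\{0\}$ as the corresponding eigenvalues of $H$; on each such component $\operatorname{sgn}$ is locally constant, hence continuous along the path, and therefore $\operatorname{sgn}\mathcal{H}\to\operatorname{sgn}H$. Setting $V = \gamma\operatorname{sgn}H$, the two intermediate operators converge to $\overline{D}_{\text{op}} = \frac{1}{2}(1 + V)$ and $\overline{D}_{\text{p}} = V$, which is the assertion of Proposition~\ref{lemma:eff_Dirac_bulk_lim} for class A.

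The only genuinely delicate point is the legitimacy of passing $\operatorname{sgn}$ through the continuum limit: $\operatorname{sgn}$ is discontinuous at the origin, so the argument rests essentially on the gap hypothesis for $H$, which guarantees that no eigenvalue of $\mathcal{H}$ crosses zero as $a\to 0$ and thus keeps the sign function continuous along the relevant spectral flow. Everything else is a direct substitution into Lemma~\ref{lemma:eff_Dirac_det_classA}, so I expect the gap preservation under the two successive limits to be the main thing worth spelling out, while the remaining manipulations are routine.
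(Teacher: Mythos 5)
Your proposal follows exactly the paper's route: apply the two limits recorded just before the proposition (the spectral $\tanh\to\operatorname{sgn}$ identity for the large scale limit, and the expansion $T = \id_k + aH + O(a^2)$ giving $\mathcal{H}\to H$ for the continuum limit) and substitute into Lemma~\ref{lemma:eff_Dirac_det_classA}. The only difference is that you spell out why $\operatorname{sgn}$ may be passed through the $a\to 0$ limit via the gap hypothesis, a point the paper leaves implicit; this is a correct and worthwhile clarification, not a deviation.
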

\begin{proof}
    Taking the large scale limit, and then the continuum limit, we have
    \begin{align}
        \lim_{a \to 0} \lim_{N a \to 0} \gamma \tanh \qty(\frac{N a}{2} \mathcal{H}) = \gamma \operatorname{sgn} H = V
        \, .
    \end{align}
    Then, it follows from Lemma~\ref{lemma:eff_Dirac_det_classA}.
\end{proof}

\begin{proposition}
    Proposition~\ref{prop:V_classifying_space} holds for class A.
\end{proposition}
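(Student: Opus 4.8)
The plan is to realize $V$ inside a concrete matrix model of the Grassmannian $C_0=\mathrm{U}/(\mathrm{U}\times\mathrm{U})$ and to read off the membership from the algebraic relations that $V$ satisfies. The starting point is that $V=\gamma\operatorname{sgn}H$ is the product of two hermitian involutions: the fixed mass matrix $\gamma$, with $\gamma^\dag=\gamma$ and $\gamma^2=1$, and the flat-band Hamiltonian $Q:=\operatorname{sgn}H=H/\sqrt{H^2}$. Since $H\in\mathsf{H}(k,\mathbb{C})$ is hermitian and gapped, $Q$ is well defined and is itself a hermitian involution, $Q^\dag=Q$ and $Q^2=1$. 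First I would record the resulting symmetry relation for $V$: using $\gamma^2=1$ and $Q^\dag=Q$,
\begin{align}
    \gamma V\gamma=\gamma\,\gamma Q\,\gamma=Q\gamma=(\gamma Q)^\dag=V^\dag
    \, .
\end{align}
This reproduces the already-noted unitarity ($V^\dag=V^{-1}$) and is the defining constraint I will exploit.

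The key step is that the model of $C_0$ relevant here is $\{\,U\in\mathrm{U}(k)\mid \gamma U\gamma=U^\dag\,\}$, and that left multiplication by $\gamma$ identifies this set with the space of hermitian involutions. Indeed, for any such $U$ one has $(\gamma U)^\dag=U^\dag\gamma=\gamma U\gamma\cdot\gamma=\gamma U$ and $(\gamma U)^2=\gamma U\gamma U=U^\dag U=1$, so $\gamma U$ is a hermitian involution; conversely any hermitian involution $Q'$ yields $U=\gamma Q'$ obeying the constraint. Applying this to our $V$ simply returns $\gamma V=Q=\operatorname{sgn}H$, confirming that $V$ lies in this model and that the identifying map sends it to the flat-band Hamiltonian.

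It then remains to identify the space of hermitian involutions with $C_0$. Writing $Q=2P-1$ with $P=\tfrac{1}{2}(1+Q)$ the spectral projector onto the positive-energy subspace of $H$, the assignment $Q\mapsto P$ is a bijection onto orthogonal projections; fixing the rank $r=\operatorname{tr}P$, i.e. the number of positive eigenvalues of $H$ (which is fixed by the gap at a given filling), identifies the component with the Grassmannian $\mathrm{U}(k)/(\mathrm{U}(r)\times\mathrm{U}(k-r))$. Taking the union over admissible ranks and passing to the stable limit yields $C_0=\mathrm{U}/(\mathrm{U}\times\mathrm{U})=S_{\mathscr{C}}$ for class A, so that $V$, and hence $\overline{D}_{\text{p}}=V$, takes value in $S_{\mathscr{C}}$. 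As a consistency check, $\gamma V\gamma=V^\dag$ is precisely the chiral constraint defining the class-AIII ($C_1$) time-evolution operators, in agreement with Corollary~\ref{cor:symmetry_shift} and Table~\ref{tab:classification}.

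The computation is light; the genuine care goes into the component bookkeeping. The main obstacle is to pin down which connected component of the Grassmannian is hit, i.e. to show that the rank $r$ is fixed by the gap structure and is compatible with the $(k_1,k_2)$ block grading attached to $\gamma$, and, for the stronger reading that $V$ exhausts (rather than merely lies in) the classifying space, to verify surjectivity of $H\mapsto\operatorname{sgn}H$ onto hermitian involutions of the correct signature. Both reduce to the standard correspondence between hermitian involutions, orthogonal projections, and complex Grassmannians, so I expect no substantive difficulty beyond making these identifications precise.
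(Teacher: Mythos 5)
Your proof is correct, but it takes a genuinely different route from the paper's. The paper also starts from the relation $\gamma V\gamma=V^\dag$, but then parametrizes $V=e^{X}$ and deduces $X^\dag=-X$ together with $\{\gamma,X\}=0$, so that $X$ lies in the tangent space of the symmetric space $\mathrm{U}(k)/(\mathrm{U}(k_1)\times\mathrm{U}(k_2))$ and $V$ lands in its Cartan embedding inside $\mathrm{U}(k)$; the union over $k_1+k_2=k$ and the inductive limit $k\to\infty$ then give $C_0$. You instead bypass the exponential parametrization entirely: you observe that $\gamma V=\operatorname{sgn}H$ is a hermitian unitary involution, identify the constraint set $\{U\in\mathrm{U}(k)\mid\gamma U\gamma=U^\dag\}$ with the space of such involutions, and then use the standard bijection $Q\mapsto P=\tfrac12(1+Q)$ with orthogonal projections to recognize each fixed-signature component as a complex Grassmannian. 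What your route buys is rigor and concreteness: it avoids the branch-of-logarithm issue implicit in writing $V=e^X$ and extracting $\gamma X\gamma=-X$, and it pins down exactly which component is hit (the rank equals the number of positive eigenvalues of $H$). What the paper's route buys is uniformity: the same $V=e^X$ Lie-algebra bookkeeping is reused verbatim for every other symmetry class (AIII, D, C, DIII, CI), where the relevant classifying spaces are no longer Grassmannians and the projector model would need to be replaced class by class. Your closing worry about surjectivity is not needed for the statement as posed --- the proposition only asserts that $V$ takes a value in $S_{\mathscr{C}}$ --- though it is in any case immediate, since every hermitian involution $Q'$ is itself a gapped Hamiltonian with $\operatorname{sgn}Q'=Q'$.
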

\begin{proof}
    We first remark that the unitary operator $V$ obeys $\gamma V \gamma = \operatorname{sgn} (H) \gamma = V^\dag$.
    Hence, parametrizing $V = e^{X}$, we obtain $X^\dag = -X$ and also $\{ \gamma, X\} = 0$, which implies that $V$ takes a value in the complex Grassmannian, which becomes the classifying space of class A in the inductive limit,
\begin{align}
    V \in \bigcup_{k_1 + k_2 = k} \frac{\mathrm{U}(k)}{\mathrm{U}(k_1) \times \mathrm{U}(k_2)} \xrightarrow{k \to \infty} C_0 \, .
    \label{eq:classifying_sp_C0}
\end{align}
This large $k$ limit corresponds to the thermodynamic limit of the $d$-dimensional bulk system (bulk limit).
\end{proof}

The classifying space plays an important role to discuss the topological property of the system.
The zero-th homotopy group of $C_0$ is given by $\pi_0(C_0) = \mathbb{Z}$, and in general we have $\pi_{d}(C_0) = \mathbb{Z}$ for $d \in 2 \mathbb{Z}_{\ge 0}$.
In this case, we obtain the topological invariant of the $d$-dimensional gapped system $\nu \in \mathbb{Z}$.
\begin{proposition}
Identifying the case $\nu = 0$ as a topologically trivial case, we parametrize $k_1 = n - \nu$, $k_2 = n + \nu$.
Then, we obtain
\begin{align}
    \nu = - \frac{1}{2} \tr \operatorname{sgn} H = - \frac{1}{2} \eta(H) 
    \, ,
    \label{eq:top_num_classA}
\end{align}
where the eta invariant $\eta(H)$ is defined by
\begin{align}
    \eta(H) = \tr \operatorname{sgn} H
    \, .
\end{align}
\end{proposition}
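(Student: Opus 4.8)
The plan is to identify the integer $\nu$ (the generator of $\pi_0(C_0) = \mathbb{Z}$ labelling the connected component of $V = \gamma \operatorname{sgn} H$) directly with the spectral asymmetry of $H$, and then to read off the claimed trace formula from the spectral decomposition of the flattened Hamiltonian. The essential observation is that, because $H$ is gapped, $\operatorname{sgn} H$ is a well-defined Hermitian involution whose $\pm 1$ eigenspaces $E_\pm$ have dimensions $n_\pm$ that are invariant under any gap-preserving deformation. These dimensions are exactly the discrete data distinguishing the sheets of the space of involutions, hence the components of $C_0 = \bigsqcup_{k_1+k_2=k} \mathrm{U}(k)/(\mathrm{U}(k_1)\times\mathrm{U}(k_2))$ in the stable limit of \eqref{eq:classifying_sp_C0}.

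First I would note that the map $V \mapsto \gamma V = \operatorname{sgn} H$ sends the classifying space into the disjoint union of complex Grassmannians, the component being labelled by $\dim E_+ = n_+$ (equivalently $\dim E_- = n_-$, with $n_+ + n_- = k$); equivalently, the generator of $\pi_0(C_0)$ is the rank of the occupied-state projector $P_- = \tfrac{1}{2}(1 - \operatorname{sgn} H)$ measured relative to the balanced reference. Next I would compute the trace of the flattened Hamiltonian directly: since $\operatorname{sgn} H$ acts as $+1$ on $E_+$ and $-1$ on $E_-$, one has $\tr \operatorname{sgn} H = n_+ - n_-$, which is by definition $\eta(H)$. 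Adopting the parametrization of the statement, in which the balanced configuration is declared the $\nu = 0$ class and the ranks of the component containing $V$ are written $k_1 = n_+ = n - \nu$ and $k_2 = n_- = n + \nu$ with $k = 2n$, substitution yields $\tr \operatorname{sgn} H = (n - \nu) - (n + \nu) = -2\nu$, and therefore $\nu = -\tfrac{1}{2}\tr \operatorname{sgn} H = -\tfrac{1}{2}\eta(H)$.

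The main obstacle I anticipate is bookkeeping of the reference and the sign conventions rather than any analytic difficulty. One must check that declaring the balanced spectrum the $\nu = 0$ class is consistent with the identification of $V$ with the classifying space, so that the integer extracted from $\pi_0(C_0)$ genuinely coincides with $n_+ - n_+^{\mathrm{ref}}$ up to the overall sign fixed by the occupied-state convention; in particular one should verify that the topological sector is controlled by the signature of $\operatorname{sgn} H$ and not merely by the block sizes of $\gamma$, which can differ for a fixed $\gamma$. A secondary point to dispatch is that the eta invariant, defined in general through zeta-regularization of the spectral asymmetry, collapses to the unregularized trace $\tr \operatorname{sgn} H$ here precisely because $H$ is a gapped finite-size matrix measured against a symmetric reference spectrum, so no regularization ambiguity survives.
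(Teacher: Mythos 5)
Your proof is correct and follows essentially the same route the paper intends: the paper states this proposition without an explicit proof, treating it as immediate from the parametrization, and the content is exactly your computation $\tr \operatorname{sgn} H = n_+ - n_- = (n-\nu)-(n+\nu) = -2\nu$ once $k_1, k_2$ are read as the dimensions of the $\pm1$ eigenspaces of $\operatorname{sgn} H$ labelling the component of $\bigcup_{k_1+k_2=k}\mathrm{U}(k)/(\mathrm{U}(k_1)\times\mathrm{U}(k_2))$. Your remark that the component is controlled by the signature of $\operatorname{sgn} H$ rather than by the block sizes of $\gamma$ is a worthwhile clarification of a point the paper leaves implicit, but it does not change the argument.
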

This bulk topological invariant agrees with the index of the overlap Dirac operator, $\operatorname{ind} (D_{\text{ov}}) = \nu$ \cite{Luscher:1998pqa,Adams:1998eg}.
See Sec.~\ref{sec:GW_relation}.
We also remark that the eta invariant appears in the formalism of the domain-wall fermion from the Dirac determinant together with the Pauli--Villars regularization, which is analogous to the definition of $\widetilde{D}_{\text{p}}$ and $\overline{D}_{\text{p}}$.
See, e.g., a recent review~\cite{Fukaya:2021sea} for details.

\begin{remark}\label{rmk:Witten_ind}
    Writing the mass matrix $\gamma = (-1)^F$ and the $V$-operator $V = e^{iH_V}$, the overlap operator index (bulk topological invariant) is written in the form of the equivariant Witten index,
    \begin{align}
        \nu = - \frac{1}{2} \tr \left[ (-1)^F e^{iH_V} \right]
        \, .
    \end{align}
    See also Remark~\ref{rmk:Euler_characteristic}.
\end{remark}

\subsubsection*{Wilson--Dirac fermion in $d = 2$}

Let us demonstrate the bulk extension formalism for a gapped class A system in $d = 2$.
We consider the following Wilson--Dirac Hamiltonian,
\begin{align}
    H = 
    \begin{pmatrix}
        m + 2 - (\cos p_1 + \cos p_2) & \sin p_1 - i \sin p_2 \\
        \sin p_1 + i \sin p_2 & - m - 2 + (\cos p_1 + \cos p_2)
    \end{pmatrix}
    \, ,
    \label{eq:WD_Ham}
\end{align}
and the two-dimensional part of the corresponding Wilson--Dirac operator,
\begin{align}
    D = \gamma H =
    \begin{pmatrix}
        m + 2 - (\cos p_1 + \cos p_2) & \sin p_1 - i \sin p_2 \\
        - \sin p_1 - i \sin p_2 & m + 2 - (\cos p_1 + \cos p_2)
    \end{pmatrix}
    \, ,
    \label{eq:WD_op}
\end{align}
where the mass matrix is given by $\gamma = \sigma_3$.
We define the intermediate Hamiltonian and the $V$-operator of the finite size $N$ as follows,
\begin{align}
    H_N = \tanh \left( \frac{N}{2} \mathcal{H} \right)
    \, , \qquad 
    V_N = \gamma H_N
    \, .
\end{align}
The band spectra $E$ of the Hamiltonian $H_N$ with $m = -1$ are presented in Fig.~\ref{fig:H_spec}.
The case $N=0$ shows the spectrum of the Hamiltonian~\eqref{eq:WD_Ham} itself.
We see that the spectrum becomes flat as $N$ becomes large.

\begin{figure}
    \centering
    \begin{tikzpicture}
        \begin{scope}
        \node at (0,0) {\includegraphics[height=3.5cm]{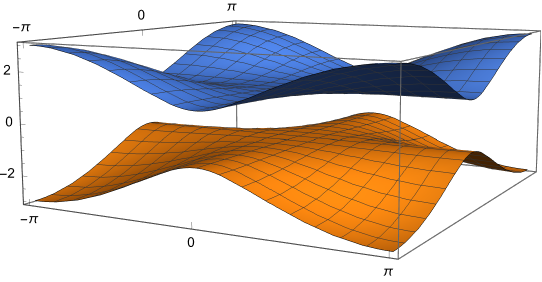}};
        \node at (-1.2,-1.6) {$p_1$};
        \node at (-1.8,1.9) {$p_2$};
        \node at (-3.7,.3) {$E$};
        \node at (-3,2) {\underline{$N=0$}};
        \end{scope}
        \begin{scope}[shift={(8,0)}]
        \node at (0,0) {\includegraphics[height=3.5cm]{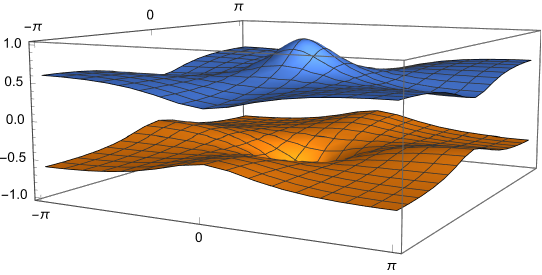}};
        \node at (-1.2,-1.6) {$p_1$};
        \node at (-1.8,1.9) {$p_2$};
        \node at (-3.7,.3) {$E$};
        \node at (-3,2) {\underline{$N=1$}};
        \end{scope}
        \begin{scope}[shift={(0,-4.5)}]
        \node at (0,0) {\includegraphics[height=3.5cm]{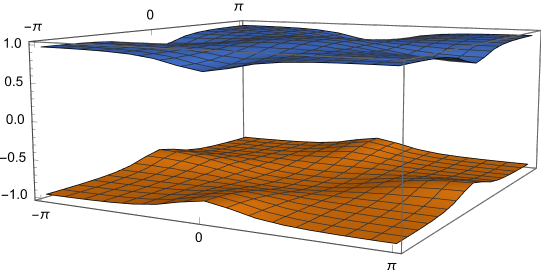}};
        \node at (-1.2,-1.6) {$p_1$};
        \node at (-1.8,1.9) {$p_2$};
        \node at (-3.7,.3) {$E$};
        \node at (-3,2) {\underline{$N=3$}};
        \end{scope}        
        \begin{scope}[shift={(8,-4.5)}]
        \node at (0,0) {\includegraphics[height=3.5cm]{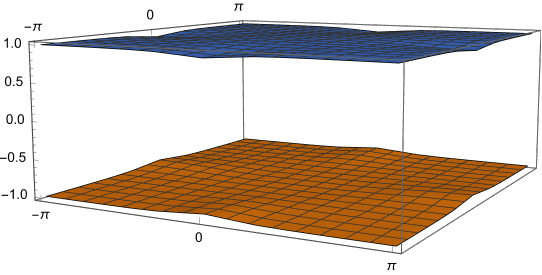}};
        \node at (-1.2,-1.6) {$p_1$};
        \node at (-1.8,1.9) {$p_2$};
        \node at (-3.7,.3) {$E$};
        \node at (-3,2) {\underline{$N=5$}};
        \end{scope}                
    \end{tikzpicture}
    \caption{The band spectra $E$ of the Wilson--Dirac Hamiltonian $H_N$ with $m = -1$. The case $N=0$ shows the spectrum of the Hamiltonian~\eqref{eq:WD_Ham}.}
    \label{fig:H_spec}
\end{figure}

The complex spectrum of the Wilson--Dirac operator~\eqref{eq:WD_op} for $m = -1$ is given in Fig.~\ref{fig:WD}:
The horizontal and vertical axes are for the real part and imaginary part of the spectrum. 
We show the spectra of the $V$-operator at finite $N$ denoted by $V_N$ in Fig.~\ref{fig:V_spec}.
The spectrum approaches to a unit circle as $N$ becomes large.

\begin{figure}[t]
    \centering
    \begin{tikzpicture}
        \node at (0,0) {\includegraphics{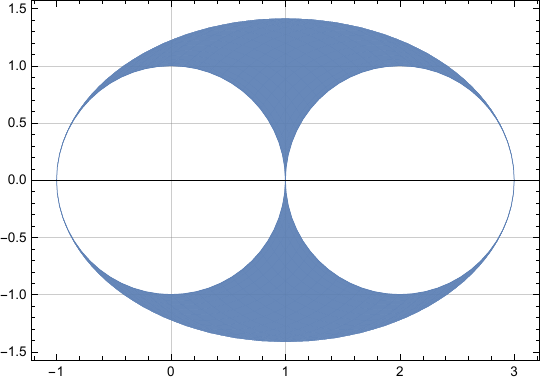}};
        \node at (.2,-3.5) {$\operatorname{Re}\lambda$};
        \node at (-5,.2) {$\operatorname{Im}\lambda$};
    \end{tikzpicture}
    \caption{The complex spectrum $\lambda$ of $d = 2$ Wilson--Dirac operator with $m = -1$.}
    \label{fig:WD}
\end{figure}

\begin{figure}[t]
    \centering
    \begin{tikzpicture}
    \begin{scope}
        \node at (0,0) {\includegraphics[width=5cm]{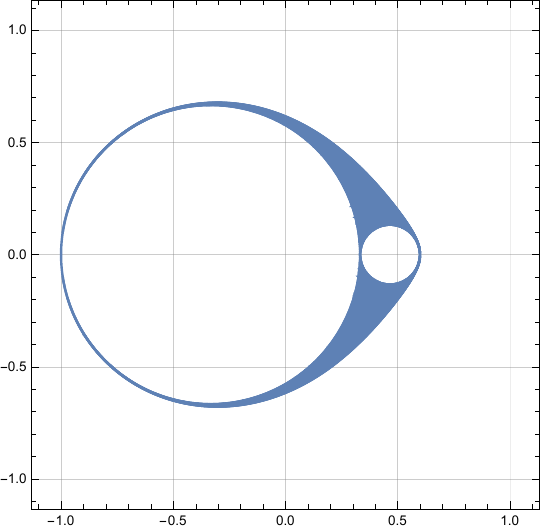}};
        \node at (1.7,1.9) {$N = 1$};
    \end{scope}
    \begin{scope}[shift={(5.5,0)}]
        \node at (0,0) {\includegraphics[width=5cm]{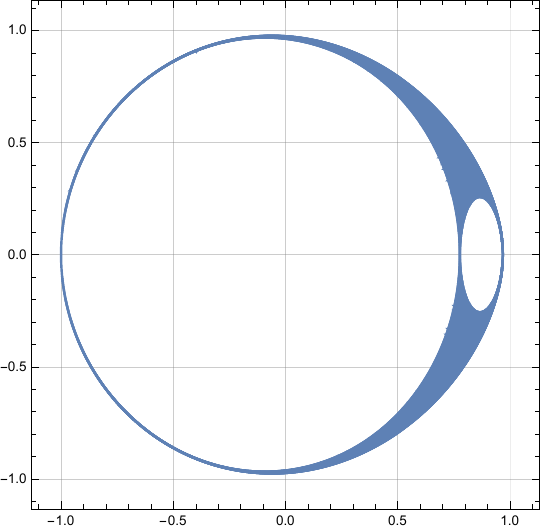}};
        \node at (1.7,1.9) {$N = 3$};
    \end{scope}
    \begin{scope}[shift={(11,0)}]
        \node at (0,0) {\includegraphics[width=5cm]{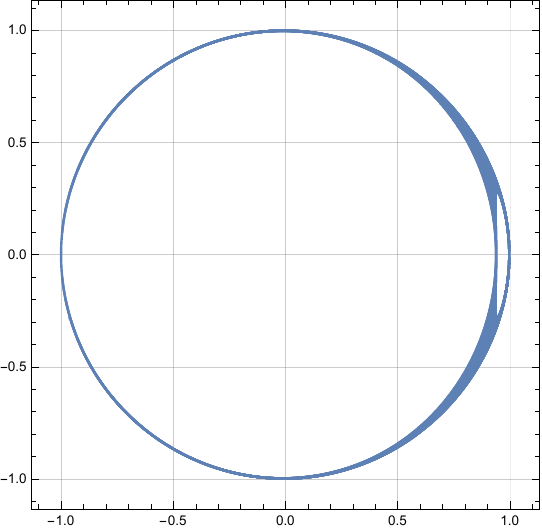}};
        \node at (1.7,1.9) {$N = 5$};
    \end{scope}    
    \begin{scope}[shift={(0,-5.5)}]
        \node at (0,0) {\includegraphics[width=5cm]{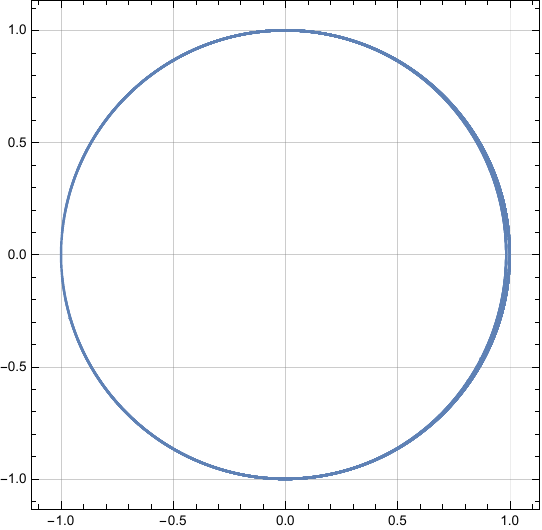}};
        \node at (1.7,1.9) {$N = 7$};
    \end{scope} 
    \begin{scope}[shift={(5.5,-5.5)}]
        \node at (0,0) {\includegraphics[width=5cm]{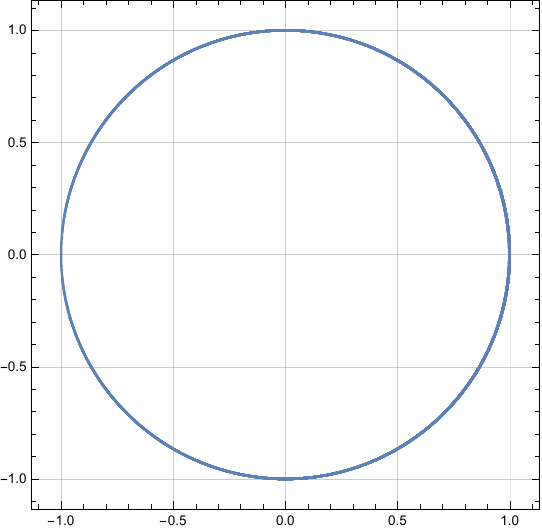}};
        \node at (1.7,1.9) {$N = 10$};
    \end{scope} 
    \begin{scope}[shift={(11,-5.5)}]
        \node at (0,0) {\includegraphics[width=5cm]{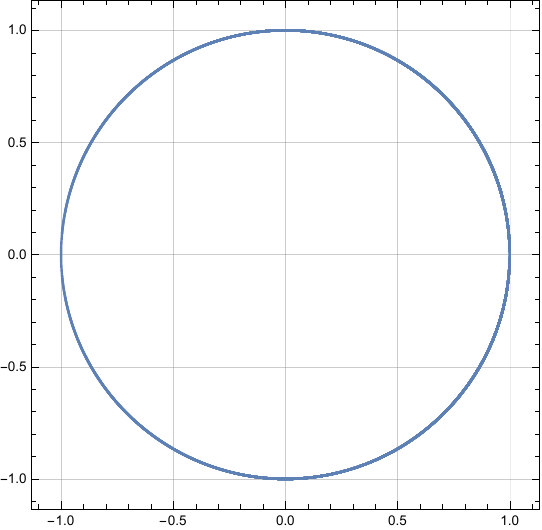}};
        \node at (1.7,1.9) {$N = 20$};
    \end{scope}     
    \end{tikzpicture}
    \caption{The complex spectra of $V_N$. The horizontal and vertical axes are for the real part and imaginary part of the spectrum.}
    \label{fig:V_spec}
\end{figure}

\subsubsection{Class AI, AII}

Let us consider the other Wigner--Dyson classes, class AI and AII.
In these cases, we can apply the same analysis after replacing the $\mathbb{C}$-hermitian Hamiltonian of class A~\eqref{eq:classA_Hamiltonian} with the $\mathbb{R}$-symmetric and the $\mathbb{H}$-self-dual matrices for class AI and AII, respectively,
\begin{subequations}
\begin{align}
    (\text{AI}) \quad &
    \mathsf{A} \in \mathsf{H}(k_1,\mathbb{R})
    \, , \qquad 
    \widetilde{\mathsf{A}} \in \mathsf{H}(k_2,\mathbb{R})
    \, , \qquad 
    \mathsf{C} \in \mathbb{R}^{k_1 \times k_2}
    \, , \\
    (\text{AII}) \quad &
    \mathsf{A} \in \mathsf{H}(k_1,\mathbb{H})
    \, , \qquad 
    \widetilde{\mathsf{A}} \in \mathsf{H}(k_2,\mathbb{H})
    \, , \qquad 
    \mathsf{C} \in \mathbb{H}^{k_1 \times k_2}
    \, .
\end{align}
\end{subequations}

\begin{proposition}
    Proposition~\ref{lemma:eff_Dirac_bulk_lim} and Proposition~\ref{prop:V_classifying_space} hold for class AI and AII.
\end{proposition}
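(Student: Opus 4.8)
The plan is to argue that the entire class A derivation is $\mathbb{K}$-linear and carries over essentially verbatim once $\mathbb{C}$ is replaced by $\mathbb{R}$ (class AI) or $\mathbb{H}$ (class AII), the only genuinely new content being the identification of the target classifying space. For Proposition~\ref{lemma:eff_Dirac_bulk_lim}, I would retrace Lemma~\ref{lemma:Dirac_det_classA} and Lemma~\ref{lemma:eff_Dirac_det_classA} and observe that each step relies only on the self-conjugacy of $H$ (hence the hermiticity of the transfer matrix $T$), the invertibility of the blocks $A$ and $B$, and the functional calculus $\tanh(\tfrac{Na}{2}\mathcal{H}) \to \operatorname{sgn}\mathcal{H}$ of a self-adjoint operator. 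Since now $\mathsf{A},\widetilde{\mathsf{A}},\mathsf{C}$ are $\mathbb{R}$-symmetric or $\mathbb{H}$-self-dual, the transfer matrix $T$ remains self-conjugate over the respective field, the large-scale and continuum limits go through unchanged, and one again recovers $\widetilde{D}_{\text{p}} = \gamma \tanh(\tfrac{Na}{2}\mathcal{H})$ with bulk limit $V = \gamma \operatorname{sgn} H$ (and correspondingly $\overline{D}_{\text{op}} = \tfrac{1}{2}(1+V)$).

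For Proposition~\ref{prop:V_classifying_space} I would repeat the class A computation of the relations obeyed by $V$. Because $\gamma$ and $\operatorname{sgn} H$ are $\mathbb{K}$-hermitian involutions, $V = \gamma \operatorname{sgn} H$ is $\mathbb{K}$-unitary (orthogonal for AI, symplectic for AII) and satisfies $\gamma V \gamma = V^\dag$. Writing $V = e^{X}$ then forces $X^\dag = -X$ together with $\{\gamma, X\} = 0$, so in the $\gamma$-eigenbasis $X$ is block off-diagonal and $\mathbb{K}$-skew. Exponentiating such generators sweeps out the Grassmannian of the appropriate type: for AI the real Grassmannian $\bigcup_{k_1+k_2=k}\mathrm{O}(k)/(\mathrm{O}(k_1)\times\mathrm{O}(k_2)) \to R_0$, and for AII the quaternionic Grassmannian $\bigcup_{k_1+k_2=k}\mathrm{Sp}(k)/(\mathrm{Sp}(k_1)\times\mathrm{Sp}(k_2)) \to R_4$ in the bulk limit, matching the entries of Table~\ref{tab:classification}.

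The hard part will be making the block-determinant manipulation of Lemma~\ref{lemma:Dirac_det_classA} legitimate over $\mathbb{H}$, where determinants are subtle because of non-commutativity. I expect the cleanest route is to regard an AI/AII Hamiltonian as a class A ($\mathbb{C}$) Hamiltonian carrying the additional antiunitary $\mathcal{T}$-symmetry, invoke the already-established complex result, and then verify that this symmetry is preserved by every operation in the bulk extension — by $\gamma$, by $\operatorname{sgn}$, and by the two limits — which holds because each is a real-analytic function of the symmetric data and $\gamma$ itself respects the reality/quaternionic structure. The point demanding the most care is the symplectic (AII) case, where one must confirm that $V = \gamma\operatorname{sgn} H$ genuinely lands in $\mathrm{Sp}$ rather than merely in $\mathrm{U}$, i.e. that the self-dual structure survives the sign function intact.
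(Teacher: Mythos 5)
Your proposal is correct and follows essentially the same route as the paper: the paper's own proof is just the two-line observation that the class A derivation never explicitly uses the imaginary unit, so it transfers verbatim to $\mathbb{R}$ and $\mathbb{H}$, and $V$ then lands in the real (AI) or quaternionic (AII) Grassmannian, giving $R_0$ and $R_4$ in the bulk limit. Your additional care about determinants over $\mathbb{H}$ --- realizing class AII as a complex class A system carrying an antiunitary symmetry that is preserved by $\gamma$, by $\operatorname{sgn}$, and by the two limits --- addresses a point the paper simply glosses over, but it does not change the structure of the argument.
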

\begin{proof}
    In the real and quaternion cases, we should replace the imaginary unit $i = \sqrt{-1}$ with the gamma matrix $\gamma$ obeying $\gamma^2 = -1$.
    Since we have not used it explicitly in Sec.~\ref{sec:extension_A}, we may apply the same argument to these cases as before.
    The $V$-operator takes a value in the real and quaternion Grassmannians, which become the classifying spaces of class AI and AII in the bulk limit,    
\begin{align}
    V \in 
    \begin{cases}
    \displaystyle
    \bigcup_{k_1 + k_2 = k} \frac{\mathrm{O}(k)}{\mathrm{O}(k_1) \times \mathrm{O}(k_2)} \quad \xrightarrow{k \to \infty} \ R_0 & (\text{AI}) \\
    \displaystyle
    \bigcup_{k_1 + k_2 = k} \frac{\mathrm{Sp}(k)}{\mathrm{Sp}(k_1) \times \mathrm{Sp}(k_2)} \ \xrightarrow{k \to \infty} \ R_4 & (\text{AII}) 
    \end{cases}
\end{align}
\end{proof}

For class AI and AII, we have $\pi_0(R_0) = \pi_0(R_4) = \mathbb{Z}$.
We can similarly obtain the bulk topological invariant~\eqref{eq:top_num_classA}, which agrees with the index of overlap Dirac operator.

\subsection{Chiral class}

Let us then consider the chiral class.
We focus on the complex case (class AIII) for the moment. 
The other classes (class BDI, CII) are discussed in the same way by replacing by $\mathbb{R}$ and $\mathbb{H}$ matrices.

\begin{definition}\label{def:classAIII_Hamiltonian}
We define the $d$-dimensional gapped class AIII Hamiltonian of size $2n$ by
\begin{align}
    H = 
    \begin{pmatrix}
        0 & \mathsf{C} \\ \mathsf{C}^\dag & 0
    \end{pmatrix}
    \, , \qquad \mathsf{C} \in \mathbb{C}^{n \times n} 
    \, .
\end{align}
\end{definition}

This Hamiltonian is obtained by taking $\mathsf{A}$, $\widetilde{\mathsf{A}} \to 0$ of Hamiltonian of class A~\eqref{eq:classA_Hamiltonian} with $k_1 = k_2 \equiv n$ ($k = 2n$).
This Hamiltonian possesses the chiral symmetry, $\{ \Gamma, H \} = 0$ with $\Gamma = \sigma_3 \otimes \id_n$, and the mass matrix is taken to be $\gamma \equiv \gamma_0 = \sigma_1 \otimes \id_n$.
Due to this chiral symmetry, all the non-zero eigenvalues make a pair, $\pm \lambda \in \operatorname{Spec}(H)$.
Since we assume that the Hamiltonian is gapped, the matrix size must be even, and the Hamiltonian takes a form as given in Definition~\ref{def:classAIII_Hamiltonian}.

In order to apply the same analysis as in Sec.~\ref{sec:extension_A} to the current case, we apply an orthogonal transformation.
We define an orthogonal matrix,
\begin{align}
    O = \frac{1}{\sqrt{2}}
    \begin{pmatrix}
        1 & 1 \\ 1 & -1
    \end{pmatrix} \otimes \id_n
    \, , \qquad 
    O^2 = \id_{2n} 
    \, ,
\end{align}
    which converts the gamma matrices,
\begin{align}
    \tilde{\Gamma} = O \Gamma O = \sigma_1 \otimes \id_n 
    \, , \qquad 
    \tilde{\gamma} = O \gamma O = \sigma_3 \otimes \id_n 
    \, ,
\end{align}
and the Hamiltonian,
\begin{align}
    \tilde{H} = O H O = \frac{1}{2}
    \begin{pmatrix}
        \mathsf{C} + \mathsf{C}^\dag & - \mathsf{C} + \mathsf{C}^\dag \\ \mathsf{C} - \mathsf{C}^\dag & -\mathsf{C} - \mathsf{C}^\dag 
    \end{pmatrix}
    \, .
\end{align}

Applying the bulk extension formalism to this case, we obtain the band flattened Hamiltonian $\overline{H}$ from the unitary operator $V = \tilde{\gamma} \operatorname{sgn}(\tilde{H})$ having the following properties.
\begin{lemma}\label{lemma:V-op_classAIII}
    The unitary operator $V = \tilde{\gamma} \operatorname{sgn}(\tilde{H})$ obeys
    \begin{align}
    V^\dag = V^{-1}
    \, , \qquad 
    \tilde{\gamma} V \tilde{\gamma} = V^\dag 
    \, , \qquad 
    \tilde{\Gamma} V \tilde{\Gamma} = V
    \, .
\end{align}
\end{lemma}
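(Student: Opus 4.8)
The plan is to reduce each of the three identities to elementary (anti)commutation relations among the involutions $\tilde{\gamma}$, $\tilde{\Gamma}$, and $\operatorname{sgn}(\tilde{H})$, so that no explicit manipulation of $\mathsf{C}$ is required beyond establishing the sign rules. Unitarity $V^\dagger = V^{-1}$ is the general statement already recorded just before Proposition~\ref{prop:V_classifying_space}: both $\tilde{\gamma}$ and $\operatorname{sgn}(\tilde{H})$ are hermitian with square $\id$, so $V^\dagger = \operatorname{sgn}(\tilde{H})\,\tilde{\gamma}$ and $V V^\dagger = \tilde{\gamma}\,\operatorname{sgn}(\tilde{H})^2\,\tilde{\gamma} = \tilde{\gamma}^2 = \id$.

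The relation $\tilde{\gamma} V \tilde{\gamma} = V^\dagger$ is purely algebraic and identical to the class A computation: using $\tilde{\gamma}^2 = \id$, one gets $\tilde{\gamma} V \tilde{\gamma} = \tilde{\gamma}(\tilde{\gamma}\operatorname{sgn}(\tilde{H}))\tilde{\gamma} = \operatorname{sgn}(\tilde{H})\tilde{\gamma} = V^\dagger$. No use of the chiral structure is needed here.

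The chiral relation $\tilde{\Gamma} V \tilde{\Gamma} = V$ is the substantive one, and its key input is the chiral symmetry in the rotated frame. Since $O^2 = \id_{2n}$ and the original Hamiltonian obeys $\{\Gamma, H\} = 0$, conjugating by $O$ yields $\{\tilde{\Gamma}, \tilde{H}\} = 0$. First I would deduce $\tilde{\Gamma}\tilde{H}\tilde{\Gamma} = -\tilde{H}$, square it to get $\tilde{\Gamma}\tilde{H}^2\tilde{\Gamma} = \tilde{H}^2$, and conclude that $\tilde{\Gamma}$ commutes with $\tilde{H}^2$, hence with $(\tilde{H}^2)^{-1/2}$. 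This gives $\tilde{\Gamma}\operatorname{sgn}(\tilde{H})\tilde{\Gamma} = \tilde{\Gamma}\tilde{H}(\tilde{H}^2)^{-1/2}\tilde{\Gamma} = -\operatorname{sgn}(\tilde{H})$, i.e. $\operatorname{sgn}(\tilde{H})$ anticommutes with $\tilde{\Gamma}$. Combining with $\{\tilde{\Gamma}, \tilde{\gamma}\} = 0$, which follows from $\{\sigma_1, \sigma_3\} = 0$ in $\tilde{\Gamma} = \sigma_1 \otimes \id_n$ and $\tilde{\gamma} = \sigma_3 \otimes \id_n$, the two sign flips cancel: $\tilde{\Gamma} V \tilde{\Gamma} = (\tilde{\Gamma}\tilde{\gamma}\tilde{\Gamma})(\tilde{\Gamma}\operatorname{sgn}(\tilde{H})\tilde{\Gamma}) = (-\tilde{\gamma})(-\operatorname{sgn}(\tilde{H})) = V$.

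The only step requiring genuine care is the functional-calculus move $\tilde{\Gamma}(\tilde{H}^2)^{-1/2}\tilde{\Gamma} = (\tilde{H}^2)^{-1/2}$: it is immediate once $\tilde{\Gamma}$ commutes with the self-adjoint operator $\tilde{H}^2$, but it is exactly here that the gap hypothesis enters, ensuring $\tilde{H}^2$ is positive and invertible so that $\operatorname{sgn}(\tilde{H})$ (equivalently $(\tilde{H}^2)^{-1/2}$) is well-defined rather than merely formal. I would flag this explicitly; everything else is bookkeeping among three commuting/anticommuting involutions.
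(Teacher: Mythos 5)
Your proof is correct and follows essentially the same route as the paper's: unitarity and $\tilde{\gamma}$-hermiticity from $\tilde{\gamma}^2 = \operatorname{sgn}(\tilde{H})^2 = \id$, and the chiral relation from the two anticommutators $\{\tilde{\gamma},\tilde{\Gamma}\}=0$ and $\{\tilde{\Gamma},\tilde{H}\}=0$, which is exactly the paper's (much terser) argument. Your explicit functional-calculus step passing from $\{\tilde{\Gamma},\tilde{H}\}=0$ to $\{\tilde{\Gamma},\operatorname{sgn}(\tilde{H})\}=0$, with the gap hypothesis flagged, is a detail the paper leaves implicit and is worth keeping.
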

\begin{proof}
    The first two relations are straightforward.
    The third relation can be shown using $\{\tilde{\gamma},\tilde{\Gamma}\} = 0$ and $\{\tilde{\Gamma},\tilde{H}\} = 0$.
\end{proof}

\begin{proposition}
    Proposition~\ref{lemma:eff_Dirac_bulk_lim} and Proposition~\ref{prop:V_classifying_space} hold for the chiral classes AIII, BDI, and CII.
\end{proposition}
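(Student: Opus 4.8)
The plan is to reduce both statements to the class A analysis of Sec.~\ref{sec:extension_A} by means of the orthogonal transformation $O$, and then to read off the classifying space from the extra chiral constraint recorded in Lemma~\ref{lemma:V-op_classAIII}.

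I would first settle Proposition~\ref{lemma:eff_Dirac_bulk_lim}. The matrix $O$ acts only on the internal indices, commutes with $\nabla_0$, and sends the projections to $O P_\pm O = \frac{1}{2}(\id \pm \tilde\gamma)$. Conjugating the Wilson--Dirac operator by $O$ therefore produces exactly the class A operator built from the diagonal mass matrix $\tilde\gamma = \sigma_3 \otimes \id_n$ and the class A Hamiltonian $\tilde H = O H O$ with $k_1 = k_2 = n$. Because $O^2 = \id_{2n}$, the functional determinant is unchanged, $\det(O D O) = \det D$, so Lemma~\ref{lemma:Dirac_det_classA} and Lemma~\ref{lemma:eff_Dirac_det_classA} apply verbatim to $\tilde H$. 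Taking the large scale limit and then the continuum limit reproduces $\overline{D}_{\text{op}} = \frac{1}{2}(1 + V)$ and $\overline{D}_{\text{p}} = V$ with $V = \tilde\gamma \operatorname{sgn}(\tilde H)$, and the original operator is recovered by the unitary conjugation $O\,\overline{D}\,O$.

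The real content is Proposition~\ref{prop:V_classifying_space}, i.e., locating $V$. I would feed in the three relations of Lemma~\ref{lemma:V-op_classAIII}: $V$ is unitary, $\tilde\gamma V \tilde\gamma = V^\dag$, and $\tilde\Gamma V \tilde\Gamma = V$. The last relation says that $V$ commutes with $\tilde\Gamma$, so in the eigenbasis of $\tilde\Gamma$ the operator is block-diagonal, $V = \operatorname{diag}(V_+, V_-)$. Since $\{\tilde\gamma, \tilde\Gamma\} = 0$, in that same basis $\tilde\gamma$ is purely anti-diagonal, so conjugation by $\tilde\gamma$ exchanges the two blocks; the relation $\tilde\gamma V \tilde\gamma = V^\dag$ then forces $V_- = V_+^\dag = V_+^{-1}$. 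Hence $V$ is completely fixed by the single unitary $V_+ \in \mathrm{U}(n)$, so in the bulk (large $n$) limit $V$ takes value in $C_1 = \mathrm{U}$, precisely the class AIII classifying space of Table~\ref{tab:classification}.

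For BDI and CII the argument runs identically once the imaginary unit is traded for the mass matrix $\gamma$ (as in the proof for class AI/AII) and the entries of $\mathsf{C}$ are taken in $\mathbb{R}$ and $\mathbb{H}$: the block $V_+$ then ranges over $\mathrm{O}(n)$ and $\mathrm{Sp}(n)$, yielding $V \in R_1 = \mathrm{O}$ and $V \in R_5 = \mathrm{Sp}$ in the bulk limit. The step I expect to be most delicate is precisely this chiral reduction: one must verify that commuting with $\tilde\Gamma$ together with the $\tilde\gamma$-conjugation collapses the class A Grassmannian down to the unitary group $\mathrm{U}(n)$ (rather than to some intermediate coset), and that the $\mathbb{R}$/$\mathbb{H}$ structure is preserved both by the real orthogonal transformation $O$ and through the continuum limit, so that the correct $\mathrm{O}(n)$ and $\mathrm{Sp}(n)$ — and hence $R_1$ and $R_5$ — are produced.
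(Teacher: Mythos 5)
Your argument is correct, and the first half (reduction of Proposition~\ref{lemma:eff_Dirac_bulk_lim} to the class~A computation via conjugation by $O$, which diagonalizes the mass matrix and turns $H$ into a class~A Hamiltonian with $k_1=k_2=n$) is exactly what the paper does. For Proposition~\ref{prop:V_classifying_space} you use the same input --- the three relations of Lemma~\ref{lemma:V-op_classAIII} --- but execute the key step differently. The paper works at the Lie-algebra level: it writes $V=e^X$, deduces from $\tilde\gamma V\tilde\gamma=V^\dag$ and $\tilde\Gamma V\tilde\Gamma=V$ that $X=\left(\begin{smallmatrix}0&Y\\ Y&0\end{smallmatrix}\right)$ with $Y^\dag=-Y$, and identifies $X\in\mathfrak{u}(n)$. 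You instead work at the group level: block-diagonalize $V=\operatorname{diag}(V_+,V_-)$ in the $\tilde\Gamma$-eigenbasis and use the $\tilde\gamma$-relation to force $V_-=V_+^{-1}$, so $V$ is a copy of $V_+\in\mathrm{U}(n)$ (resp.\ $\mathrm{O}(n)$, $\mathrm{Sp}(n)$). The two routes are equivalent --- your basis is precisely the one in which the paper's $X=\sigma_1\otimes Y$ becomes $\operatorname{diag}(Y,-Y)$ --- but yours has a small advantage for class BDI: the exponential parametrization only covers the identity component, so strictly speaking $e^X$ with $X\in\mathfrak{o}(n)$ lands in $\mathrm{SO}(n)$, whereas your group-level argument captures both components of $\mathrm{O}(n)$ directly, which matters for the mod-two invariant $(-1)^\nu=\det V$. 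The paper's version, on the other hand, makes the tangent-space description of the classifying space explicit, matching the style of its treatment of the BdG classes. Your flagged worry about the reduction collapsing to $\mathrm{U}(n)$ rather than an intermediate coset is resolved exactly by your own observation that $V_-$ is determined by $V_+$, so no gap remains.
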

\begin{proof}
    Proposition~\ref{lemma:eff_Dirac_bulk_lim} can be shown in the same way as class A.
    For Proposition~\ref{prop:V_classifying_space}, we parametrize the $V$-operator as $V = e^{X}$ for class AIII.
    We can fix it from the relations in Lemma~\ref{lemma:V-op_classAIII} as follows,
\begin{align}
    X = 
    \begin{pmatrix}
        0 & Y \\ Y & 0
    \end{pmatrix}
    \, , \qquad 
    Y^\dag = -Y
    \, ,
\end{align}
which transforms under the unitary transformation, $X \to U X U^\dag$ with $U \in \mathrm{U}(n) \times \mathrm{U}(n) / \mathrm{U}(n) = \mathrm{U}(n)$.
In other words, $X \in \operatorname{Lie}(\mathrm{U}(n) \times \mathrm{U}(n) / \mathrm{U}(n)) = \mathfrak{u}(n)$.
Hence, the $V$-operator takes a value in the unitary group, which becomes the classifying space of class AIII in the bulk limit,
\begin{align}
    V \in \mathrm{U}(n) \xrightarrow{n \to \infty} C_1 \quad (\text{AIII})
    \, .
\end{align}
For the other chiral classes (class BDI and CII), we can show by replacing the $\mathbb{C}$-matrix with $\mathbb{R}$- and $\mathbb{H}$-matrices that the $V$-operator takes a value in the corresponding classifying space,
\begin{align}
    V \in 
    \begin{cases}
        \mathrm{O}(n) \ \xrightarrow{n \to \infty} R_1 & (\text{BDI}) \\
        \mathrm{Sp}(n) \xrightarrow{n \to \infty} R_5 & (\text{CII}) 
    \end{cases}
\end{align}
\end{proof}

We recall that $\pi_0(R_1) = \mathbb{Z}_2$ and $\pi_0(R_5) = 0$, and the mod-two bulk topological invariant of class (B)DI system denoted by $\nu$ is determined by the determinant of $V$ (see, e.g., \cite{Kitaev:2009mg}),
\begin{align}
    (-1)^{\nu} = \det V
    \, , 
    \label{eq:mod2index}
\end{align}
which would be identified with the mod-two index of the corresponding overlap Dirac operator $\operatorname{ind}(D_{\text{ov}}) \in \mathbb{Z}_2$.
See Sec.~\ref{sec:ind_thm}.

\subsection{BdG class}

There are four Bogoliubov-de Gennes (BdG) classes (class D, DIII, C, CI) described by the following Hamiltonian.
\begin{definition}
We define the $d$-dimensional gapped Hamiltonian of size $2n$,
\begin{align}
    H = 
    \begin{pmatrix}
        \mathsf{A} & \mathsf{C} \\ \mathsf{C}^\dag & - \mathsf{A}^{\text{T}}
    \end{pmatrix}
    \label{eq:BdG_Hamiltonian}
\end{align}
with
\begin{align}
    \mathsf{A} \in \mathsf{H}(n,\mathbb{C})
    \, , \qquad 
    \mathsf{C} \in \mathbb{C}^{n \times n}
    \, ,
\end{align}
which describes four BdG classes,
\begin{subequations}
\begin{align}
    \text{D :} \quad \mathsf{C}^{\text{T}} = - \mathsf{C}
    \, , \qquad
    \text{DIII :} \quad & \mathsf{C}^{\text{T}} = - \mathsf{C} 
    \, , \ \mathsf{A} = 0
    \, , \\
    \text{C :} \quad \mathsf{C}^{\text{T}} = + \mathsf{C}
    \, , \hspace{2.85em} 
    \text{CI :} \quad &  \mathsf{C}^{\text{T}} = + \mathsf{C} 
    \, , \ \mathsf{A} = 0
    \, .    
\end{align}
\end{subequations}
\end{definition}

\subsubsection{Class D}

We consider the gapped Hamiltonian of class D in the form of \eqref{eq:BdG_Hamiltonian} with the condition $\mathsf{C}^{\text{T}} = - \mathsf{C}$.
We define a unitary matrix,
\begin{align}
    U = \frac{1}{\sqrt{2}}
    \begin{pmatrix}
        1 & 1 \\ i & -i
    \end{pmatrix} \otimes \id_n
    \in \mathrm{U}(2n)
    \, ,
    \label{eq:U_basis}
\end{align}
which converts the mass matrix $\gamma = \sigma_3 \otimes \id_n$ to $\tilde{\gamma} = U \gamma U^\dag = \sigma_2 \otimes \id_n$, 
and the Hamiltonian,
\begin{align}
    \tilde{H} = U H U^\dag = i
    \begin{pmatrix}
       \alpha_I + \beta_I & - \alpha_R + \beta_R \\
       \alpha_R + \beta_R & \alpha_I - \beta_I
    \end{pmatrix}
    \, ,
\end{align}
where we denote $\mathsf{A} = \alpha_R + i \alpha_I$, $\mathsf{C} = \beta_R + i \beta_I$ with $\alpha_R, \alpha_I, \beta_R, \beta_I \in \mathbb{R}^{n \times n}$.
We remark that
$\alpha_R^{\text{T}} = \alpha_R$,
$\alpha_I^{\text{T}} = -\alpha_I$,
$\beta_R^{\text{T}} = -\beta_R$,
$\beta_I^{\text{T}} = -\beta_I$,
and hence $M := -i\tilde{H} \in \widetilde{\mathsf{H}}(2n,\mathbb{R}) = \mathfrak{o}(2n)$.

\begin{proposition}
    Proposition~\ref{lemma:eff_Dirac_bulk_lim} and Proposition~\ref{prop:V_classifying_space} hold for class D.
\end{proposition}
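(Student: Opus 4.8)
The plan is to reduce both statements to machinery already in place. For Proposition~\ref{lemma:eff_Dirac_bulk_lim} I would observe that the entire determinant computation of the class A case (Lemma~\ref{lemma:Dirac_det_classA} through Lemma~\ref{lemma:eff_Dirac_det_classA}, followed by the two limits) never used the absence of symmetry: it used only that $\gamma$ is a \emph{diagonal} involution $\operatorname{diag}(\id_{k_1},-\id_{k_2})$ and that $H$ is hermitian. In the original class D basis, $\gamma=\sigma_3\otimes\id_n$ is precisely such an involution with $k_1=k_2=n$, and $H$ in \eqref{eq:BdG_Hamiltonian} is hermitian. Hence the transfer-matrix argument, the identification $T=e^{a\mathcal H}$, the $\tanh$ formula, and the large-scale-then-continuum limits carry over verbatim, yielding $\overline D_{\text{op}}=\tfrac12(1+V)$ and $\overline D_{\text{p}}=V$ with $V=\gamma\operatorname{sgn}H$. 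The class D conditions $\mathsf C^{\text{T}}=-\mathsf C$ and the lower block $-\mathsf A^{\text{T}}$ enter only in the next step.

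For Proposition~\ref{prop:V_classifying_space} I would pass to the $U$-rotated basis of \eqref{eq:U_basis}, where $\tilde\gamma=\sigma_2\otimes\id_n$ and, as recorded above, $M:=-i\tilde H\in\widetilde{\mathsf H}(2n,\mathbb R)=\mathfrak o(2n)$ is real antisymmetric. Two real orthogonal complex structures then appear. First, $\Omega:=-i\tilde\gamma=\left(\begin{smallmatrix}0&-1\\1&0\end{smallmatrix}\right)\otimes\id_n$ is real with $\Omega^{\text{T}}=-\Omega$ and $\Omega^2=-\id$. Second, since $\tilde H=iM$ is hermitian and $H$ is gapped, $-M^2=M^{\text{T}}M$ is real, symmetric, positive definite, so $\operatorname{sgn}\tilde H=iM(-M^2)^{-1/2}=:iJ$ with $J$ real, $J^{\text{T}}=-J$ and $J^2=-\id$; thus $J$ is an orthogonal complex structure, and the flat-band Hamiltonian $\operatorname{sgn}\tilde H$ already lands in $\mathrm O(2n)/\mathrm U(n)$. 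Consequently $\tilde V=\tilde\gamma\operatorname{sgn}\tilde H=-\Omega J$ is a product of real orthogonal matrices, hence real orthogonal.

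To exhibit $V$ in the classifying space in the style of the class A and AIII proofs, I would parametrize $V=e^{X}$. Unitarity gives $X^\dag=-X$, and the relation $\gamma V\gamma=V^\dag$ (valid for every class) gives $\{\gamma,X\}=0$; reality of $\tilde V$ forces $\tilde X:=UXU^\dag$ to be real, whence $\tilde X^{\text{T}}=-\tilde X$, i.e.\ $\tilde X\in\mathfrak o(2n)$, while $\{\tilde\gamma,\tilde X\}=0$ becomes $\{\Omega,\tilde X\}=0$. The anticommutant of the complex structure $\Omega$ inside $\mathfrak o(2n)$ is exactly the $\mathfrak m$-part of the Cartan decomposition $\mathfrak o(2n)=\mathfrak u(n)\oplus\mathfrak m$, where the commutant of $\Omega$ is $\mathfrak u(n)$; this is the tangent space of $\mathrm O(2n)/\mathrm U(n)$. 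Therefore $V$ sweeps out the symmetric space $\mathrm O(2n)/\mathrm U(n)\xrightarrow{n\to\infty}R_2$. Equivalently, one may verify the Cartan-embedding identity $\Omega\tilde V\Omega^{-1}=\tilde V^{-1}$ directly from $\tilde V=-\Omega J$ together with $\Omega^2=J^2=-\id$.

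The main obstacle is entirely in bookkeeping the reality structure: establishing that both $\tilde\gamma$ and $\operatorname{sgn}\tilde H$ are $i$ times real antisymmetric matrices — which rests on the computation $M\in\mathfrak o(2n)$ and on $-M^2$ being positive definite — so that $\tilde V$ is genuinely \emph{real} orthogonal and the relevant symmetric space is $\mathrm O/\mathrm U=R_2$ rather than the complex Grassmannian of class A. Matching the anticommutant of $\Omega$ with the tangent space of $\mathrm O/\mathrm U$ is the one structural identification that must be made carefully.
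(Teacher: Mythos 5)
Your proposal is correct and follows essentially the same route as the paper: the transfer-matrix computation is reused verbatim for Proposition~\ref{lemma:eff_Dirac_bulk_lim}, and for Proposition~\ref{prop:V_classifying_space} you rotate by $U$ of \eqref{eq:U_basis}, show $\tilde V$ is real orthogonal with $\tilde\gamma \tilde V\tilde\gamma=\tilde V^{-1}$, and identify the constrained generator $X$ with $\operatorname{Lie}(\mathrm{O}(2n)/\mathrm{U}(n))$. Your phrasing via the Cartan decomposition of $\mathfrak{o}(2n)$ with respect to the complex structure $\Omega=-i\tilde\gamma$ (and the remark that $\operatorname{sgn}\tilde H$ is itself $i$ times an orthogonal complex structure) is just an invariant repackaging of the paper's explicit block decomposition of a generic element of $\mathfrak{o}(2n)$.
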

\begin{proof}
The proof of Proposition~\ref{lemma:eff_Dirac_bulk_lim} is the same as before.
Applying the bulk extension formalism for class D, we obtain the flat band Hamiltonian from $V = U (\gamma \operatorname{sgn} H) U^\dag = \tilde{\gamma} \operatorname{sgn} \tilde{H}$, which is an orthogonal matrix $V^{\text{T}} = V^{-1}$ with the property $\tilde{\gamma} V \tilde{\gamma} = V^{-1}$.
Parametrizing $V = e^X$, the matrix $X$ is given in the form of
\begin{align}
    X =
    \begin{pmatrix}
        \alpha & \beta \\ \beta & - \alpha
    \end{pmatrix}
\end{align}
where $\alpha^{\text{T}} = - \alpha$, $\beta^{\text{T}} = - \beta$.
On the other hand, a generic $\mathbb{R}$-skew-symmetric matrix $Z \in \mathfrak{o}(2n)$ has a decomposition,
\begin{align}
    Z = 
    \begin{pmatrix}
        \alpha + \delta & \beta + \beta' \\ \beta - \beta' & - \alpha + \delta
    \end{pmatrix}
    = 
    \begin{pmatrix}
        \alpha & \beta \\ \beta & - \alpha
    \end{pmatrix}
    + 
    \begin{pmatrix}
        \delta & \beta' \\ - \beta' & \delta 
    \end{pmatrix}
    \, ,
\end{align}
where $\delta^{\text{T}} = - \delta$, $\beta^{\prime \text{T}} = \beta'$.
Writing the second matrix as $\id_2 \otimes \delta + i \sigma_2 \otimes \beta'$, it is isomorphic to an anti-hermitian matrix, which is an element of the Lie algebra $\mathfrak{u}(n)$.
Hence, we have $X \in \operatorname{Lie}(\mathrm{O}(2n)/\mathrm{U}(n))$, which shows that the $V$-operator takes a value in the classifying space of class D in the bulk limit,
\begin{align}
    V \in \frac{\mathrm{O}(2n)}{\mathrm{U}(n)} \ \xrightarrow{n \to \infty} \ R_2 
    \quad (\text{class D})
    \, .
\end{align}
\end{proof}
\begin{remark}
    Recalling $\pi_0(R_2) = \mathbb{Z}_2$, the mod-two topological invariant is given in the same way as class BDI~\eqref{eq:mod2index}.
\end{remark}

\subsubsection{Class C}

Let us discuss the class C system described by the BdG Hamiltonian~\eqref{eq:BdG_Hamiltonian} with $\mathsf{C}^{\text{T}} = \mathsf{C}$.

We apply the same basis change matrix \eqref{eq:U_basis}, and define an $\mathbb{H}$-matrix of size $n$ as follows,
\begin{align}
    \check{H}_{jk} = i
    \begin{pmatrix}
     \alpha_{I,jk} - i \beta_{I,jk} & - \alpha_{R,jk} + i \beta_{R,jk} \\
     \alpha_{R,jk} + i \beta_{R,jk} & \alpha_{I,jk} + i \beta_{I,jk}
    \end{pmatrix}
    \in i \mathbb{H}
    \, , \quad 
    j, k = 1,\ldots, n
    \, ,
\end{align}
where
$\alpha_R^{\text{T}} = \alpha_R$,
$\alpha_I^{\text{T}} = -\alpha_I$,
$\beta_R^{\text{T}} = \beta_R$,
$\beta_I^{\text{T}} = \beta_I$.
We then define $M := -i\check{H} \in \mathbb{H}^{n \times n}$.
In fact, $M \in \mathfrak{sp}(n)$.
\if0
\begin{lemma}
$M \in \mathfrak{sp}(n)$
\end{lemma}
\begin{proof}
We denote the conjugate transpose ($\mathbb{H}$-dual) of a $\mathbb{H}$-matrix $N$ by $N^\dag = \overline{N}^{\text{T}}$ where the $\mathbb{H}$-conjugation is given by
\begin{align}
    \bar{x} := 
    \begin{pmatrix}
    x_0 - i x_1 & - x_2 - i x_3 \\
    x_2 - i x_3 & x_0 + i x_1
    \end{pmatrix}
    \quad \text{for} \quad 
    x = 
    \begin{pmatrix}
    x_0 + i x_1 & x_2 + i x_3 \\
    - x_2 + i x_3 & x_0 - i x_1
    \end{pmatrix}
    \in \mathbb{H}
    \, .
\end{align}
Then, we have $M^\dag = - M$.
\end{proof}
\fi

\begin{proposition}
    Proposition~\ref{lemma:eff_Dirac_bulk_lim} and Proposition~\ref{prop:V_classifying_space} hold for class C.
\end{proposition}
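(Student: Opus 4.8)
The plan is to follow the template already established for classes A and D, treating the two propositions in turn. For Proposition~\ref{lemma:eff_Dirac_bulk_lim} I would argue exactly as in the class A analysis. The determinant computation of Lemma~\ref{lemma:Dirac_det_classA} and the limits taken in Lemma~\ref{lemma:eff_Dirac_det_classA} only use the $\mathbb{C}$-block structure of the Hamiltonian with $k_1 = k_2 = n$ and the mass matrix $\gamma = \sigma_3 \otimes \id_n$; the class C condition $\mathsf{C}^{\text{T}} = \mathsf{C}$ merely restricts the entries without altering the transfer-matrix argument. Hence the large-scale limit $Na \to \infty$ and the continuum limit $a \to 0$ again produce $\overline{D}_{\text{op}} = \frac{1}{2}(1+V)$ and $\overline{D}_{\text{p}} = V$, with $V = \tilde{\gamma}\operatorname{sgn}\tilde{H}$ after the basis change \eqref{eq:U_basis}, where $\tilde{\gamma} = \sigma_2 \otimes \id_n$ as for class D.

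For Proposition~\ref{prop:V_classifying_space} the decisive input is the observation already recorded above that $M := -i\check{H} \in \mathfrak{sp}(n)$. I would first establish that $V$ is symplectic. Since $\tilde{H} = i\check{H}$ is hermitian with $\check{H}$ quaternionic self-dual, $\operatorname{sgn}\tilde{H}$ inherits the quaternionic structure, and combined with $\tilde{\gamma} = \sigma_2 \otimes \id_n$ one checks that $V = \tilde{\gamma}\operatorname{sgn}\tilde{H}$ commutes with the quaternionic structure $j$ that cuts out $\mathrm{Sp}(n) \subset \mathrm{U}(2n)$: concretely, I would verify that both $\tilde{\gamma}$ and $\operatorname{sgn}\tilde{H}$ anticommute with $j$, so their product commutes with it. Together with the unitarity $V^\dag = V^{-1}$ this yields $V \in \mathrm{Sp}(n)$, and, exactly as in every earlier class, $V$ obeys the involution relation $\tilde{\gamma} V \tilde{\gamma} = V^{-1}$.

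Next I would parametrize $V = e^X$ and read off the form of $X$ from the three constraints $X^\dag = -X$, $\{\tilde{\gamma}, X\} = 0$, and $X \in \mathfrak{sp}(n)$, paralleling the class D computation. The key step is then the Cartan-type decomposition $\mathfrak{sp}(n) = \mathfrak{u}(n) \oplus \mathfrak{p}$: I would write a generic $Z \in \mathfrak{sp}(n)$ as a sum of a part of the same block shape as $X$, which anticommutes with $\tilde{\gamma}$, and a complementary part commuting with $\tilde{\gamma}$, and then bundle the latter into a single complex matrix to identify it with the stabilizer algebra $\mathfrak{u}(n)$, just as the combination $\id_2 \otimes \delta + i\sigma_2 \otimes \beta'$ was identified with $\mathfrak{u}(n)$ in the class D argument. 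This gives $X \in \operatorname{Lie}(\mathrm{Sp}(n)/\mathrm{U}(n))$, hence $V \in \mathrm{Sp}(n)/\mathrm{U}(n) \xrightarrow{n \to \infty} R_6$, in agreement with Table~\ref{tab:classification}.

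The step I expect to be the main obstacle is the bookkeeping of this quaternionic decomposition. Unlike the real case of class D, one must track the embedding $\mathrm{U}(n) \hookrightarrow \mathrm{Sp}(n)$ carefully and verify that the subspace of $\mathfrak{sp}(n)$ anticommuting with $\tilde{\gamma}$ has exactly the symmetry type and dimension of the tangent space to $\mathrm{Sp}(n)/\mathrm{U}(n)$, rather than merely being contained in it, and that the complementary subspace closes into $\mathfrak{u}(n)$ and not a larger subalgebra. This is precisely where the symplectic constraint $\mathsf{C}^{\text{T}} = \mathsf{C}$ and the signs entering $\check{H}$ must be used, so I would organize the computation around the explicit $2 \times 2$ quaternionic block form of $\check{H}$ to make the matching transparent.
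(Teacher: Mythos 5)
Your proposal is correct and follows essentially the same route as the paper: Proposition~\ref{lemma:eff_Dirac_bulk_lim} is dispatched verbatim as in class A, and for Proposition~\ref{prop:V_classifying_space} the paper likewise works in the quaternionic basis obtained from \eqref{eq:U_basis}, writes $V=e^X$ with $X$ in explicit $2\times 2$ quaternionic block form, and matches it against the decomposition of a generic $\mathfrak{sp}(n)$ element into a piece anticommuting with $\tilde{\gamma}$ plus a $\mathfrak{u}(n)$ stabilizer piece, concluding $X\in\operatorname{Lie}(\mathrm{Sp}(n)/\mathrm{U}(n))$ and $V\in\mathrm{Sp}(n)/\mathrm{U}(n)\to R_6$. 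The ``bookkeeping'' you flag as the main obstacle is exactly what the paper resolves by the explicit blockwise computation you propose, so there is no substantive divergence.
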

\begin{proof}
    The proof of Proposition~\ref{lemma:eff_Dirac_bulk_lim} is the same as before.
    In this case, we have the $\mathbb{H}$-valued $V$-operator $V = \tilde{\gamma} \operatorname{sgn} \check{H}$, and hence the flat band Hamiltonian is given by $\overline{H} = \tilde{\gamma} V$.
    Parametrizing $V = e^X$, each element of $X$ is given by
    \begin{align}
        X_{jk} = i
        \begin{pmatrix}
            \alpha_{jk} & \beta_{ji} \\ \beta_{jk} & - \alpha_{jk}
        \end{pmatrix}
        \in \mathbb{H}
    \end{align}
    where $\alpha = (\alpha_{jk})_{j,k = 1,\ldots,n}$ and $\beta = (\beta_{jk})_{j,k = 1,\ldots,n}$ are $\mathbb{R}$-symmetric matrices. 
    Compared with a generic $\mathfrak{sp}(n)$ element
    \begin{align}
        Z_{jk} = 
        \begin{pmatrix}
            \delta_{jk} + i \alpha_{jk} & \beta'_{jk} + i \beta_{jk} \\ 
            - \beta'_{jk} + i \beta_{jk} & \delta_{jk} - i \alpha_{jk} 
        \end{pmatrix}
        = 
        i
        \begin{pmatrix}
            \alpha_{jk} & \beta_{jk} \\ 
            \beta_{jk} & - \alpha_{jk} 
        \end{pmatrix}
        + 
        \begin{pmatrix}
            \delta_{jk} & \beta'_{jk} \\ 
            - \beta'_{jk} & \delta_{jk}
        \end{pmatrix}
        \, ,
    \end{align}
    with $\delta^{\text{T}} = - \delta$ and $\beta^{\prime\text{T}} = \beta'$, we have $X \in \operatorname{Lie}(\mathrm{Sp}(n)/\mathrm{U}(n))$.
    Hence, the $V$-operator takes a value in the classifying space of class C in the bulk limit,
    \begin{align}
    V \in \frac{\mathrm{Sp}(n)}{\mathrm{U}(n)} \ \xrightarrow{n \to \infty} \ R_6 
    \quad (\text{class C})
    \, .
    \end{align}
\end{proof}

\subsubsection{Class DIII}

For class DIII, the Hamiltonian is given by
\begin{align}
    H = 
    \begin{pmatrix}
        0 & \mathsf{C} \\ \mathsf{C}^\dag & 0
    \end{pmatrix}
    \, , \qquad 
    \mathsf{C}^{\text{T}} = - \mathsf{C}
    \, ,
    \label{eq:Ham_DIII}
\end{align}
which has $\mathcal{C}$ and $\mathcal{T}$ symmetries, such that $\mathcal{C}^2 = +1$, $\mathcal{T}^2 = -1$ (See Table~\ref{tab:classification}).
Provided that the Hamiltonian has a gap, we consider the matrix $\mathsf{C}$ of size $2n$, $\mathsf{C} \in \mathbb{C}^{2n \times 2n}$.
Hence, in this case, we may apply the following form of the symmetry matrices,
\begin{subequations}
\begin{align}
    C = \sigma_1 \otimes \sigma_3 \otimes \id_{n}
    \, , \qquad &
    T = i\sigma_2 \otimes \sigma_3 \otimes \id_{n}
    \, , \\ 
    \Gamma = \sigma_3 \otimes \id_2 \otimes \id_n
    \, , \qquad &
    \gamma = \sigma_1 \otimes \sigma_1 \otimes \id_n
    \, .
\end{align}
\end{subequations}
\begin{lemma}\label{lemma:CTchi_DIII}
    We have the $V$-operator, $V = \gamma \operatorname{sgn}H$, which behaves as follows,
    \begin{align}
        C V C^{-1} = T V T^{-1} = V^*
        \, , \quad 
        \Gamma V \Gamma = V
        \, , \quad 
        \gamma V \gamma = V^\dag
        \, .
    \end{align}
\end{lemma}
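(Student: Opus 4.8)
The plan is to derive every identity directly from the definition $V = \gamma \operatorname{sgn} H$, reducing each one to a combination of how the mass matrix $\gamma$ (anti)commutes with the symmetry operators and how $\operatorname{sgn}$ transforms under the defining relations of $H$. Before touching $\operatorname{sgn} H$ I would first record the purely algebraic data from the explicit representatives. A direct tensor-product computation gives $\{\gamma, C\} = 0$, $[\gamma, T] = 0$, and $\{\gamma, \Gamma\} = 0$, together with $\gamma^2 = 1$, $\gamma^\dag = \gamma$, and $\gamma^* = \gamma$ (the last because all Pauli factors in $\gamma = \sigma_1 \otimes \sigma_1 \otimes \id_n$ are real). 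Since $C$, $T$, $\Gamma$ are unitary and square to $\pm 1$, these relations translate into $C \gamma C^{-1} = -\gamma$, $T \gamma T^{-1} = +\gamma$, and $\Gamma \gamma \Gamma = -\gamma$. On the Hamiltonian side I would use the defining class-DIII relations $C H C^{-1} = -H^*$ and $T H T^{-1} = +H^*$ from Definition~\ref{def:CT_Ham_action}, together with the chirality $\Gamma H \Gamma = -H$ that follows from $\{\Gamma, H\} = 0$.

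The second ingredient is the functional calculus of $\operatorname{sgn}$ on the hermitian operator $H$: for invertible $A$ one has $\operatorname{sgn}(A H A^{-1}) = A \operatorname{sgn}(H) A^{-1}$, while $\operatorname{sgn}(-H) = -\operatorname{sgn}(H)$ and $\operatorname{sgn}(H^*) = (\operatorname{sgn} H)^*$, the last identity resting on the fact that $H^*$ is again hermitian and that $\operatorname{sgn}(H) = H/\sqrt{H^2}$ is assembled from real-coefficient operations that intertwine complex conjugation. Granting these, each of the four relations collapses to a single line. For chirality,
\begin{align}
    \Gamma V \Gamma = \Gamma \gamma \operatorname{sgn}(H) \Gamma = (\Gamma \gamma \Gamma)(\Gamma \operatorname{sgn}(H) \Gamma) = (-\gamma)\operatorname{sgn}(-H) = \gamma \operatorname{sgn}(H) = V \, .
\end{align}
For the mass matrix, $\gamma V \gamma = \gamma^2 \operatorname{sgn}(H)\gamma = \operatorname{sgn}(H)\gamma = (\gamma \operatorname{sgn} H)^\dag = V^\dag$, using $\gamma^\dag = \gamma$ and the self-adjointness of $\operatorname{sgn} H$. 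For charge conjugation,
\begin{align}
    C V C^{-1} = (C \gamma C^{-1})(C \operatorname{sgn}(H) C^{-1}) = (-\gamma)\operatorname{sgn}(-H^*) = \gamma (\operatorname{sgn} H)^* = V^* \, ,
\end{align}
where the final step uses $\gamma^* = \gamma$ so that $\gamma (\operatorname{sgn} H)^* = \gamma^*(\operatorname{sgn} H)^* = V^*$. The time-reversal identity is obtained identically, now with $T \gamma T^{-1} = +\gamma$ and $\operatorname{sgn}(+H^*)$ supplying the compensating sign, giving $T V T^{-1} = \gamma (\operatorname{sgn} H)^* = V^*$.

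I expect the only genuinely delicate point to be the interchange of $\operatorname{sgn}$ with complex conjugation, namely $\operatorname{sgn}(H^*) = (\operatorname{sgn} H)^*$, together with the companion fact that $H^*$ is hermitian whenever $H$ is; everything else is either the finite Pauli-matrix bookkeeping of the first step or the standard spectral identity $f(A H A^{-1}) = A f(H) A^{-1}$. Since the gap assumption guarantees $0 \notin \operatorname{Spec}(H)$, the functional calculus for $\operatorname{sgn}$ is unambiguous and no convergence or domain subtleties arise in finite dimension, so the whole argument is a short chain of substitutions once the two ingredients above are in place.
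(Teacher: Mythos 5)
Your proof is correct and follows essentially the same route as the paper: reduce everything to the (anti)commutation of $\gamma$ with $C$, $T$, $\Gamma$, the defining relations $CHC^{-1}=-H^*$, $THT^{-1}=H^*$, $\{\Gamma,H\}=0$, and the functional-calculus identities for $\operatorname{sgn}$, which you spell out more carefully than the paper's one-line proof. One point worth noting: the paper's proof cites $[\Gamma,\gamma]=0$, whereas you use $\{\Gamma,\gamma\}=0$; your version is the one consistent with the explicit matrices $\Gamma=\sigma_3\otimes\id_2\otimes\id_n$, $\gamma=\sigma_1\otimes\sigma_1\otimes\id_n$ and is in fact required to obtain $\Gamma V\Gamma=+V$ (a vanishing commutator would give $\Gamma V\Gamma=-V$), so the paper's stated relation appears to be a typo that your argument silently corrects.
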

\begin{proof}
It follows from Definition~\ref{def:CT_Ham_action} together with the relations $\{ C, \gamma \} = 0$, $[T, \gamma] = 0$, $[\Gamma, \gamma] = 0$.
\end{proof}
\begin{proposition}
    Proposition~\ref{lemma:eff_Dirac_bulk_lim} and Proposition~\ref{prop:V_classifying_space} hold for class DIII.
\end{proposition}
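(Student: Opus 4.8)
The plan is to follow the now-established two-step template used for classes A, AIII, D, and C, since the structure of the argument is uniform across symmetry classes. First I would dispose of Proposition~\ref{lemma:eff_Dirac_bulk_lim} by the same transfer-matrix computation as in Sec.~\ref{sec:extension_A}: the derivation of the effective Dirac operator $\overline{D}_{\text{p}} = V = \gamma\operatorname{sgn}H$ never used hermiticity structure beyond $\gamma^2 = 1$ and the gap assumption, so it carries over verbatim. The real content is Proposition~\ref{prop:V_classifying_space}, namely identifying the classifying space as $R_3 = \mathrm{U}/\mathrm{Sp}$ (see Table~\ref{tab:classification}).

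For the classifying-space identification I would start from Lemma~\ref{lemma:CTchi_DIII}, which already records the four constraints on $V$: unitarity together with $\gamma V \gamma = V^\dag$, the chiral relation $\Gamma V \Gamma = V$, and the two reality conditions $C V C^{-1} = T V T^{-1} = V^*$. The plan is to parametrize $V = e^{X}$ and translate each symmetry relation into a linear constraint on the Lie-algebra element $X$. Unitarity gives $X^\dag = -X$; the relation $\gamma V \gamma = V^\dag$ gives $\{\gamma, X\} = 0$; the chiral relation gives $[\Gamma, X] = 0$; and $CVC^{-1} = V^*$, $TVT^{-1} = V^*$ give $CXC^{-1} = X^*$ and $TXT^{-1} = X^*$ (equivalently $\mathcal{C}$- and $\mathcal{T}$-invariance of $X$). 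Following the pattern of the class~D and class~C proofs, I would then solve these constraints explicitly in the block basis fixed by $\gamma = \sigma_1\otimes\sigma_1\otimes\id_n$, $\Gamma = \sigma_3\otimes\id_2\otimes\id_n$, $C = \sigma_1\otimes\sigma_3\otimes\id_n$, $T = i\sigma_2\otimes\sigma_3\otimes\id_n$, reducing $X$ to a single independent block subject to a symmetry/antisymmetry condition, and recognize the resulting reduced algebra as $\operatorname{Lie}(\mathrm{U}(2n)/\mathrm{Sp}(n))$. This then yields
\begin{align}
    V \in \frac{\mathrm{U}(2n)}{\mathrm{Sp}(n)} \ \xrightarrow{n \to \infty} \ R_3 \quad (\text{class DIII})
    \, .
\end{align}

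I expect the main obstacle to be the bookkeeping in the last step: unlike the $\mathrm{O}/\mathrm{U}$ and $\mathrm{Sp}/\mathrm{U}$ cases, where the quotient structure emerged from pairing a skew-symmetric block with a symmetric one into a complex (resp. quaternionic) block, here I must combine the anticommutation with $\gamma$, commutation with $\Gamma$, and \emph{two} reality conditions simultaneously, and then correctly package the surviving degrees of freedom as the symmetric space $\mathrm{U}/\mathrm{Sp}$. The delicate point is verifying that the residual constraint on the independent block is precisely the self-duality condition defining the symplectic subgroup in the denominator (so that the stabilizer is $\mathrm{Sp}(n)$ rather than, say, $\mathrm{O}$ or $\mathrm{U}$); getting the quaternionic reality structure to come out on the correct factor is where the computation must be done carefully rather than by analogy.
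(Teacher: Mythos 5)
Your plan coincides with the paper's proof: the first proposition is dispatched by the same transfer-matrix argument, and the classifying-space identification proceeds exactly as you describe, by writing $V=e^{X}$, using $[\Gamma,X]=0$ and $X^\dag=-X$ to reduce to $X=i\,\mathrm{diag}(Y,\tilde{Y})$ with $Y,\tilde{Y}$ hermitian, and then imposing the $\gamma$- and $C$-constraints to relate and further restrict the blocks. The ``delicate point'' you flag is resolved in the paper by composing the two block relations $\Sigma_1 Y\Sigma_1=-\tilde{Y}$ and $\Sigma_3 Y\Sigma_3=-\tilde{Y}^{*}$ (with $\Sigma_i=\sigma_i\otimes\id_n$) to obtain $\Sigma_2 Y\Sigma_2=Y^{*}$, i.e.\ the quaternion self-duality placing $Y\in\mathsf{H}(n,\mathbb{H})=\operatorname{Lie}(\mathrm{U}(2n)/\mathrm{Sp}(n))$, which is precisely the packaging step you anticipated.
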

\begin{proof}
    The proof of Proposition~\ref{lemma:eff_Dirac_bulk_lim} is the same as before.
    We parametrize $V = e^X$.
    From the behavior under the $\Gamma$-matrix shown in Lemma~\ref{lemma:CTchi_DIII}, we have
    $X = i \begin{pmatrix}
    Y & 0 \\ 0 & \tilde{Y}
    \end{pmatrix}$
    where $Y$, $\tilde{Y} \in {\mathsf{H}}(2n,\mathbb{C})$.
    We denote $\Sigma_i = \sigma_i \otimes \id_n$ for $i = 1,2,3$.
    Then, from the other relations, we have 
    \begin{align}
        \begin{cases}
        \Sigma_3 Y \Sigma_3 = -\tilde{Y}^* \\
        \Sigma_3 \tilde{Y} \Sigma_3 = -{Y}^* 
        \end{cases}
        \, , \qquad
        \begin{cases}
        \Sigma_1 Y \Sigma_1 = -\tilde{Y} \\
        \Sigma_1 \tilde{Y} \Sigma_1 = -{Y} 
        \end{cases}        
        \, .
    \end{align}
    Hence, we have $\Sigma_2 Y \Sigma_2 = Y^*$ and $\Sigma_2 \tilde{Y} \Sigma_2 = \tilde{Y}^*$, from which we deduce that they are isomorphic to $\mathbb{H}$-self-conjugate matrices.
    Recalling $\mathsf{H}(n,\mathbb{H}) = \operatorname{Lie}(\mathrm{U}(2n)/\mathrm{Sp}(n))$, the $V$-operator takes a value in the classifying space of class DIII in the bulk limit,
    \begin{align}
    V \in \frac{\mathrm{U}(2n)}{\mathrm{Sp}(n)} \ \xrightarrow{n \to \infty} \ R_3
    \, .
    \end{align}
\end{proof}

\subsubsection{Class CI}

For class CI, we again have the Hamiltonian in the form \eqref{eq:Ham_DIII} with a symmetric $\mathsf{C}$, $\mathsf{C}^{\text{T}} = \mathsf{C}$.
It has $\mathcal{C}$ and $\mathcal{T}$ symmetries, such that $\mathcal{C}^2 = -1$, $\mathcal{T}^2 = +1$ (See Table~\ref{tab:classification}), and we consider the matrix $\mathsf{C}$ of size $2n$, $\mathsf{C} \in \mathbb{C}^{2n \times 2n}$.
In this case, we may apply the following form of the symmetry matrices,
\begin{subequations}\label{eq:symmetry_matrices_CI}
\begin{align}
    C = i\sigma_2 \otimes \sigma_1 \otimes \id_{n}
    \, , \qquad &
    T = \sigma_1 \otimes \sigma_1 \otimes \id_{n}
    \, , \\ 
    \Gamma = \sigma_3 \otimes \id_2 \otimes \id_n
    \, , \qquad &
    \gamma = \sigma_2 \otimes \sigma_2 \otimes \id_n
    \, .
\end{align}
\end{subequations}
\begin{proposition}
    Proposition~\ref{lemma:eff_Dirac_bulk_lim} and Proposition~\ref{prop:V_classifying_space} hold for class CI.
\end{proposition}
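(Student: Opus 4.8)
The plan is to mirror the class DIII argument, since the overlap operator $V = \gamma \operatorname{sgn} H$ obeys the same abstract transformation laws here; the distinction between the two classes will be located entirely in the explicit form of the symmetry matrices~\eqref{eq:symmetry_matrices_CI}. The proof of Proposition~\ref{lemma:eff_Dirac_bulk_lim} is identical to the previous cases, so all the content lies in Proposition~\ref{prop:V_classifying_space}.

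First I would record the behaviour of $V$ under the discrete symmetries. Using Definition~\ref{def:CT_Ham_action} together with $\{C,\gamma\}=0$, $[T,\gamma]=0$, $\{\Gamma,\gamma\}=0$, and $\{\Gamma,H\}=0$ (all immediate from~\eqref{eq:symmetry_matrices_CI} and $\gamma^*=\gamma$), one checks exactly as in Lemma~\ref{lemma:CTchi_DIII} that
\begin{align}
    C V C^{-1} = T V T^{-1} = V^*
    \, , \qquad
    \Gamma V \Gamma = V
    \, , \qquad
    \gamma V \gamma = V^\dag
    \, .
\end{align}
Next I would parametrize $V = e^X$ with $X^\dag = -X$. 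Since $\Gamma V \Gamma = V$ gives $[\Gamma, V]=0$ and $\Gamma = \sigma_3 \otimes \id_2 \otimes \id_n$, the generator is block-diagonal, $X = i\operatorname{diag}(Y,\tilde Y)$ with $Y,\tilde Y \in \mathsf{H}(2n,\mathbb{C})$.

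Writing $\Sigma_i = \sigma_i \otimes \id_n$, so that $\gamma = \sigma_2 \otimes \Sigma_2$, the relation $\gamma V \gamma = V^\dag$ (equivalently $\{\gamma,X\}=0$) forces $\Sigma_2 Y \Sigma_2 = -\tilde Y$, while $CVC^{-1} = TVT^{-1} = V^*$ (both $C$ and $T$ carry $\Sigma_1$ in the second factor) forces $\Sigma_1 Y \Sigma_1 = -\tilde Y^*$. Eliminating $\tilde Y$ and using $\Sigma_2\Sigma_1 = -i\Sigma_3 = -\Sigma_1\Sigma_2$ together with $\Sigma_2^* = -\Sigma_2$ gives the single reality condition
\begin{align}
    \Sigma_3 Y \Sigma_3 = Y^*
    \, .
\end{align}

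Finally, this condition says that $Y$ commutes with the anti-unitary $J := \Sigma_3 K$, which satisfies $J^2 = +1$ because $\Sigma_3$ is real; hence $J$ defines a real structure, and the hermitian matrices $Y$ compatible with it are isomorphic to $\mathsf{H}(2n,\mathbb{R})$, which a dimension count ($2n^2+n$ on each side) confirms. Recalling $\mathsf{H}(2n,\mathbb{R}) = \operatorname{Lie}(\mathrm{U}(2n)/\mathrm{O}(2n))$ and that $\tilde Y$ is fixed by $Y$, I conclude $X \in \operatorname{Lie}(\mathrm{U}(2n)/\mathrm{O}(2n))$, so that $V \in \mathrm{U}(2n)/\mathrm{O}(2n) \xrightarrow{n\to\infty} R_7$, as claimed. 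I expect the main obstacle to be precisely this last identification: the abstract relations on $V$ coincide with those of class DIII, so one must track the signs carefully to see that here the combined conjugation produces $\Sigma_3$ (hence $J^2 = +1$ and the orthogonal $\mathrm{O}$-structure of $R_7$), rather than the $\Sigma_2$ of class DIII (which would give $J^2 = -1$ and the symplectic $\mathrm{Sp}$-structure of $R_3$).
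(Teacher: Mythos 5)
Your proof is correct and follows essentially the same route as the paper: the same symmetry relations on $V$, the same block-diagonal parametrization $X = i\operatorname{diag}(Y,\tilde Y)$ forced by $[\Gamma,V]=0$, and the same conclusion $V \in \mathrm{U}(2n)/\mathrm{O}(2n) \to R_7$. The only difference is that you make the final step explicit — combining the two constraints into $\Sigma_3 Y \Sigma_3 = Y^*$ and exhibiting the real structure $J=\Sigma_3 K$ with $J^2=+1$ — where the paper simply asserts ``from which we deduce that $Y,\tilde Y$ are isomorphic to elements of $\mathsf{H}(2n,\mathbb{R})$''; this is a welcome clarification, not a different argument.
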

\begin{proof}
    The proof of Proposition~\ref{lemma:eff_Dirac_bulk_lim} is the same as before.
    Having $V = \gamma \operatorname{sgn} H$, we have the same relations as shown in Lemma~\ref{lemma:CTchi_DIII} with the symmetry matrices~\eqref{eq:symmetry_matrices_CI}.
    As in the case of class DIII, under the parametrization $V = e^X$, we have
    $X = i \begin{pmatrix}
    Y & 0 \\ 0 & \tilde{Y}
    \end{pmatrix}$
    where $Y$, $\tilde{Y} \in {\mathsf{H}}(2n,\mathbb{C})$.
    From the other relations, we have 
    \begin{align}
        \begin{cases}
        \Sigma_1 Y \Sigma_1 = -\tilde{Y}^* \\
        \Sigma_1 \tilde{Y} \Sigma_1 = -{Y}^* 
        \end{cases}
        \, , \qquad
        \begin{cases}
        \Sigma_2 Y \Sigma_2 = -\tilde{Y} \\
        \Sigma_2 \tilde{Y} \Sigma_2 = -{Y} 
        \end{cases}        
        \, ,
    \end{align}
    from which we deduce that $Y, \tilde{Y} \in \mathsf{H}(2n,\mathbb{R})$.
    Recalling $\mathsf{H}(n,\mathbb{H}) = \operatorname{Lie}(\mathrm{U}(n)/\mathrm{O}(n))$, the $V$-operator takes a value in the classifying space of class CI in the bulk limit,
    \begin{align}
        V \in \frac{\mathrm{U}(2n)}{\mathrm{O}(2n)} \ \xrightarrow{n \to \infty} \ R_7
    \, .
    \end{align}
\end{proof}

\section{Overlap Dirac operator}\label{sec:GW_relation}

In this Section, we discuss the symmetry of the overlap Dirac operator of class $\mathscr{C}$,
\begin{align}
    D \equiv D_{\text{ov}} = \frac{1}{a} (1 + V)
    \, , \qquad 
    V = \gamma \operatorname{sgn} (H) \in S_{\mathscr{C}} 
    \, ,
    \label{eq:Dov_GWrel}
\end{align}
where we change the normalization of the operator for simplicity: We denote the lattice spacing parameter by $a$ with mass dimension $[a] = -1$.

\subsection{Ginsparg--Wilson relation}

First of all, it is clear from the unitarity of the $V$-operator, $V^\dag = V^{-1}$, that the overlap operator obeys the following relation, that we call Ginsparg--Wilson (GW) relation.
\begin{proposition}[Bietenholz--Nishimura~\cite{Bietenholz:2000ca}]
The overlap Dirac operator obeys Ginsparg--Wilson relation,
\begin{align}
    D + D^\dag = a D^\dag D = a D D^\dag
    \, .
    \label{eq:GW_rel_antihermite}
\end{align}
\end{proposition}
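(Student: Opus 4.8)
The plan is to reduce the entire statement to the unitarity of $V$, which has already been established in the excerpt: the operator $V = \gamma \operatorname{sgn} H$ satisfies $V^\dag = V^{-1}$ because $\gamma$ and $\operatorname{sgn} H$ are hermitian with $\gamma^2 = (\operatorname{sgn} H)^2 = 1$. Starting from the definition $D = \frac{1}{a}(1+V)$, I would first record $D^\dag = \frac{1}{a}(1 + V^\dag)$, and then compute each of the three quantities appearing in \eqref{eq:GW_rel_antihermite} separately, tracking the normalization factor $1/a$ consistently throughout.

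For the left-hand side the computation is immediate,
\begin{align}
    D + D^\dag = \frac{1}{a}\left( 2 + V + V^\dag \right)
    \, .
\end{align}
For the right-hand sides I would expand the products and invoke unitarity $V^\dag V = V V^\dag = 1$ to cancel the quadratic term,
\begin{align}
    a D^\dag D = \frac{1}{a}\left( 1 + V^\dag \right)\left( 1 + V \right)
    = \frac{1}{a}\left( 2 + V + V^\dag \right)
    \, ,
\end{align}
and identically $a D D^\dag = \frac{1}{a}(1+V)(1+V^\dag) = \frac{1}{a}(2 + V + V^\dag)$, since again $V V^\dag = 1$. Comparing the three resulting expressions closes the argument.

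Since this is a one-line algebraic identity, there is no genuine obstacle; the only substantive input is the unitarity of $V$, and the sole thing to watch is that the quadratic terms $V^\dag V$ and $V V^\dag$ are genuinely the identity (not merely equal to each other). I would emphasize that the relation therefore holds for \emph{every} symmetry class simultaneously, because Proposition~\ref{prop:V_classifying_space} guarantees $V \in S_{\mathscr{C}}$ is unitary irrespective of the class, so no class-by-class verification is required. This is the non-hermitian counterpart of the chiral GW relation $\gamma D + D \gamma = a D \gamma D$; the latter instead demands the stronger input $\gamma V \gamma = V^\dag$ (equivalently $V \gamma V = \gamma$) to match the $2\gamma$ terms, whereas the present relation needs only unitarity.
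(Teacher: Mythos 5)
Your proposal is correct and follows exactly the paper's route: the paper simply notes that \eqref{eq:GW_rel_antihermite} ``is clear from the unitarity of the $V$-operator,'' and your computation $aD^\dag D = \frac{1}{a}(1+V^\dag)(1+V) = \frac{1}{a}(2+V+V^\dag) = D+D^\dag$ (and likewise for $aDD^\dag$) is precisely the one-line expansion being alluded to. Your closing remark correctly identifies that only unitarity of $V$ is needed here, in contrast to the $\gamma$-hermiticity input required for the chiral version \eqref{eq:GW_rel_chiral}.
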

\begin{remark}
    GW relation is originally formulated as ``a remnant of chiral symmetry''~\cite{Ginsparg:1981bj}, and hence the relation shown in~\eqref{eq:GW_rel_chiral} is usually called GW relation.
\end{remark}

We remark that the RHS of \eqref{eq:GW_rel_antihermite} is suppressed in the continuum limit $a \to 0$, from which we deduce a simplified relation, $D + D^\dag = 0$.
Namely, $D$ becomes anti-hermitian, $D^\dag = - D$ in this limit, which is a generic property of gapless Dirac operators.
From this point of view, GW relation~\eqref{eq:GW_rel_antihermite} is interpreted as a non-linear deformation of the anti-hermiticity of gapless Dirac operator.

\subsubsection{Chiral symmetry}

We then discuss GW relation of the overlap operator with additional symmetries.
We use the parametrization $V = e^X$ again.
For the class with the chiral symmetry in the gapless limit (e.g., class A), we have $\{ \gamma, X \} = 0$, which gives rise to the $\gamma$-hermiticity, $\gamma D \gamma = D^\dag$.
Hence, we may rewrite GW relation \eqref{eq:GW_rel_antihermite} as follows.
\begin{proposition}[Neuberger~\cite{Neuberger:1998wv}]
The overlap operator for the class having the chiral symmetry in the gapless limit obeys Ginsparg--Wilson relation,
\begin{align}
    \gamma D + D \gamma = a D \gamma D 
    \, .
    \label{eq:GW_rel_chiral}
\end{align}
\end{proposition}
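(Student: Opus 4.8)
The plan is to obtain the chiral relation~\eqref{eq:GW_rel_chiral} from the already-established antihermitian GW relation~\eqref{eq:GW_rel_antihermite} by exploiting the $\gamma$-hermiticity $\gamma D \gamma = D^\dag$. The entire statement will then collapse to a single algebraic identity for the unitary operator $V$, so I would first isolate and verify that identity, then feed it back into~\eqref{eq:GW_rel_antihermite}.

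First I would record the two inputs. Writing $V = e^X$ with $\{\gamma, X\} = 0$ (valid for the classes possessing chiral symmetry in the gapless limit, as supplied by the proof of Proposition~\ref{prop:V_classifying_space}) and using $\gamma^2 = 1$, I get $\gamma X \gamma = -X$, hence $\gamma V \gamma = e^{-X} = V^{-1} = V^\dag$, the last equality by unitarity of $V$. With $D = \frac{1}{a}(1+V)$ this is exactly the $\gamma$-hermiticity $\gamma D \gamma = D^\dag$ quoted before the statement.

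Next I would substitute $D^\dag = \gamma D \gamma$ into~\eqref{eq:GW_rel_antihermite}, obtaining $D + \gamma D \gamma = a\, \gamma D \gamma\, D$, and multiply on the left by $\gamma$, using $\gamma^2 = 1$, to land precisely on $\gamma D + D \gamma = a D \gamma D$, which is the claim. As an independent cross-check I would also verify the relation head-on: expanding both sides of the target in powers of $V$, the identity is equivalent to $V \gamma V = \gamma$, and this follows immediately from $\gamma V = V^\dag \gamma$ (a rearrangement of $\gamma V \gamma = V^\dag$) together with $V V^\dag = 1$.

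Since each step is a one-line manipulation, there is no genuine computational obstacle here. The only point that requires care — and the precise place where the chiral hypothesis enters — is the justification of $\{\gamma, X\} = 0$, i.e.\ that for these symmetry classes the generator $X$ truly anticommutes with the mass matrix $\gamma$. This is exactly the structural input provided by the classifying-space analysis of Proposition~\ref{prop:V_classifying_space}, and once it is invoked everything else is automatic.
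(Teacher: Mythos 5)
Your proposal is correct and follows essentially the same route as the paper: establish the $\gamma$-hermiticity $\gamma D\gamma = D^\dag$ from $\{\gamma, X\}=0$ (equivalently $\gamma V\gamma = V^\dag$) and substitute it into the antihermitian relation~\eqref{eq:GW_rel_antihermite}. Your added cross-check that the claim reduces to $V\gamma V=\gamma$ is a harmless, correct verification of the same fact.
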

\noindent
This was shown originally for class A.
As discussed before, this is interpreted as a non-linear deformation of the chiral symmetry, which reproduces $\{\gamma, D\} = 0$ in the limit $a \to 0$.

From GW relation, we can discuss a non-linear deformation of chiral transformation.
We may rewrite the relation \eqref{eq:GW_rel_chiral} as follows, 
\begin{align}
    \gamma D + D \hat{\gamma} = 0
    \, , \qquad
    \hat{\gamma} = \gamma (1 - a D) = \gamma V
    \, .
\end{align}
Then, the Dirac Lagrangian~\eqref{eq:HvsD} is invariant under the following transformation,
\begin{align}
    \psi \longrightarrow \hat{\gamma} \psi 
    \, , \qquad 
    \bar{\psi} \longrightarrow \bar{\psi} \gamma
    \, ,
\end{align}
which, on the other hand, gives rise to a non-trivial contribution to the Jacobian providing the chiral anomaly~\cite{Luscher:1998pqa,Fujikawa:1998if,Suzuki:1998yz}.
We remark that this is not a unique way to write down the transformation:
In general, we may rewrite GW relation \eqref{eq:GW_rel_chiral} as $(1 - a b D) \gamma D + D\gamma(1 - a b' D) = 0$ where $b+b'=1$.

\subsubsection{$\mathcal{C}$ and $\mathcal{T}$ symmetries}\label{label:GW_rel_CT}

For the system with $\mathcal{C}$, $\mathcal{T}$ symmetry, we have $\mathcal{C}$, $\mathcal{T}$ analog of GW relation as follows.
\begin{theorem}\label{thm:CT-GW_relation}
For the class with $\mathcal{C}$, $\mathcal{T}$ symmetry in the gapless limit, the overlap Dirac operator obeys  $\mathcal{C}$ and $\mathcal{T}$ analog of Ginsparg--Wilson relation,
\begin{align}
    C D + D^{\text{T}} C = a D^{\text{T}} C D
    \, , \qquad 
    T D + D^* T = a D^* T D
    \, .    
    \label{eq:GW_rel_CT}
\end{align}
\end{theorem}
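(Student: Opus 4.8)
The plan is to reduce both relations in \eqref{eq:GW_rel_CT} to a purely algebraic ``twisted unitarity'' condition on the $V$-operator, exactly as the hermitian relation \eqref{eq:GW_rel_antihermite} reduces to ordinary unitarity $V^\dag V = 1$. Writing $D = \frac{1}{a}(1+V)$ and noting that each of the operations $\dag$, $\text{T}$, $*$ fixes the real constant piece, so that $D^\dag = \frac{1}{a}(1+V^\dag)$, $D^{\text{T}} = \frac{1}{a}(1+V^{\text{T}})$, and $D^* = \frac{1}{a}(1+V^*)$, I would substitute into a single generic relation
\begin{align}
    \Omega D + D^\sharp \Omega = a\, D^\sharp \Omega D
    \, .
\end{align}
Expanding both sides, the factors of $\frac{1}{a}$ on the left and the $a$ multiplying the quadratic right-hand side conspire so that the constant and linear-in-$V$ terms cancel identically, and the whole relation collapses to matching the quadratic term, namely $\Omega = V^\sharp \Omega V$. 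Applied with $(\Omega,\sharp) = (C,\text{T})$ and $(\Omega,\sharp) = (T,*)$, this shows that \eqref{eq:GW_rel_CT} is \emph{equivalent} to the two identities
\begin{align}
    V^{\text{T}} C V = C
    \, , \qquad
    V^* T V = T
    \, .
\end{align}
The same computation with $(\Omega,\sharp)=(1,\dag)$ reproduces $V^\dag V = 1$, so all three Ginsparg--Wilson relations descend from one algebraic source; this is the organizing observation of the proof.

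It then remains to verify the two twisted-unitarity identities. Here I would invoke the transformation properties of $V = \gamma\operatorname{sgn}(H)$ under $C$ and $T$, derived exactly as in Lemma~\ref{lemma:CTchi_DIII}: starting from $CHC^{-1}=-H^*$ and $THT^{-1}=+H^*$ of Definition~\ref{def:CT_Ham_action}, using $\operatorname{sgn}(H^*) = \operatorname{sgn}(H)^*$, and feeding in the (anti)commutation of the mass matrix $\gamma$ with $C$ and $T$. Combined with the unitarity $V^\dag = V^{-1}$ (equivalently $(V^*)^{-1} = V^{\text{T}}$), these rules convert directly into the sandwiched form. For the $\mathcal{C}$ case, for instance, a relation of the type $CVC^{-1}=V^*$ gives, upon taking the adjoint and using unitarity of both $C$ and $V$, the identity $CV^{-1}C^{-1} = (V^*)^\dag = V^{\text{T}}$, i.e.\ $V^{\text{T}}C = CV^{-1}$, whence $V^{\text{T}}CV = C$; the $\mathcal{T}$ case proceeds by the analogous manipulation to give $V^* T V = T$.

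The main obstacle is this second step: the bookkeeping of complex conjugation versus transpose at the level of $V$. One must track carefully how the reality and (anti)symmetry of the unitary matrices $C$ and $T$ (dictated by $\mathcal{C}^2=\pm 1$ and $\mathcal{T}^2=\pm 1$) interact with the conjugate $\operatorname{sgn}(H)^*$ produced by $K$ and with the reality of $\gamma$, so that precisely the operation $\text{T}$ is paired with $C$ and $*$ with $T$ in the sandwiched identities; getting this pairing right for each relevant $\mathcal{C}$/$\mathcal{T}$-symmetric class is the delicate point, whereas the reduction of the first paragraph is routine once the $V$-level conditions are in hand. Finally, I would remark that in the continuum limit $a\to 0$ the quadratic right-hand sides of \eqref{eq:GW_rel_CT} are suppressed, leaving the exact linear relations $CD + D^{\text{T}}C = 0$ and $TD + D^*T = 0$, consistent with the interpretation of these Ginsparg--Wilson relations as non-linear deformations of the underlying $\mathcal{C}$ and $\mathcal{T}$ symmetries.
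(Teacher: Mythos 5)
Your proof is correct and follows essentially the same route as the paper: both arguments rest on the transformation laws $C V C^{-1} = V^*$ and $T V T^{-1} = V^{\text{T}}$ (obtained from Definition~\ref{def:CT_Ham_action} via $V = e^{iH_V}$) together with the unitarity $V^\dagger = V^{-1}$, the paper substituting $D^\dagger = (C D C^{-1})^{\text{T}}$ into the hermitian Ginsparg--Wilson relation~\eqref{eq:GW_rel_antihermite} where you instead expand $D = \frac{1}{a}(1+V)$ directly and reduce each relation to the twisted unitarity $V^{\text{T}} C V = C$, $V^* T V = T$. The repackaging is equivalent (and has the minor virtue of not requiring $C^{\text{T}} = \pm C$ to clean up the transposes), so the substance of the two proofs is the same.
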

\begin{proof}
    We parametrize $V = e^{iH_V}$ with $H_V^\dag = H_V$.
    For the class with $\mathcal{C}$ symmetry in the gapless limit, we have $C H_V C^{-1} = - H_V^*$, which gives rise to $C V C^{-1} = V^{*}$. 
    Noticing $D^\dag = (C D C^{-1})^{\text{T}}$, we obtain GW relation with respect to $\mathcal{C}$ symmetry,
\begin{align}
    C D + D^{\text{T}} C = a D^{\text{T}} C D
    \, .
    \label{eq:GW_rel_C}
\end{align}
    For the class with $\mathcal{T}$ symmetry in the gapless limit, we instead have $T V T^{-1} = V^{\text{T}}$ and $D^\dag = (T D T^{-1})^*$, from which we obtain the corresponding GW relation,
\begin{align}
    T D + D^{*} T = a D^{*} T D
    \, .
    \label{eq:GW_rel_T}
\end{align}
\end{proof}
\noindent
They are again interpreted as a non-linear deformation of $\mathcal{C}$ and $\mathcal{T}$ symmetries of the gapless Dirac operator. 

Let us discuss the corresponding non-linear transformations.
We may rewrite GW relations \eqref{eq:GW_rel_CT} as follows,
\begin{align}
    CD + D^{\text{T}} \hat{C} = 0
    \, , \quad 
    TD + D^{*} \hat{T} = 0
    \, , \qquad 
    \hat{C} = CV \, , \ 
    \hat{T} = TV \, .
\end{align}
The corresponding non-linear $\mathcal{C}$ and $\mathcal{T}$ transformations are given by
\begin{align}
    \mathcal{C} \ : \ 
    \psi \longrightarrow \hat{C} \bar{\psi}^{\text{T}}
    \, , \quad 
    \bar{\psi} \longrightarrow \psi^{\text{T}} C^{-1}
    \, , \qquad 
    \mathcal{T} \ : \ 
    \psi \longrightarrow 
    \hat{T} \psi 
    \, , \quad 
    \bar{\psi} \longrightarrow \bar{\psi} T^{-1}    
    \, .
\end{align}
Hence, under these transformations, the fermion path integral measure behaves as
\begin{align}
    \mathrm{d}\psi \mathrm{d}\bar{\psi} \longrightarrow \left( \det V \right)^{-1} \mathrm{d}\psi \mathrm{d}\bar{\psi}
    \, .
\end{align}
This Jacobian factor is related to the anomalous behavior of Majorana(--Weyl) fermion (hence, $\mathcal{C}$ transformation)~\cite{Huet:1996pw,Narayanan:1996mr,Inagaki:2004ar,Suzuki:2004ht,Hayakawa:2006fd}, and of the $\mathcal{T}$-invariant system~\cite{Fukui:2009pc,Ringel:2012fm}.
We also remark that a similar discussion is applied for the parity anomaly~\cite{Bietenholz:2000ca}.
These arguments are consistent with that the mod-two bulk topological invariant is given by the sign of $\det V$ as shown in \eqref{eq:mod2index}.

\subsection{Index theorem}\label{sec:ind_thm}

It has been known that the overlap formalism provides a concise way to understand the index theorem.
As mentioned in Proposition~\ref{prop:ind_thm_Z}, the $\mathbb{Z}$-valued index coincides with the bulk topological invariant of the corresponding system.
We have the following result for the mod-two index.

\begin{theorem}\label{thm:Z2_index}
The mod-two index of overlap Dirac operator, $\nu = \operatorname{ind}(D) = \dim \operatorname{ker}(D)$, is given by
\begin{align}
    (-1)^{\nu} = \det V \, .
\end{align}
\end{theorem}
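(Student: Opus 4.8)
The plan is to identify the mod-two index with the multiplicity of the eigenvalue $-1$ of the unitary operator $V$, and then to use the reality structure of $V$ to show that every other eigenvalue cancels out in $\det V$.

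First I would note that, because $D = \frac{1}{a}(1 + V)$ with $a \neq 0$, one has $\operatorname{ker} D = \operatorname{ker}(1 + V)$, so that
\begin{align}
    \nu = \dim \operatorname{ker} D = n_-
    \, ,
\end{align}
where $n_-$ denotes the multiplicity of the eigenvalue $-1$ in $\operatorname{Spec}(V)$. Since $V$ is unitary, $\operatorname{Spec}(V) \subset \mathrm{U}(1)$, and the only possible real eigenvalues are $\pm 1$.

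Next I would invoke the reality structure established in Sec.~\ref{sec:bulk_ext}. For the classes carrying a $\mathbb{Z}_2$-invariant (class BDI with $V \in \mathrm{O}(n)$, class D with $V \in \mathrm{O}(2n)$), the operator $V$ is a genuine real orthogonal matrix, $V^{\text{T}} = V^{-1}$; more generally the $\mathcal{C}$-constraint $C V C^{-1} = V^*$ of Lemma~\ref{lemma:CTchi_DIII} shows that $V$ is similar to $V^*$, so that $\operatorname{Spec}(V) = \overline{\operatorname{Spec}(V)}$ with multiplicities. Consequently the eigenvalues $\lambda \neq \pm 1$ occur in complex-conjugate pairs $(\lambda, \bar\lambda)$ with $\lambda \bar\lambda = |\lambda|^2 = 1$.

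Finally I would compute the determinant as the product of eigenvalues. Denoting by $n_+$ the multiplicity of $+1$, each conjugate pair contributes $+1$, the eigenvalue $+1$ contributes $+1$, and each eigenvalue $-1$ contributes $-1$, whence
\begin{align}
    \det V = (+1)^{n_+} (-1)^{n_-} = (-1)^{n_-} = (-1)^{\nu}
    \, .
\end{align}
The main obstacle is to verify the conjugation-invariance of the spectrum uniformly across the relevant symmetry classes --- i.e.\ to confirm from the structural results of Sec.~\ref{sec:bulk_ext} that $V$ always carries a reality constraint forcing $\det V \in \{\pm 1\}$ and pairing all non-real eigenvalues --- and, relatedly, to check that $\dim \operatorname{ker} D = \dim \operatorname{coker} D$, so that the $\mathbb{Z}$-valued index vanishes and the mod-two quantity $\dim \operatorname{ker} D \bmod 2$ is the meaningful invariant.
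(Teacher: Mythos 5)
Your proof is correct, and at bottom it rests on the same fact as the paper's --- the spectral structure of a real orthogonal matrix --- but it packages that fact differently. The paper's route is to bring $V\in\mathrm{O}(n)$ to its real block-canonical form (rotation blocks plus $\pm 1$ entries) and then verify the claim block by block via the explicit $\mathrm{O}(2)$ computation of Lemma~\ref{lemma:O2_index}; your route diagonalizes $V$ over $\mathbb{C}$, identifies $\nu=\dim\ker(1+V)$ with the multiplicity $n_-$ of the eigenvalue $-1$, and uses the reality constraint (either $V$ real orthogonal, or $CVC^{-1}=V^*$ from Lemma~\ref{lemma:CTchi_DIII}) to pair all non-real eigenvalues $(\lambda,\bar\lambda)$ so that they contribute $|\lambda|^2=1$ to $\det V$, leaving $\det V=(-1)^{n_-}$. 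Your version is slightly more economical --- it dispenses with the $\mathrm{O}(2)$ case analysis, and it makes transparent why a rotation block with $\lambda=\pi$ contributes a kernel of even dimension (a conjugate pair degenerating at $-1$) --- while the paper's real canonical form has the advantage of staying entirely within the real structure in which the classes BDI and D are naturally formulated, and of extending verbatim to $V\in\mathrm{O}(2n)/\mathrm{U}(n)$. The two caveats you flag at the end are handled correctly in your own argument: the reality constraint needed is exactly what Sec.~\ref{sec:bulk_ext} establishes for the classes with a $\mathbb{Z}_2$ invariant, and the cokernel issue does not arise because the theorem \emph{defines} the mod-two index as $\dim\ker(D)$ rather than as a difference of kernel and cokernel dimensions.
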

We may have a non-trivial mod-two index when $\pi_d(R_p) = \mathbb{Z}_2$.
For $d = 0$, we have $\pi_0(R_1) = \pi_0(R_2) = \mathbb{Z}_2$ corresponding to class BDI and class D.
In fact, all the cases in $d > 0$ are reduced to these two classes via the dimensional reduction.
Hence, we focus on the case $d = 0$ to prove this Theorem.
We first consider a simplified situation.
\begin{lemma}\label{lemma:O2_index}
    Let $V \in \mathrm{O}(2)$ and $D = 1 + V$.
    Then, the mod-two index $\nu = \dim \operatorname{ker}(D)$ is given by
    \begin{align}
        (-1)^\nu = \det V
        \, .
    \end{align}
\end{lemma}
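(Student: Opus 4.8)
The plan is to identify $\ker D$ with the $(-1)$-eigenspace of $V$ and thereby reduce the claim to counting the multiplicity of $-1$ as an eigenvalue. Since $D = 1 + V$, a vector $v$ lies in $\ker D$ exactly when $V v = -v$; hence $\nu = \dim \ker(1+V)$ equals the multiplicity of the eigenvalue $-1$ of $V$. Because $V \in \mathrm{O}(2)$ acting on $\mathbb{R}^2$ is orthogonal, it is normal and hence diagonalizable, so this geometric multiplicity coincides with the algebraic one, and $\det V \in \{\pm 1\}$. The proof is then a case analysis over the two connected components of $\mathrm{O}(2)$.

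If $\det V = +1$, then $V \in \mathrm{SO}(2)$ is a rotation with eigenvalues $e^{\pm i\theta}$. The value $-1$ occurs as an eigenvalue only when $\theta = \pi$, i.e.\ $V = -\id$, in which case the $(-1)$-eigenspace is all of $\mathbb{R}^2$ and $\nu = 2$; for every other $\theta$ one has $\nu = 0$. In both situations $(-1)^\nu = 1 = \det V$. If instead $\det V = -1$, then $V$ is a reflection whose characteristic polynomial is $\lambda^2 - 1$, so its eigenvalues are exactly $+1$ and $-1$; the $(-1)$-eigenspace is one-dimensional, $\nu = 1$, and $(-1)^\nu = -1 = \det V$.

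The only point requiring care is the degenerate rotation $V = -\id$ inside $\mathrm{SO}(2)$: here the kernel jumps from dimension zero to dimension two, so one must check that the parity of $\nu$ is preserved rather than the naive value $\nu = 0$. It is, since $(-1)^2 = 1$ still equals $\det V$; this is precisely the place where the lemma could plausibly fail, and it survives exactly because $\nu$ is forced to be even there.

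Finally I would record that both cases are instances of one mechanism, which also clarifies how the statement generalizes beyond $\mathrm{O}(2)$: for any orthogonal matrix the non-real eigenvalues come in conjugate pairs $e^{\pm i\theta}$, each pair contributing a factor $+1$ to the determinant, the eigenvalue $+1$ contributes $+1$, and each eigenvalue $-1$ contributes $-1$, whence $\det V = (-1)^{\mathrm{mult}(-1)} = (-1)^\nu$. This remark anticipates the reduction of the higher-dimensional $\mathbb{Z}_2$ cases to the present one.
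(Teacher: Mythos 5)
Your proof is correct and follows essentially the same route as the paper: a case analysis over the two components of $\mathrm{O}(2)$, finding $\nu=0$ or $2$ for rotations (the exceptional value $2$ at $\theta=\pi$) and $\nu=1$ for reflections. The paper writes out explicit matrices $V_\pm$ where you argue via eigenvalues, but the computation is the same, and your closing remark about conjugate pairs is precisely the mechanism the paper uses when it block-diagonalizes a general $V\in\mathrm{O}(n)$ to deduce Theorem~\ref{thm:Z2_index}.
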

\begin{proof}
    We consider the following two elements of $\mathrm{O}(2)$,
    \begin{align}
        V_+ =
    \begin{pmatrix}
        \cos \lambda & \sin \lambda \\ - \sin \lambda & \cos \lambda
    \end{pmatrix}
        \, , \qquad 
        V_- =
    \begin{pmatrix}
        \cos \lambda & \sin \lambda \\ \sin \lambda & - \cos \lambda
    \end{pmatrix}
        \, ,
    \end{align}
    with the determinant $\det V_\pm = \pm 1$.
    Then, we have
    \begin{align}
        \dim \operatorname{ker} (1 + V_+) =
        \begin{cases}
            0 & (\lambda \neq \pi) \\ 2 & (\lambda = \pi)
        \end{cases}
        \, , \quad 
        \dim \operatorname{ker} (1 + V_-) = 1
        \, .
    \end{align}
    Hence, the mod-two index of $D$ depends only on the sign of $\det V$.
\end{proof}
Then, we apply this result to prove Theorem~\ref{thm:Z2_index}.
\begin{proof}[Proof of Theorem~\ref{thm:Z2_index}]
    We consider the case $V \in \mathrm{O}(n)$ for the moment.
    Let $v_i \in \mathrm{O}(2)$ ($i = 1,\ldots,m$, $m \le n/2$) and $\sigma_j \in \{ \pm 1 \} = \mathrm{O}(1)$ ($j = 2m+1,\ldots,n$).
    Then, there exists an orthogonal matrix $O$, such that
    \begin{align}
        O V O^{\text{T}} = 
        \begin{pmatrix}
            v_1 &&&&& \\
            & \ddots &&& 0 & \\
            && v_m &&& \\
            &&&\sigma_{2m+1}&& \\
            & 0 &&& \ddots & \\
            &&&&& \sigma_{n}
        \end{pmatrix}
        \, .
    \end{align}
    Hence, in this basis, we can apply Lemma~\ref{lemma:O2_index} for each block to obtain the index of $D_{\text{ov}}$, $\nu = \operatorname{ind}(D) = \dim \operatorname{ker}(D)$ as follows,
    \begin{align}
        (-1)^{\nu} = \det V \, .
    \end{align}
    For the case $V \in \mathrm{O}(2n)/\mathrm{U}(n)$, we may apply the same argument as in the case $V \in \mathrm{O}(2n)$.
    Taking the inductive limit $n \to \infty$, the $V$-operator takes a value in the corresponding classifying space.
\end{proof}

\appendix

\section{Proof of Lemma~\ref{lemma:Dirac_det_classA}}\label{sec:Dirac_det_Proof}

We follow the approach discussed in \cite{Neuberger:1998wv,Kikukawa:1998pd}.
We first define the permutation matrix,
\begin{align}
    \mathbf{P} = 
    \left(
    \begin{array}{ccc}\\[-1.3em]
    0 & \multicolumn{2}{c}{\id_{k_1}} \\[.5em]
    \id_{k_2} & \multicolumn{2}{c}{0}
    \end{array}
    \right)
    \, , \qquad 
    \det \mathbf{P} = (-1)^{k_1 k_2}
    \, .
\end{align}
Defining
\begin{align}
    \Pi = 
    \operatorname{diag}
    \left(
        \mathbf{P}, \mathbf{P}, \ldots, \mathbf{P}
    \right)
    \, ,
    \qquad
    \det \Pi = (-1)^{N k_1 k_2}
    \, ,
\end{align}
we have
\begin{align}
     \det a D 
     & = \det (a D \Pi) \det \Pi^{-1}
     \nonumber \\
     & =
    \begin{vmatrix}
    C & A & 0 &&& 0 & Y \\
    B & -C^\dag & -\id_{k_2} & 0 && 0 & 0 \\
    0 & - \id_{k_1} & C & A & 0 & & \\
    & 0 & B & -C^\dag & -\id_{k_2} & \ddots & \\
    &&\ddots&\ddots&\ddots&\ddots& 0 \\
    0 & 0 &&& - \id_{k_1} & C & A \\
    X & 0 &&& 0 & B & -C^\dag
    \end{vmatrix}
    \times (-1)^{N k_1 k_2}
    \nonumber \\
     & =
    \begin{vmatrix}
     A & 0 &&&& Y & C \\
     -C^\dag & -\id_{k_2} & 0 &&& 0 & B\\
     - \id_{k_1} & C & A & 0 &&& \\
     0 & B & -C^\dag & -\id_{k_2} & 0 && \\
    &\ddots&\ddots&\ddots&\ddots&\ddots& \\
     &&& -\id_{k_1} & C & A & 0 \\
     &&& 0 & B & -C^\dag & X
    \end{vmatrix}
    \times (-1)^{(N-1) k_2^2}
    \, .
\end{align}
We then define the following matrices,
\begin{align}
    \alpha = 
    \begin{pmatrix}
        A & 0 \\ - C^\dag & - \id_{k_2}
    \end{pmatrix}
    \, , \quad 
    \tilde\alpha = 
    \begin{pmatrix}
        A & 0 \\ - C^\dag & X
    \end{pmatrix}    
    \, , \quad 
    \beta = 
    \begin{pmatrix}
        - \id_{k_1} & C \\ 0 & B
    \end{pmatrix}
    \, , \quad 
    \tilde\beta = 
    \begin{pmatrix}
        Y & C \\ 0 & B
    \end{pmatrix}
    \, ,
\end{align}
from which we deduce a simple form,
\begin{align}
    \det aD & = 
    \begin{vmatrix}
        \alpha & & & \tilde\beta \\
        \beta & \alpha & & \\
        &\ddots&\ddots& \\
        &&\beta& \tilde\alpha
    \end{vmatrix}
    \times (-1)^{(N-1) k_2^2}
    \, .
\end{align}
Noticing that $\det \alpha = (-1)^{k_2} \det A$, we evaluate the Dirac determinant as follows,
\begin{align}
    \det a D 
    & = (-1)^{(N-1) k_2^2} \det \alpha^N \det \left( \alpha^{-1} \tilde{\alpha} - (- \alpha^{-1} \beta)^N \beta^{-1} \tilde{\beta} \right)
    \nonumber \\
    & = (-1)^{n} \det A^N \det \left( 
    \begin{pmatrix}
        \id_{k_1} & 0 \\ 0 & -X
    \end{pmatrix}
    - T^{-N}
    \begin{pmatrix}
        -Y & 0 \\ 0 & \id_{k_2}
    \end{pmatrix}
    \right)
\end{align}
where $n = (N-1) k_2^2 + N k_2$ and the $T$-operator is defined in Definition~\ref{def:T-op}.
This is the expression shown in~\eqref{eq:Dirac_det_classA}.
\hfill 
$\square$

\bibliographystyle{JHEP}
\bibliography{ref}

\end{document}